\documentclass[manuscript]{acmart}
\AtBeginDocument{%
  }

\setcopyright{acmlicensed}
\copyrightyear{2018}
\acmYear{2018}
\acmDOI{XXXXXXX.XXXXXXX}

\usepackage{xcolor}
\usepackage{xspace}
\usepackage{cleveref}
\usepackage{centernot}
\usepackage[inline]{enumitem}

\usepackage{thm-restate}

\usepackage{pgfplots}
\usepgfplotslibrary{dateplot}

\usepackage{tikz}
\usetikzlibrary{decorations.pathreplacing,calc,shapes,positioning}

\definecolor{kirby}{RGB}{215,72,148}
\definecolor{kirby-light}{RGB}{242,198,221}
\definecolor{kirby-blue}{RGB}{68,102,191}
\definecolor{pzgreen}{rgb}{0,0.50,0.16}

\usepackage{adjustbox}
\usepackage[backgroundcolor=lightgray]{todonotes}

\newtheorem{remark}{Remark}

\setcopyright{none}

\usepackage{amsmath}

\usepackage{booktabs}   
\usepackage{subcaption} 
                        
\usepackage{listings}
\usepackage{parcolumns}
\usepackage[utf8]{inputenc}

\usepackage{mathpartir}
\newcommand{\Coloneqq}{\mathrel{\mathop{::}}=}
\newcommand{\metaDef}{\mathrel{\mathop:}=}

\newcommand{\seq}{\ ; \ }

\newcommand{\disj}{\!\mathrel{\disjoint}\!}
\usepackage[font=small,skip=0pt]{caption}
\usepackage{tikz-cd}
\usepackage{graphicx}
\makeatletter
\let\@authorsaddresses\@empty
\makeatother
\makeatletter
\renewcommand\@formatdoi[1]{\ignorespaces}
\makeatother
\renewcommand\footnotetextcopyrightpermission[1]{}
\usepackage{chor}
\usepackage{symbols}

\def\apnof#1{\mathsf{apn}(#1)}

\def\bllabel{branch-selection label\xspace}
\def\bllabels{branch-selection labels\xspace}
\def\mbrel{relavance\xspace}
\def\mbGtoL{global to local\xspace}
\def\mbLtoG{local to global\xspace}

\def\defn#1{{\emph{#1}}}
\def\rclr#1{{\color{red}#1}}
\def\bclr#1{{\color{blue}#1}}

\def\gclr#1{{\color{pzgreen}#1}}

\def\quand{\quad\text{and}\quad}
\def\qomma{\;,\;}

\def\mydot{\quad .}
\def\resp#1{(resp. #1)\xspace}

\def\Glab{\mathbf{GLab}}
\def\Llab{\mathbf{LLab}}

\def\set#1{\{#1\}}

\def\setdef#1#2{\left\{ \begin{array}{l@{\;\vrule\;}l} #1 & #2 \end{array} \right\}}
\def\dom#1{\mathsf{dom}(#1)}
\def\tuple#1{\langle  #1  \rangle}
\def\pair#1#2{\tuple{#1 \;,\; #2}}

\def\relSymb{\mathrel{\mathcal{R}}}

\newcommand{\mbb}[1][]{\overset{#1}{\sqsubseteq}}
\newcommand{\mbbn}[2][n]{\mbb[#2]_{\pid {#1}}}
\def\bisim{\sim}
\newcommand{\bisimn}[1][n]{\sim_{#1}}

\newcommand{\procDef}[3][]{#2_{#1} \triangleq #3_{#1}}
\newcommand{\genProcDef}[1][]{\procDef[#1]{\cX}{\cC}}

\newcommand{\genpProcDef}[1][]{\procDef[#1]{\pX}{\pP}}

\def\chorpoly#1{\bclr{\mathsf{#1}}}
\newcommand{\cC}[1][]{\chorpoly{C_{#1}}}
\newcommand{\cCp}[1][]{\chorpoly{C^{\scriptscriptstyle\prime}_{#1}}}
\newcommand{\cCpp}[1][]{\chorpoly{C^{\scriptscriptstyle\prime\prime}_{#1}}}
\newcommand{\cX}[1][]{\chorpoly{X_{#1}}}
\newcommand{\cI}[1][]{\chorpoly{{I_{#1}}}}
\def\cnil{\chorpoly{\mathbf{0}}}

\newcommand{\procC}{\mathcal{C}}
\def\Chor{\mathcal{C}}

\def\lleft{\mathcal{l}}
\def\lright{\mathcal{r}}

\def\exppoly#1{\textit{#1}}
\def\ee{\exppoly{e}}

\def\cifte#1#2#3#4{\chorpoly{\textsf{if}\ \pid{#1}.{\exppoly{#2}} \ \textsf{then}\ #3\ \textsf{else}\ #4}}

\def\cloc#1#2#3{\chorpoly{\pid{#1}.{#2} \metaDef \exppoly{#3} }}

\newcommand{\cccom}[5][\pid]{\chorpoly{\ccom[#1]{#2}{\exppoly{#3}}{#4}{#5}}}
\newcommand{\ccsel}[4][\pid]{\chorpoly{\csel[#1]{#2}{#3}{#4}}}

\def\cgenloc{\cloc pxe}
\def\cgencom{\chorpoly{\ccom p\ee qx}}
\def\cgensel{\chorpoly{\gensel}}
\def\cgenifte{\cifte pe{\cC_1}{\cC_2}}
\def\cgeniftep{\cifte pe{\cC'_1}{\cC'_2}}

\def\pnof#1{\pn(#1)}

\def\Pset{\mathbf{PN}}
\def\Varset{\mathbf{V}}
\def\Valset{\mathbf{Values}}
\def\Eset{\mathbf{E}}

\def\CXset{\mathbf{CV}}
\def\PXset{\mathbf{PV}}
\def\ChorSet{\mathbf{Chor}}
\def\NetSet{\mathbf{Net}}
\def\Lset{\mathbf{L}}
\def\ELLset{\mathbf{\mathcal{L}}}
\def\lleft{\lbl l}
\def\lright{\lbl r}

\def\tradsem#1{\xrightarrow[]{#1}}
\newcommand{\localsem}[2][r]{\xrightarrow[]{#2}_{\pid{#1}}}

\newcommand{\notlocalsem}[2][r]{\not\!\xrightarrow[]{#2}_{\pid{#1}}}

\newcommand{\aggsem}[1]{\xrightarrow{#1}_{\mathcal{A}}}

\def\statefulsem#1{\xrightarrow[]{#1}}

\newcommand{\TRassign}[3][r]{\pid{#1}.#2 \metaDef {\exppoly{#3}}}
\newcommand{\TRcom}[4][e]{\pid{#2}.{\exppoly #1} \tto \pid{#3}.#4}
\newcommand{\TRsel}[3][\albl]{\csel{\pid{#2}}{\pid{#3}}{#1}}
\newcommand{\TRthen}[2][p]{\tlthen{\exppoly #2} @ \pid{#1}}
\newcommand{\TRelse}[2][p]{\tlelse{\exppoly #2} @ \pid{#1}}

\newcommand{\PRassign}[3][r]{\pid{\rclr{#1}}.#2 \metaDef {\exppoly{#3}}}
\newcommand{\PRsend}[3][e]{\pid{\rclr{!#2}}.{\exppoly{#1}} \tto \pid{#3}}
\newcommand{\PRrecv}[3][x]{\pid{{#2}} \tto \pid{\rclr{?#3}}.#1}
\newcommand{\PRlsend}[3][\albl]{\csel{\rclr{\pid{!#2}}}{\pid{#3}}{#1}}
\newcommand{\PRlrecv}[3][\albl]{\csel{\pid{#2}}{\pid{\rclr{?#3}}}{#1}}
\newcommand{\PRthen}[2][r]{\TRthen[\rclr{\pid{#1}}]{#2}}
\newcommand{\PRelse}[2][r]{\TRelse[\rclr{\pid{#1}}]{#2}}

\newcommand{\delaypredictl}[1][r]{{\small \scriptsize DELAY-PREDICT-LEFT}@$\pid{#1}$\xspace}
\newcommand{\delaypredictr}[1][r]{{\small \scriptsize DELAY-PREDICT-RIGHT}@$\pid{#1}$\xspace}
\newcommand{\delaycond}[1][r]{{\small \scriptsize DELAY-COND}@$\pid{#1}$\xspace}
\newcommand{\csend}[1][r]{{\small \scriptsize SEND}@$\pid{#1}$\xspace}
\newcommand{\crec}[1][r]{{\small \scriptsize REC}@$\pid{#1}$\xspace}
\newcommand{\csendl}[1][r]{{\small \scriptsize SENDL}@$\pid{#1}$\xspace}
\newcommand{\crecl}[1][r]{{\small \scriptsize RECL}@$\pid{#1}$\xspace}
\newcommand{\cdelay}[1][r]{{\small \scriptsize DELAY}@$\pid{#1}$\xspace}
\newcommand{\ccall}[1][r]{{\small \scriptsize CALL}@$\pid{#1}$\xspace}
\newcommand{\cthen}[1][r]{{\small \scriptsize COND-THEN}@$\pid{#1}$\xspace}
\newcommand{\celse}[1][r]{{\small \scriptsize COND-ELSE}@$\pid{#1}$\xspace}
\newcommand{\ccassign}[1][r]{{\small \scriptsize ASSIGN}@$\pid{#1}$\xspace}

\newcommand{\tassign}{{\small \scriptsize ASSIGN}\xspace}
\newcommand{\tcom}{{\small \scriptsize COM}\xspace}
\newcommand{\tsel}{{\small \scriptsize SEL}\xspace}
\newcommand{\tdelay}{{\small \scriptsize DELAY}\xspace}
\newcommand{\tdelaycond}{{\small \scriptsize DELAY-COND}\xspace}

\newcommand{\tcall}{{\small \scriptsize CALL}\xspace}
\newcommand{\tthen}{{\small \scriptsize COND-THEN}\xspace}

\newcommand{\parleft}{{\small \scriptsize PAR-LEFT}\xspace}
\newcommand{\parright}{{\small \scriptsize PAR-RIGHT}\xspace}

\definecolor{pzbrickred}{rgb}{0.5,0.0,0.10}

\def\netpoly#1{{\color{pzbrickred}\mathsf{#1}}}
\newcommand{\nN}[1][]{\netpoly{N_{#1}}}
\newcommand{\nNof}[1]{\procpoly{N(#1)}}
\newcommand{\nM}[1][]{\netpoly{M_{#1}}}
\newcommand{\nNp}[1][]{\netpoly{N^{\scriptscriptstyle\prime}_{#1}}}
\newcommand{\nNpp}[1][]{\netpoly{N^{\scriptscriptstyle\prime\prime}_{#1}}}

\def\nnil{\netpoly{\mathbf{0}}}

\def\procpoly#1{\gclr{\mathsf{#1}}}
\newcommand\pP[1][]{\procpoly{P_{#1}}}
\newcommand\pPp[1][]{\procpoly{P_{#1}^{\scriptscriptstyle\prime}}}
\newcommand\pQ[1][]{\procpoly{Q_{#1}}}
\newcommand\pR[1][]{\procpoly{R_{#1}}}
\newcommand\pQp[1][]{\procpoly{Q_{#1}^{\scriptscriptstyle\prime}}}
\newcommand{\pX}[1][X]{\procpoly{#1}}

\def\pseq{\seq}
\newcommand{\pdo}[2][p]{\netpoly{\pid{#1}\!\left[ {\color{black} #2} \right]}}
\def\ppar{\;\netpoly{|}\;}
\def\pnil{\procpoly{\mathbf{0}}}
\def\psend#1#2{\procpoly{\pid{#1}!{\exppoly{#2}}}}
\def\precv#1#2{\procpoly{\pid{#1}?{#2}}}
\newcommand{\plab}[3][\albl]{\{#1 : #3\}_{#1 \in #2}}
\newcommand{\psendl}[2][\albl]{\procpoly{\pid{#2}\oplus #1}}
\newcommand{\precl}[2][\albl]{\procpoly{\pid{#2}\& #1}}

\def\pgenass{\procpoly{x \metaDef \exppoly{e}}}
\def\palpha{\procpoly{\alpha}}

\def\pifte#1#2#3{\procpoly{\textsf{if} \ \exppoly{#1} \ \textsf{then} \ #2\ \textsf{else}\ #3}}
\def\genpifte{\pifte e {\pP_1}{\pP_2}}

\newcommand{\ponelabel}[3][\pP]{\procpoly{{\pid  #2} \& \{#3 : #1\}}}
\newcommand{\plabel}[4][\pP]{\procpoly{{\pid  #2} \& \plab[#3]{#4}{#1_#3}}}
\newcommand{\palabel}[4][\pP]{\procpoly{{\pid  #2} \& \plab[#3]{#4}{#1}}}
\newcommand{\plabels}[2]{\procpoly{{\pid #1}\& \set{#2}}}
\def\genplabel{\plabel{p}{\albl}{\ELLset}}

\newcommand{\proctr}[1]{\xrightarrow{#1}}
\newcommand{\netsem}[1]{\xrightarrow{#1}}
\newcommand{\conetsem}[1]{\xleftarrow{#1}}
\newcommand{\notnetsem}[1]{\not\!\xrightarrow{#1}}
\newcommand{\supp}{\mathsf{supp}}

\newcommand{\procP}{\mathcal{P}}

\newcommand{\proj}[2][]{%
  \if\relax\detokenize{#1}\relax%
    \netpoly{\left\llbracket{\color{black}#2}\right\rrbracket}
  \else%
    \procpoly{\left\llbracket{\color{black}#2}\right\rrbracket_{\color{black}\pid{#1}}}%
  \fi%
}

\newcommand{\creds}{\exppoly{creds}}
\newcommand{\valid}{\exppoly{valid}}
\newcommand{\token}{\exppoly{token}}
\newcommand{\okval}{\exppoly{ok}}
\newcommand{\koval}{\exppoly{fail}}
\newcommand{\resvar}{{res}}
\newcommand{\logger}{\pid{log}}

\def\storepoly#1{{\color{magenta}#1}}
\def\lupdate#1#2{\storepoly{[#1\mapsto #2]}}
\def\update#1#2#3{\storepoly{[#1.#2 \mapsto #3]}}
\newcommand{\sstore}[1][]{\storepoly{\sigma_{#1}}}
\newcommand{\sStore}[1][]{\storepoly{\Sigma_{#1}}}
\newcommand{\sStorep}[1][]{\storepoly{\Sigma'_{#1}}}
\def\evalin#1#2#3{#1 \vdash {\exppoly{#2}} \downarrow #3}

\usepackage[normalem]{ulem}

\begin{document}

\title{Choreographic Programming: a Semantic Approach}

\author{Matteo Acclavio}
\author{Giulia Manara}
\author{Fabrizio Montesi}
\author{Xueying Qin}
\email{webmaster@marysville-ohio.com}
\affiliation{%
  \institution{\\FORM, University of Southern Denmark}
  \city{Odense}
  \country{Denmark}
}

\renewcommand{\shortauthors}{Acclavio et al.}

\begin{abstract}
    The Endpoint Projection (EPP) theorem is a cornerstone of choreographic programming. 
    It states that every choreography can be projected to a network of processes that correctly implements it. 
    Proving EPP is notoriously difficult, and existing proofs are complex and non-modular because of the mismatch between the global view of choreographies and the local view of processes.
    
    In this article, we show how to reconcile this mismatch by designing a new semantics for choreographies that is built on the local view of processes, as well as a new preorder relation between choreographies and networks that extends bisimulation to deal with the propagation of knowledge of choice among distributed processes.
    As a result, we can give a modular proof of EPP, which is conceptually simpler than existing ones and also provides better insights on the theory of choreographic programming.
\end{abstract}

\begin{CCSXML}
<ccs2012>
 <concept>
  <concept_id>00000000.0000000.0000000</concept_id>
  <concept_desc>Do Not Use This Code, Generate the Correct Terms for Your Paper</concept_desc>
  <concept_significance>500</concept_significance>
 </concept>
 <concept>
  <concept_id>00000000.00000000.00000000</concept_id>
  <concept_desc>Do Not Use This Code, Generate the Correct Terms for Your Paper</concept_desc>
  <concept_significance>300</concept_significance>
 </concept>
 <concept>
  <concept_id>00000000.00000000.00000000</concept_id>
  <concept_desc>Do Not Use This Code, Generate the Correct Terms for Your Paper</concept_desc>
  <concept_significance>100</concept_significance>
 </concept>
 <concept>
  <concept_id>00000000.00000000.00000000</concept_id>
  <concept_desc>Do Not Use This Code, Generate the Correct Terms for Your Paper</concept_desc>
  <concept_significance>100</concept_significance>
 </concept>
</ccs2012>
\end{CCSXML}

\ccsdesc[500]{Do Not Use This Code~Generate the Correct Terms for Your Paper}
\ccsdesc[300]{Do Not Use This Code~Generate the Correct Terms for Your Paper}
\ccsdesc{Do Not Use This Code~Generate the Correct Terms for Your Paper}
\ccsdesc[100]{Do Not Use This Code~Generate the Correct Terms for Your Paper}

\received{20 February 2007}
\received[revised]{12 March 2009}
\received[accepted]{5 June 2009}

\maketitle

\section{Introduction}\label{sec:intro}
\newcommand{\client}{\pid{c}}
\newcommand{\service}{\pid{s}}
\newcommand{\cas}{\pid{cas}}

\paragraph{Background and motivation}
Choreographic Programming (CP) allows for programming concurrent and distributed systems from a global viewpoint \cite{montesi:phd,montesi:book}.
It works by providing a simple language for expressing the expected system behaviour that the programmer wants, and then shifting the burden of achieving a correct distributed implementation on a compilation procedure known as endpoint projection (EPP).
Proving EPP correct is typically done by establishing mathematical models of choreographies and networks of processes, and then demonstrating that EPP returns a network of process terms (or programs) that is semantically equivalent to the input choreography \cite{CM13,M13:phd,montesi:book,PGSN22,SHC25,PQM25,acc:man:mon:ESOP25}.
These proofs are notoriously difficult, because of a mismatch between the global view of choreographies and the local view of each process: the former has more information about causality.
The aim of this article is to develop an elegant solution to this dichotomy.

\paragraph{Challenge}
To understand the challenge, consider the next simple choreography ($\cC[fwd]$), where a process $\pid p$ communicates a number $n$ to a process $\pid q$, which then forwards this number to another process $\pid r$.
\[
\cC[fwd] = \chorpoly{\ccom pn qx}; \chorpoly{\ccom qx ry}
\]
The network returned by EPP for $\cC$, written $\proj{\cC}$, is the parallel composition of the three processes described in the choreography ($\pid p$, $\pid q$, and $\pid r$), each equipped with a program that executes the appropriate local actions: $\pid p$ sends $n$ to $\pid q$; $\pid q$ receives a value on its variable $x$ from $\pid p$, and then sends $x$ to $\pid r$; and $\pid r$ receives a value in its variable $y$ from $\pid q$.
\[
\proj{\cC[fwd]} = 
\pdo{\psend qn} \ppar \pdo[q]{\precv px; \psend rx} \ppar \pdo[r]{\precv qy}
\]
The definition of EPP is relatively simple and modular: the code for each process is compiled by recursively visiting the structure of the choreography, producing an action whenever the process is mentioned and skipping the choreographic instruction at the head otherwise.
The previous EPP is obtained by composing in parallel such `process projections' ($\proj[p]{\cC[fwd]}$, $\proj[q]{\cC[fwd]}$, and $\proj[r]{\cC[fwd]}$), as given next.
\begin{align*}
\proj[p]{\cC[fwd]} &= \psend qn &
\proj[q]{\cC[fwd]} &= \precv px; \psend rx &
\proj[r]{\cC[fwd]} &= \precv qy
\end{align*}
One would wish to prove the correctness of EPP compositionally, that is, by establishing first the correctness of each process projection.
Frustratingly, proofs in the literature do not proceed in this way.
The reason is that the semantics of a projected process does not necessarily correspond cleanly to the semantics of the choreography, since the choreography may contain more information.
For example, $\cC[fwd]$ can only perform the communication from $\pid p$ to $\pid q$, and only then can the second communication be executed.
Whereas $\proj[r]{\cC[fwd]}$ is immediately ready to perform the receive action to implement the second communication.
In other words, $\cC[fwd]$ sees that the receive at $\pid r$ depends on the completion of the first communication from $\pid p$ to $\pid q$, while $\proj[r]{\cC[fwd]}$ does not.
To solve this problem in the correctness proof of EPP, the standard solution is to reason about the entire projected network directly, falling short of a compositional approach.

Complexity increases when choreographies include choices: when a choreography chooses between the two branches of a choice, the processes whose behaviours depend on this choice must acquire knowledge of this choice through communication.
To understand this aspect, we borrow an example from \cite{montesi:book}, which we expand on later in \Cref{ex:chor}: in a single sign-on scenario, a client ($\pid c$) can gain access to a service ($\pid s$) by getting its credentials checked by a third-party central authentication service ($\cas$).
The $\cas$ decides whether $\pid s$ will communicate a session token to $\pid c$ or not.
Technically, the $\cas$ makes an internal choice, and then later communicates its decision to $\pid c$ and $\pid s$, so that they can know whether they should exchange a token or not.
At the choreographic level, it is easy to see that the `destiny' of $\pid c$ and $\pid s$ is decided as soon as $\cas$ makes its internal choice. But in the network produced by EPP, $\pid c$ and $\pid s$ remain available to both branches until they receive a communication.
This semantic discrepancy in how \emph{knowledge of choice} is propagated in choreographies (immediately) and networks (step by step) makes it challenging to relate choreographies to their EPP at runtime.
Previous work deals with this issue by developing syntactic relations that allow projected code to include `dead branches' at processes that will never actually be used \cite{CHY12,CM13,montesi:book}.
Thus, in proofs about EPP, even cases that should be simple are instead required to deal with entire equivalence classes (up to these relations).

\paragraph{This article}
In this article, we present a modular approach to the correctness proof of EPP.
Key to our method is the development of a new `local' semantics for choreographies. This semantics is parametrised on a process name, and allows for studying the local behaviour of a single process that participates in a choreography.
For example, under the local semantics for process $\pid r$, the previous choreography $\cC$ can immediately perform the receive action at $\pid r$ for the second communication.

Our local semantics has two important features.
\begin{enumerate}
    \item 
    The traditional global semantics of choreographies can be modularly reconstructed by aggregating the local semantics of all participating processes (\Cref{thm:agg-trad}).
    This result formally bridges reasoning about choreographies to the standard way of reasoning about networks.
    \item
    We can establish a new behavioural relation between the local semantics of a process $\pid p$ in a choreography $\cC$ and the semantics of a network $\nN$, written $\cC \mbbn[]{\pid p} \nN$, which guarantees that the implementation of $\pid p$ in the network complies with the choreography (\Cref{lem:mb_net}.\ref{lem:mb_net3}).
    This relation extends bisimulation to deal with the propagation of knowledge of choice among distributed processes, circumventing the need for introducing any syntactic relation for projections.
\end{enumerate}
We combine these features to devise a compositional proof of the correctness of EPP.
First, we establish that our relation supports modular reasoning: if a choreography is related to a network for all processes -- $\cC \mbbn[]{\pid p} \nN$ for all $\pid p$ -- then the choreography is (strongly) bisimilar to the network.
Second, we prove that process projection (the projection of a single process) is always related to its source choreography: it always holds that $\cC \mbbn[]{\pid p} \pdo{\proj[p]{\cC}}$ (\Cref{thm:cor_mbstrat_p_proj}).
Put together, these results imply the standard result that EPP guarantees bisimilarity between choreographies and the networks projected from them (\Cref{thm:epp}).

\subsection{Structure of the paper}
In \Cref{sec:chor}, we introduce the choreographic language together with its standard operational semantics, 
which we call \emph{traditional semantics}. We also introduce networks and their semantics, define endpoint projection, 
and the notion of bisimulation between choreographies and networks.
In \Cref{sec:newsem}, we define a new local semantics for choreographies. We show that the aggregate semamtics that naturally arises from the local semantics
coincides with the traditional semantics. 
In \Cref{sec:EPPnew}, we introduce a new behavioural relation between choreographies and networks, called \emph{branching preorder}, 
and establish its main properties. In particular, we prove that branching preorder implies bisimilarity (\Cref{thm:main}).
Finally, in \Cref{sec:epp_new}, we show that every choreography is in branching preorder with its projection. 
Combined with \Cref{thm:main}, this yields a new proof of the endpoint projection theorem.
\Cref{sec:related,sec:conclusion} discuss related work and conclusions.

\section{Background}\label{sec:chor}

In this section, we recall the syntax and semantics of choreographies we adopt in this paper,
we then recall the network calculus we use to model processes, and we conclude by recalling the traditional EPP theorem for choreographies.

\subsection{Choreographies: Syntax and Semantics}\label{subsec:syntax}
The set \defn{choreographies} $\ChorSet$ contains all terms generated from some pairwise disjoint sets of \defn{process names} ($\Pset$), \defn{variables} ($\Varset$), \defn{expressions} ($\Eset$), \defn{choreography variables} ($\CXset$), and \defn{\bllabels} ($\Lset=\set{\lleft,\lright}$, equipped with an involution $\overline{\cdot}$ such that $\overline{\lleft} = \lright$ and $\overline{\lright} = \lleft$)
using the following grammar:
\begin{equation}\label{eq:chor-syntax}
    \begin{array}{r@{\quad}r@{\;}c@{\;}l@{\quad}l}
        \text{\defn{Choreographies}}& 
            \cC,\cC_1,\cC_2 & \Coloneqq &
            \cnil \mid \cI \seq \cC \mid \cifte pe{\cC[1]}{\cC[2]} \mid \cX
            & 
        \\[5pt]
        \text{\defn{Instructions}}& 
            \cI &\Coloneqq & 
            \cgenloc \mid \cgencom \mid \cgensel
            & 
        \\[5pt]
        \text{where}&&& 
        \pid p, \pid q \in \Pset \text{ with } \pid p \neq \pid q
        \qomma
        x\in\Varset 
        \qomma
        \ee\in \Eset
        \qomma
        \cX \in \CXset 
        \qomma
        \albl\in\Lset 
    \end{array}
\end{equation}
That is, a choreography ($\cC$) can be:
\begin{itemize}
    \item a \defn{terminated choreography} $\cnil$;
    \item an \defn{instruction} ($\cI$) followed by a choreography $\cC$ (written $\cI \seq \cC$), where $\cI$ is the first instruction to be executed and $\cC$ is the continuation of the choreography after executing $\cI$. Each instruction can be one of the following:
    \begin{itemize}
        \item a \defn{local assignment} $\cgenloc$, where the process $\pid p$ evaluates the expression $\ee$ and stores the result in its local variable $x$;
        \item a \defn{communication} $\cgencom$, where the process $\pid p$ sends result of evaluating the expression $\ee$ to the process $\pid q$, which stores it in its local variable $x$; or
        \item a \defn{selection} $\cgensel$, where the process $\pid p$ communicates a selection label $\albl$ to the process $\pid q$.
    \end{itemize}
    \item a \defn{conditional statement} $\cifte pe{\cC[1]}{\cC[2]}$ where the process $\pid p$ decides the continuation $\cC[1]$ or $\cC[2]$ based on the evaluation of the expression $\ee$, or 
    \item a \defn{choreographic variable} ($\cX \in \CXset$).
\end{itemize} 

A \defn{(choreographic) procedure definition} is an equation $\genProcDef$ where $\cX \in \CXset$ and $\cC$ is a choreography, in which case we say that $\cX$ is \defn{defined} by $\genProcDef$.
A set of procedure definitions $\procC = \set{\procDef[1]{\cX}{\cC} \mid i \in I}$ is \defn{finitely closed} if it contains a finite number of procedure definitions and every choreography variable occurring in any $\cC_i$ is defined by a unique procedure definition in $\procC$.
\begin{remark}\label{rem:procChor}
    For simplicity, as in standard presentations of process calculi \cite{SW01,S11}, in the remainder we do not explicitly parametrise our choreographies on the set of procedure definitions. Instead, we assume that it is always finitely closed for the choreography under consideration and that it is guarded, as defined in \Cref{def:cset-guarded}.   
\end{remark}

\begin{example}\label{ex:chor}
    The following choreographies $\cC$ models a simple authentication protocol between a client $\pid c$ and a central authentication service $\cas$.
    The client  $\pid c$ sends its credentials $\creds$ to a central authentication service $\cas$ to gain access to a service $\pid s$.
    Then, $\cas$ can decide whether to validate the credentials or not, and accordingly inform $\pid c$ and $\pid s$ about the decision.
    If the credentials are valid, $\pid s$ will send a session token to $\pid c$, and $\cas$ will send a message $\okval$ to a logger $\logger$; 
    otherwise, $\cas$ will simply send a $\koval$ message to $\logger$, and no other communication will happen.
    \begin{equation}\label{eq:chor-ex}
        \begin{array}{rcl}
        \cC & = &
        \cccom c{\creds}{\cas}{x}
        \seq
        \cifte{\cas}{\valid(x)}{\cC[ok]}{\cC[ko]}
        \\
        \mbox{where}&
        \\
        \cC[ok] & = &
        \ccsel{\cas}{c}{l} \seq \ccsel{\cas}{s}{l} \seq \cccom{s}{\token()}{c}{t} \seq \cccom{\cas}{\okval}{\logger}{\resvar} \seq \cnil
        \\
        \cC[ko] & = &
        \ccsel{\cas}{c}{r} \seq \ccsel{\cas}{s}{r} \seq \cccom{\cas}{\koval}{\logger}{\resvar} \seq \cnil.
        \\
        \end{array}
    \end{equation}
    Notice that, while both $\pid c$ and $\pid s$ are informed about the decision of $\cas$ through the selection actions, $\logger$ is only involved in the protocol if one of the two branches is taken, and no knowledge of choice is needed at $\logger$ to correctly implement the choreography.
\end{example}

Given a choreography $\cC$, we write $\pnof{\cC}$ for the set of process names mentioned in $\cC$.
For all terms but choreography variables, $\pn$ is recursively defined as follows
\begin{align*}
    \pnof{\cnil} \defeq \emptyset
\qquad
    \pnof{\cI \seq \cC} = \pnof{\cI} \cup \pnof{\cC}
\qquad
    \pnof{\cifte pe{\cC[1]}{\cC[2]}} = \set{\pid p} \cup \pnof{\cC[1]} \cup \pnof{\cC[2]}
\end{align*}
where $\pnof{\cgenloc} = \set{\pid{p}}$ and $ \pnof{\cgencom} =\pnof{\cgensel} = \set{\pid{p}, \pid{q}}$.
For the case of choreography variables, we assume that for each $\cX$ in the set of choreographic procedures under consideration, the definition of $\pnof\cX$ is also given as an axiom.
In practice, for finitely closed sets of procedure definitions, $\pnof\cX$ can always be computed for all $\cX$ as shown in \cite{CMP23}.

\begin{definition}\label{def:cset-guarded}
    The set of \defn{active process names} $\apnof\cC$ of a choreography $\cC$ is defined as the set of process names that appear in $\cC$ before any procedure call.
    That is, 
    \begin{align*}
        \apnof{\cnil} = \apnof{\cX}=\emptyset 
        \qquad&\qquad
        \apnof{\cI \seq \cC} = \pn(\cI) \cup \apnof{\cC}
    \\
        \apnof{\cifte pe{\cC[1]}{\cC[2]}} &= \set{\pid p} \cup \apnof{\cC[1]} \cup \apnof{\cC[2]}
    \end{align*}
    We say that a set of procedure definitions is \defn{guarded} if $\apnof\cC \neq \emptyset$ for all procedure definitions $\genProcDef$ in the set.
\end{definition}

\begin{example}
To illustrates the difference between process names
and active process names, consider the
choreography
$ \cC = \cccom p m q x \seq \cX$ with procedure definition $\procDef{\cX}{\cC[1]}$
where $\cC[1] = \cccom p n r y \seq \cX$.
The choreography $\cC$ first lets $\pid p$ send a message to $\pid q$,
and then enters a recursive procedure in which $\pid p$ repeatedly
communicates with $\pid r$.

The set of process names of $\cC$ is $\pnof{\cC} = \set{\pid p,\pid q,\pid r}$ while the set of active process names is $\apnof{\cC} = \set{\pid p,\pid q}$.   
In fact, the only participants that are active at the top level in $\cC$ are those involved in the initial communication, while $\pid r$ occurs in the body of the recursive procedure.
\end{example}

We now define a \emph{symbolic} semantics for choreographies, that is, a semantics that does not evaluate local expressions at processes, following the presentation from \cite{acc:mon:per:OPDL}. This strengthens our development, because we will be able to show that EPP returns network that have exactly the same expressions at processes (local code is preserved). We later show how to modularly add expression evaluation to our semantics in \cref{sec:stores}, in a way that lifts all our results to that setting for free.

In \Cref{fig:trad-semantics} we present the (global) operational semantics for choreographies, 
where transitions are labeled by \defn{(global) labels} form the following set:
\begin{equation}\label{eq:Glab}
    \Glab =
    \setdef{
        \begin{array}{l@{\;,\;}l@{\;,\;}l}
            \TRcom pqx      & \TRsel pq &
        \\
            \TRassign[p] x{e} & \TRelse e &\TRthen e 
        \end{array}
    }{
        \;
        \pid p , \pid q \in \Pset, 
        \ee \in \Eset, 
        \albl \in \Lset,
        x \in \Varset
    }
\end{equation}
and where the function $\pnof\cdot$ is extended to labels as follows:
\[
    \pnof{\TRcom pqx} = \pnof{\TRsel pq} = \set{\pid{p} , \pid{q}}
\;\text{and}\;
    \pnof{\TRassign[p] xe} =\pnof{\TRthen e}= \pnof{\TRelse e} = \set{ \pid{p} } 
\]

\begin{remark}
    The semantics we define has two differences with previous presentations of global semantics for choreographies.
    The first is that we do not include in the semantics the runtime terms used to unfold procedure calls (see, e.g., \cite{montesi:book}), and instead directly unfold procedure calls in the semantics.
    The second is that, to keep the semantics in line with the one we use in the next sections, we do not take into account the state of the system, and instead only consider the \emph{symbolic semantics} of choreographies.
    This is a pedagogical choice, as in the next sections we will present our new semantics and proof strategy in the same setting to reduce the overhead of the presentation.
    However, in \Cref{sec:stores} we will extend back the semantics by taking into account the state of the system in a modular way over the symbolic semantics.
\end{remark}

\begin{figure}
    \def\myskip{\hskip8em}
    \centering
    $\begin{array}{c}
        \inferrule*[right=assign] {  }
        {\cgenloc \seq \cC  \tradsem{\TRassign[p]{x}{e}}  \cC }
    \myskip
        \inferrule*[right=call]{\genProcDef \\ \cC \tradsem{\mu} \cCp} 
        { \cX \tradsem{\mu}  \cCp }
    \\[15pt]
        \inferrule*[right=com] {  } 
        { \cgencom \seq \cC \tradsem{\TRcom pqx}  \cC }
    \myskip
        \inferrule*[right=sel] {  } 
        {\cgensel \seq \cC \tradsem{\TRsel pq} \cC}
    \\[15pt]
        \inferrule*[right=cond-then] { } 
        {\cgenifte \tradsem{\TRthen e}  \cC[1]  }
    \qquad
        \inferrule*[right=cond-else] { } 
        { \cgenifte\tradsem{\TRelse e}  \cC[2]  }
    \\[15pt]
        \inferrule*[right=delay] { \cC \tradsem{\mu} \cCp \\ \pn(\cI)\disj\pn(\mu)} 
        { \cI\seq \cC \tradsem{\mu}  \cI\seq \cCp}
    \qquad
        \inferrule*[right=delay-cond] { \cC[1] \tradsem{\mu}  \cCp[1] \\  \cC[2] \tradsem{\mu}  \cCp[2] \\ \pid{p}\notin \pn(\mu)}
        { \cgenifte \tradsem{\mu}  \cgeniftep } 
    \\[15pt]
    \end{array}$

    \caption{Traditional operational semantics for choreographies.}
    \label{fig:trad-semantics}
\end{figure}

\subsection{Processes and Networks: Syntax and Semantics}\label{subsec:Networks}
We define \defn{process terms} from the same sets of process names ($\Pset$), variables ($\Varset$), expressions ($\Eset$), and \bllabels ($\Lset=\set{\lleft,\lright}$) as choreographies, together with a set of \defn{process variables} ($\PXset$) using the following grammar.
\begin{equation}\label{eq:proc-syntax}
    \begin{array}{r@{\quad}r@{\;}c@{\;}l}
        \text{\defn{Process terms}}& \pP,\pP[1],\pP[2] & \Coloneqq &
            \pnil               \mid 
            \palpha \pseq \pP    \mid
            \genplabel          \mid
            \genpifte           \mid 
            \pX 
        \\[5pt]
        \text{\defn{Actions}}& \palpha &\Coloneqq & 
            \;\pgenass            \;\mid 
            \;\psend{q}{e}        \;\mid 
            \;\precv{p}{x}        \;\mid 
            \;\psendl[\albl]{q}
        \\[5pt]
        \text{where}&&& 
        \pid p, \pid q \in \Pset \text{ with } \pid p \neq \pid q
        \qomma
        x\in\Varset 
        \qomma
        \ee\in \Eset
        \qomma
        \albl\in\Lset 
        \text{\;, and\;}
        \emptyset \neq \ELLset \subseteq \Lset
    \end{array}
\end{equation}

A \defn{(process) procedure definition} is an equation $\genpProcDef$ where $\pX \in \PXset$ and $\pP$ a process term, in which case we say that $\pX$ is \defn{defined} by $\genpProcDef$.
As for sets of choreographic procedures definitions (see \Cref{rem:procChor}), in the following we assume that every time we give a process term $\pP$, we also assume a \defn{finitely closed} set $\procP$ of procedure definitions to be given, such that every process variable occurring in $\pP$ occurs in a procedure definition in $\procP$, as well as every process variable occurring in the definition of $\pnof{\pX}$ for all $\pX$ in $\procP$.

We extend the function $\pn$ to actions and process terms as follows:
\begin{equation}
    \begin{array}{c}
        \pnof{\pnil}   = \emptyset 
    \qquad
        \pnof{\palpha \pseq \pP} = \pnof{\palpha} \cup \pnof{\pP}
    \qquad
        \pnof{\genplabel} = \set{\pid p} \cup \bigcup_{\albl \in \Lset}\pnof{\pP}
    \\
        \pnof{\genpifte} = \pnof{\pP[1]} \cup \pnof{\pP[2]}
    \qquad
        \pnof{\pX}     = \pnof{\pP} \quad\text{if } \pX \metaDef P \in \procP
    \end{array}
\end{equation}

The set of \defn{networks} $\NetSet$ contains all functions from process names to process terms with finite support (only finitely many processes are mapped to terms that are not $\pnil$), such that no process name $\pid p$ is mapped to a process term mentioning $\pid p$ itself.
A network is \defn{atomic} if the set of process names is a singleton.
We denote by $\nnil$ any \defn{terminated network} mapping all process names to the process $\pnil$.
If $\nN[1]$ and $\nN[2]$ are network with disjoint supports (i.e., $\supp{\nN[1]} \cap \supp{\nN[2]} = \emptyset$), their \defn{parallel composition}, denoted $\nN[1] \ppar \nN[2]$, is the network defined as the union of the two functions. Parallel composition is a commutative monoid for disjoint networks \cite{montesi:book}.
The definition of $\pn$ is further extended to networks by letting $\pnof\nN=\bigcup_{\pid p \in \dom{\nN}} \pnof{\nN(\pid  p)}$.
In the reminder, we assume that every time we give a network $\nN$, we also assume a finitely closed set $\procP$ of procedure definitions to be given, such that every process variable occurring in $\nN$ occurs in a procedure definition in $\procP$.

\begin{example}\label{ex:network}

    The network $\nN$ obtained projecting the choreography of \Cref{ex:chor} is defined as follows:

$$
\nN(\pid p) =
\begin{cases}
{
\psend{cas}{\creds}
\pseq
\plabels{cas}{
\lbl{l} :
\precv{s}{t}
\pseq
\pnil,
\lbl{r} :
\pnil
}
}
& \text{if } \pid p = \pid c,
\\
{
\precv{c}{x}
\pseq
\pifte{\valid(x)}
{
Q_{\lleft}
}
{
Q_{\lright}
}
}
& \text{if } \pid p = \cas,
\\
{
\plabels{cas}{
\lbl{l} :
\psend{c}{\token()}
\pseq
\pnil,
\lbl{r} :
\pnil
}
}
& \text{if } \pid p = \pid s,
\\
{
\precv{cas}{\resvar}
\pseq
\pnil
}
& \text{if } \pid p = \logger,
\\
\pnil
& \text{otherwise.}
\end{cases}
$$

where $\pQ[{\lleft}] = \psendl[l]{c} \pseq \psendl[l]{s} \pseq \psend{logger}{\okval} \pseq \pnil$ and  $\pQ[{\lright}] = \psendl[r]{c} \pseq \psendl[r]{s} \pseq \psend{logger}{\koval} \pseq \pnil$.

\end{example}

In \Cref{fig:network:semantics} we recall the semantics for networks.
We distinguish between two kinds of transitions: \defn{local} transitions involving a single process (the one highlighted in the label). and \defn{global} transitions, where more than one process are involved.
The local transitions have the following interpretation:
\begin{itemize}
    \item The rule \textsc{assign} states that a transition with label $\PRassign[p]{x}{e}$ is performed at the process $\pid{p}$, 
    executing an assignment action $\pgenass$ and continuing with the process term $P$.
    Note that the highlighted process name $\pid{p}$ in the transition label denotes the process on which the action is being performed.
     
    \item 
    The rule \textsc{sendVal} states that a transition with label $\PRsend pq$ is performed at the process $\pid{p}$, 
    executing a send action $\psend{q}{e}$ 
    (where $\pid{p}$ is the sender, $e$ is the expression being sent, and $\pid{q}$ is the receiver),
    and continuing with the process term $\pP$.
    
    \item 
    Dually, the rule \textsc{recvVal} states a transition with label $\PRrecv pq$ is performed at the process $\pid{q}$, 
    executing a receive action $\precv{p}{x}$
    (where $\pid{q}$ is the receiver, $x$ is the variable where the received value is stored, and $\pid{p}$ is the sender), 
    and continuing with the process term $\pQ$.

    \item 
    Similarly, the rules \textsc{sendLbl} and \textsc{recvLbl} model the sending and receiving of labels for selection and branching.

    \item 
    Rules \textsc{cond-then} and \textsc{cond-else} model the execution of a conditional statement, where the process $\pid{p}$ evaluates the expression $e$ and continues with $\pP_1$ if $e$ is true, and with $\pP_2$ if $e$ is false.
    
\end{itemize}
The global transitions of parallel composed networks have the following interpretation:
\begin{itemize}
    \item The rule \textsc{com} states if a network $\nN[1]$ can perform a transition on the process $\pid{p}$ for sending an expression to $\pid{q}$
    and a network $\nN[2]$ can perform a transition on the process $\pid{q}$ for receiving an expression and storing it in the variable $x$ from $\pid{p}$, 
    then the parallel composition of the two networks can perform a transition for the communication of a value from $\pid{p}$ to $\pid{q}$.

    \item 
    Similarly, the rule \textsc{sel} states that if a network $\nN[1]$ can perform a transition on the process $\pid{p}$ for sending a label $\albl$ to $\pid{q}$ and a network $\nN[2]$ can perform a transition on the process $\pid{q}$ for receiving the same label $\albl$ from $\pid{p}$, then the parallel composition of the two networks can perform a transition for the selection of label $\albl$ from $\pid{p}$ to $\pid{q}$.
    
    \item The rule \textsc{par-left} models a parallel execution. It states that if a network $\nN[1]$ can perform a transition, 
    then the larger network $\nN[1] \ppar \nN[2]$ can perform the same transition. The part of network $\nN[2]$ not involved in the transition remains unchanged.
\end{itemize}

\begin{figure}[t]
    \adjustbox{max width=\textwidth}{$\begin{array}{cc}
        \inferrule*[right=assign]
        { }
        { \pdo{\pgenass \pseq \pP} \proctr{\PRassign[p]{x}{e}} \pdo{\pP} }
    &
        \inferrule*[right=call]
        { \genpProcDef \\  \pdo{\pP} \proctr{\mu} \pdo{\pP'} }
        { \pdo{ \pX }
        \proctr{\mu} \pdo{\pP'} }
    \\\\
        \inferrule*[right=sendVal]
        { }
        { \pdo{\psend{q}{e} \pseq \pP} \proctr{\PRsend pq} \pdo{\pP} }
    &
        \inferrule*[right=recvVal]
        { }
        { \pdo[q]{\precv{p}{x} \pseq \pQ} \proctr{\PRrecv pq} \pdo[q]{\pQ} }
    \\\\
        \inferrule*[right=sendLbl]
        { }
        { 
            \pdo{\psendl[\albl]{q} \pseq \pP} 
            \proctr{\PRlsend pq} 
            \pdo{\pP} 
        }
    &
%
        \inferrule*[right=recvLbl]
        { }
        {  
            \pdo[q]{\plabel[\pQ] p\albl{\Lset}} 
            \proctr{\PRlrecv pq} 
            \pdo[q]{\pQ_{\albl}} 
        }
    \\\\
        \inferrule*[right=cond-then]
        { }
        { \pdo{\genpifte}
        \proctr{\PRthen[p]{e}} \pdo{\pP[1]} }
    &
        \inferrule*[right=cond-else]
        { }
        { \pdo{\genpifte}
        \proctr{\PRelse[p]{e}} \pdo{\pP[2]} }
    \\\\
        \inferrule*[right=com]
        { \nN[1] \netsem{\PRsend pq} \nNp[1] \\
        \nN[2] \netsem{\PRrecv pq} \nNp[2] }
        { \nN[1] \ppar \nN[2] \netsem{\TRcom pqx} \nNp[1] \ppar \nNp[2] }
    &
        \inferrule*[right=sel]
        { \nN[1] \netsem{\PRlsend[\albl]{p}{q}} \nNp[1] \\
        \nN[2] \netsem{\PRlrecv[\albl]{p}{q}} \nNp[2] }
        { \nN[1] \ppar \nN[2] \netsem{\TRsel pq} \nNp[1] \ppar \nNp[2] }
    \\\\
        \inferrule*[right=par-left]
        { \nN[1] \netsem{\mu} \nNp[1] }
        { \nN[1] \ppar \nN[2] \netsem{\mu} \nNp[1] \ppar \nN[2] }
    &
        \inferrule*[right=par-right]
        { \nN[2] \netsem{\mu} \nNp[2] }
        { \nN[1] \ppar \nN[2] \netsem{\mu} \nN[1] \ppar \nNp[2] }
    \end{array}$}

    \caption{Semantics for networks.}
    \label{fig:network:semantics}
\end{figure}

We then have the result, which ensures that the global semantics of networks is consistent with the local semantics of processes.
That is, if a network $\nN$ performs a transition $\mu$ involving only one process $\pid p$, then the same transition with label $\mu$ can be performed on the atomic network $\pdo{\nNof{\pid p}}$
and vice versa.

\begin{lemma}\label{lem:net_sem}
    Let $\nN$ and $\nNp$ be networks, and $\mu$ such that \emph{$\pnof{\mu} = \set{\pid p}$}.
    Then, 
    $\nN \netsem{\mu} \nNp$ if and only if $\pdo{\nNof{\pid p}} \netsem{\mu} \pdo{\nNof{\pid p}}$.
\end{lemma}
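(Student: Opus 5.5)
I read the statement as intended up to the obvious correction on the right-hand side: $\nN \netsem{\mu} \nNp$ holds iff $\pdo{\nNof{\pid p}} \netsem{\mu} \pdo{\nNp(\pid p)}$ and $\nNp(\pid q) = \nN(\pid q)$ for every $\pid q \neq \pid p$. In words, a transition whose label mentions only $\pid p$ changes only the term at $\pid p$, and it is exactly a transition of the atomic network at $\pid p$. The proof treats the two directions separately. The only structural fact used is that parallel composition is a commutative monoid on networks with disjoint supports. This lets us write any network as $\nN = \pdo{\nNof{\pid p}} \ppar \nN_{-\pid p}$, where $\nN_{-\pid p}$ is the restriction of $\nN$ to the other names.

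\emph{($\Leftarrow$)} Assume $\pdo{\nNof{\pid p}} \netsem{\mu} \pdo{\nNp(\pid p)}$. Decompose $\nN = \pdo{\nNof{\pid p}} \ppar \nN_{-\pid p}$ and apply \textsc{par-left}. This gives $\nN \netsem{\mu} \pdo{\nNp(\pid p)} \ppar \nN_{-\pid p}$, which is precisely $\nNp$ under the agreement hypothesis. Nothing more is needed in this direction.

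\emph{($\Rightarrow$)} Proceed by induction on the derivation of $\nN \netsem{\mu} \nNp$, with a case analysis on the last rule.
\begin{itemize}
\item \textsc{com} and \textsc{sel}. These are impossible, because their conclusion labels $\TRcom pqx$ and $\TRsel pq$ mention two distinct names, contradicting $\pnof{\mu} = \set{\pid p}$.
\item Local axioms (\textsc{assign}, \textsc{sendVal}, \textsc{recvVal}, \textsc{sendLbl}, \textsc{recvLbl}, \textsc{cond-then}, \textsc{cond-else}) and \textsc{call}. Here $\nN$ is atomic, say $\pdo[r]{\pP}$, and the label highlights the acting process $\pid r$.
\item \textsc{par-left} and \textsc{par-right}. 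Here $\nN = \nN[1] \ppar \nN[2]$ and, for \textsc{par-left}, $\nN[1] \netsem{\mu} \nNp[1]$ with $\nN[2]$ unchanged. The induction hypothesis gives $\pid p \in \supp(\nN[1])$, the atomic transition at $\pid p$, and that $\nNp[1]$ agrees with $\nN[1]$ off $\pid p$. Since $\nN[2]$ is untouched, agreement extends to the whole network, and $\nN(\pid p) = \nN[1](\pid p)$. The case \textsc{par-right} is symmetric.
\end{itemize}

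The main obstacle is the atomic base case: checking that the executing process $\pid r$ must be $\pid p$. This requires reading $\pnof{\cdot}$ on the partial labels $\PRsend pq$, $\PRrecv pq$, $\PRlsend pq$, $\PRlrecv pq$ as the single highlighted process, since otherwise these labels name two processes and fall outside the hypothesis. With that reading, $\pid r = \pid p$ for every local rule. For \textsc{call}, an inner induction on the unfolding depth shows that the premise at $\pid p$ still produces a label highlighting $\pid p$. The same fact also covers the degenerate case $\nNof{\pid p} = \pnil$: there the atomic network has no transitions, so neither does $\nN$ with a label on $\pid p$, and both sides of the equivalence fail together.
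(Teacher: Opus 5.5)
Your proof is correct and follows essentially the paper's route. The paper's proof is a one-line appeal to induction on the network, using the definitions of the process and network semantics; you induct on the derivation of the transition instead and obtain the backward direction directly from \textsc{par-left}, which organises the same routine argument slightly differently. You are also right that the right-hand side of the statement is a typo: it should read $\pdo{\nN(\pid p)} \netsem{\mu} \pdo{\nNp(\pid p)}$, with $\nNp$ agreeing with $\nN$ away from $\pid p$, which is exactly how the lemma is used in the proof of \Cref{lem:mb_net}.
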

\begin{proof}
    By definition of the semantics of processes and networks, both directions of the statement hold by induction on the structure of the network $\nN$.
\end{proof}

As a consequence, if a network $\nN$ has a process $\pid p$ that is not terminated, then there is a transition of $\nN$ involving $\pid p$.
\begin{lemma}\label{lem:local_progress_net}
   Let $\nN$ be a network.
   If $\nN(\pid p) \neq \pnil$,
   then $\nN \netsem{\mu} \nNp$ for some $\nNp$ and $\mu$ such that \emph{$\pnof{\mu} = \set{\pid{p}}$}.
\end{lemma}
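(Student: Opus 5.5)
Since $\nN(\pid p)\neq\pnil$, Lemma~\ref{lem:net_sem} reduces the statement to showing that the single process term $\pP = \nN(\pid p)$, viewed as the atomic network $\pdo{\pP}$, has a transition $\pdo{\pP}\netsem{\mu}\pdo{\pPp}$ with $\pnof{\mu}=\set{\pid p}$. By the right-to-left direction of Lemma~\ref{lem:net_sem}, this lifts to a transition $\nN\netsem{\mu}\nNp$. Here $\nNp$ is $\nN$ with $\pid p$ updated to $\pPp$, as obtained through \textsc{par-left}/\textsc{par-right} after decomposing $\nN = \pdo{\pP}\ppar\nN[2]$. We read $\pnof\mu=\set{\pid p}$ as ``the highlighted process of a local label is $\pid p$'', which is how the local labels of \Cref{fig:network:semantics} are intended.

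For the atomic case, I would argue by induction on the structure of $\pP$, with a separate well-founded argument for variables. The cases follow the grammar:
\begin{itemize}
\item an action prefix $\palpha\pseq\pP[1]$ fires by \textsc{assign}, \textsc{sendVal}, \textsc{recvVal}, or \textsc{sendLbl}, according to the form of $\palpha$;
\item a branching term $\genplabel$ fires by \textsc{recvLbl} for any $\albl\in\ELLset$, which exists because $\ELLset\neq\emptyset$;
\item a conditional fires by \textsc{cond-then};
\item $\pnil$ is excluded by hypothesis.
\end{itemize}
In each case the highlighted process of the label is $\pid p$.

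The only delicate case is a procedure variable $\pX$ with $\procDef{\pX}{\pP[X]}$. Rule \textsc{call} gives a transition of $\pdo{\pX}$ whenever $\pdo{\pP[X]}$ has one, so I unfold the definition. This needs a guardedness condition on process definitions, analogous to \Cref{def:cset-guarded}: no body is $\pnil$ or unguardedly a variable, or at least the unfolding chain $\pX\mapsto\pP[X]\mapsto\cdots$ reaches a non-variable, non-$\pnil$ term. Without it, $\pX\triangleq\pX$ would be a non-$\pnil$ term with no transitions. I expect this to be the main obstacle. I would handle it by assuming, as in \Cref{rem:procChor}, that process procedure definitions are guarded. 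Finite closedness then makes the unfolding chain finite, so an induction on its length finishes the case via \textsc{call}.

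Finally, I would make sure the transition found does not depend on the rest of the network, so that no synchronisation partner is needed. This holds because the transitions produced are all local rules of \Cref{fig:network:semantics}, never \textsc{com} or \textsc{sel}, and \textsc{par-left}/\textsc{par-right} propagate them unconditionally.
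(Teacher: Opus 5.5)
Your proposal is correct and follows the paper's route. The paper's proof only cites \Cref{lem:net_sem}, the rules of \Cref{fig:network:semantics}, and guardedness; you spell out the case analysis on $\nN(\pid p)$ and state explicitly the guardedness assumption on process procedure definitions that the paper leaves implicit. One small precision: that guardedness assumption, not finite closedness, is what bounds the unfolding chain for variables, since a finitely closed set can still contain a definition such as $\pX \triangleq \pX$.
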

\begin{proof}
    By \Cref{lem:net_sem}, definition of network semantics (see \Cref{fig:network:semantics}) and guardedness. 
\end{proof}

\subsection{Endpoint Projection}\label{subsec:EPP}

We conclude this section by recalling the definition of \defn{endpoint projection} (EPP) for choreographies.
For this purpose, we first recall the definition of the local projection of a choreography on a process name, the definition of the merge operator on process terms allowing us to define the projection of a choreography, and finally the endpoint-projection (EPP) as the parallel composition of the local projections of a choreography on all process names involved in the choreography, where the merge operation is used to handle the knowledge of choice when defining local projection.
We then recall the EPP theorem, stating that the endpoint projection of a choreography is bisimilar to the original choreography, and we sketch the idea of the proof.

\begin{definition}\label{def:projection}
    Let $\cC$ be a choreography.
    Its \defn{endpoint projection} $\proj{\cC}$ is the network 
    \begin{equation}
        \proj\cC = \pdo[\pid p_1]{\proj[\pid p_1]{\cC}} \ppar \cdots \ppar \pdo[\pid p_n]{\proj[\pid p_n]{\cC}}
        \quad\text{where } \pnof{\cC} = \set{\pid p_1, \ldots, \pid p_n}
    \end{equation}
    associating to each process name $\pid p\in\pnof{\cC}$ the process term $\proj[p]{\cC}$ defined as in \Cref{fig:projection} using the axiliary \defn{merge operator} $\sqcup$ defined in the same figure.
    We say that $\cC$ is \defn{projectable at} a process $\pid p$ if $\proj[p]{\cC}$ is defined, and that $\cC$ is \defn{projectable} if it is projectable at all process names $\pid p \in \pnof{\cC}$.
    We may say that an action $\palpha$ is compatible with a choreographic instruction $\cI$ at a process $\pid p$ whenever $\proj[p]{\cI \pseq \cnil}=\palpha \pseq \pnil$.

    The \defn{projection} $\proj{\procC}$ of a set of choreographic procedure definitions $\procC$ on a process name $\pid{p}$ is defined as the set of process procedure definitions associating to each choreographic procedure definition $\cX_i$ occurring in $\procC$ and each process name $\pid p$ occurring in $\cC$, a process procedure definition defining a process variable $\pX^{\pid p}_i$ as the projection of $\cC_i$ on $\pid p$.
    That is, $\proj[\pid p]{\procC} = \set{\procDef{{\pX^{\pid p}_i}}{{\proj[p]{{\cC_i}}}} \mid \genProcDef[i] \in \procC \mbox{ and } \pid{p} \in \pnof{\cC}}$.
\end{definition}

\begin{figure}[t]
    \adjustbox{max width = \textwidth}{$\begin{array}{c}
        \proj[r]{\cnil} 
    =
        \pnil 
    \qquad
        \proj[r]{\cifte{p}{e}{\cC[1]}{\cC[2]}}
    =
        \begin{cases}
            \pifte e{\proj[r]{\cC[1]}}{\proj[r]{\cC[2]}}
            & \text{if } \pid{r} = \pid{p} \\
            \proj[r]{\cC[1]} \sqcup \proj[r]{\cC[2]}
            & \text{otherwise}
        \end{cases} 
    \\\\
        \proj[r]{\cgenloc \seq \cC} 
    =
        \begin{cases}
            \pgenass \pseq \proj[r]{\cC} & \text{if } \pid{r} = \pid{p} \\
            \proj[r]{\cC}                        & \text{otherwise}
        \end{cases} 
    \qquad
        \proj[r]{\cX}
    =
        \begin{cases}
            \pX^{\pid{r}} & \text{if } \pid{r} \in \pnof{\cX} \\
            \pnil & \text{otherwise}
        \end{cases}
    \\\\
        \proj[r]{\cgencom \seq \cC}
    =
        \begin{cases}
            \psend{q}{e} \pseq \proj[r]{\cC}  & \text{if } \pid{r} = \pid{p} \\
            \precv{p}{x} \pseq \proj[r]{\cC}  & \text{if } \pid{r} = \pid{q} \\
            \proj[r]{\cC}                        & \text{otherwise}
        \end{cases} 
    \qquad
        \proj[r]{\cgensel \seq \cC}
    =
        \begin{cases}
            \psendl[\albl]{q} \pseq \proj[r]{\cC} 
            & \text{if } \pid{r} = \pid{p} \\
            \ponelabel[{\proj[r]{\cC}}] p\albl
            & \text{if } \pid{r} = \pid{q} \\
            \proj[r]{\cC}
            & \text{otherwise}
        \end{cases} 
    \\\\\hline\\
        \pP \sqcup \pQ 
    =
        \begin{cases}
            \pnil  
            & \text{if } \pP = \pnil \text{ and } \pQ = \pnil
        \\
            \pX  
            & \text{if } \pP = \pX \text{ and } \pQ = \pX 
        \\
            \palpha \pseq \procpoly{(\pPp \sqcup \pQp)  }
            & \text{if } \pP = \palpha \pseq \pPp \quand \pQ = \palpha \pseq \pQp 
        \\
            \ponelabel[\pPp \sqcup \pQp] p\albl 
            & \text{if } \pP =  \ponelabel[\pPp] p\albl,\quand
                \pQ = \ponelabel[\pQp] p\albl 
        \\
            \precl[]{p} \procpoly{\set{\lleft : \pPp, \lright : \pQp}}
            & \text{if } \pP =  \ponelabel[\pPp] p\lleft
            \quand
                \pQ = \ponelabel[\pQp] p\lright 
        \\
            \precl[]{p} \procpoly{\set{\lleft : \pQp, \lright : \pPp}}
            & \text{if } \pP =  \ponelabel[\pPp] p\lright 
            \quand
                \pQ = \ponelabel[\pQp] p\lleft
        \\
            \precl[]{p} \procpoly{\set{\lleft : \pPp \sqcup \pQ[{\lleft}], \lright : \pQ[\lright]}}
            & \text{if } \pP =  \ponelabel[\pPp] p\lleft
            \quand
                \pQ = \precl[]{p} \procpoly{\set{\lleft : \pQ[\lleft], \lright : \pQ[\lright]}}  
        \\
            \precl[]{p} \procpoly{\set{\lleft :  \pQ[{\lleft}], \lright : \pQ[\lright] \sqcup \pPp}}
            & \text{if } \pP =  \ponelabel[\pPp] p\lright
            \quand
                \pQ =\precl[]{p} \procpoly{\set{\lleft : \pQ[\lleft], \lright : \pQ[\lright]}}
        \\
              \precl[]{p} \procpoly{\set{\lleft :  \pP[{\lleft}] \sqcup \pQp , \lright : \pP[\lright]}}
            & \text{if } \pP =  \precl[]{p} \procpoly{\set{\lleft : \pP[\lleft], \lright : \pP[\lright]}}
            \quand
                \pQ = \ponelabel[\pQp] p\lleft  
        \\
            \precl[]{p} \procpoly{\set{\lleft :  \pP[{\lleft}], \lright : \pP[\lright] \sqcup \pQp}}
            & \text{if } \pP =  \precl[]{p} \procpoly{\set{\lleft : \pP[\lleft], \lright : \pP[\lright]}}
            \quand
                \pQ =\ponelabel[\pQp] p\lright
        \\
            \palabel[\pPp \sqcup \pQp] p\albl\Lset
            & \text{if } \pP =  \palabel[\pPp] p\albl\Lset 
            \quand
                \pQ = \palabel[\pQp] p\albl\Lset 
        \\
            \pifte e {\procpoly{(\pP[1] \sqcup \pQ[1])}}{\procpoly{(\pP[2] \sqcup \pQ[2])}}
            & 
                \pP=\pifte e{\pP[1]}{\pP[2]}
            \quand
                \pQ = \pifte e {\pQ[1]}{\pQ[2]} 
        \\
            \text{undefined}
            & \text{otherwise}
        \end{cases}
    \end{array}$}
    \caption{Definition of process projection of a choreography $\cC$ on a process name $\pid{p}$, and the definition of the merge operation $\sqcup$ on process terms required for the case of the conditional.
    }
    \label{fig:projection}
\end{figure}

\begin{remark}
    The need for the merge operation arises from the fact that, when projecting a choreography on a process name $\pid{p}$, we need to handle the case of conditionals in the choreography.
    In particular, when a choreography has a conditional contructor \emph{$\cifte pe{\cC[1]}{\cC[2]}$} and a process name $\pid{q}$ is not involved in the evaluation of the condition $\ee$, then $\pid{q}$ cannot know which branch of the conditional will be executed.
    The merge is a partial function that takes two process terms and merges them into a single process term.
    It is crucial for handling the knowledge of choice when defining process projection.
    In general, merging two terminating processes $\pnil$ results in $\pnil$ and merging two process variables $\pX$ results in $\pX$.
    Merging two process terms of the form $\alpha \pseq P$ where the action $\alpha$ is not a receive label requires that the two process terms have the same action $\alpha$.
    The resulting process term starts with the same action $\alpha$ and continues with the merge of the continuations of the two process terms.
    Merging two process terms of receiving labels has several cases. The basic idea is that, when the two process terms have the same label,
    the two process terms under the same labels are merged. When the two process terms have different labels,
    the resulting process terms has both labels and their associated process terms.
    Merging two conditional process terms requires that the two process terms have the same condition
    and the merge of the two process terms in the then branch and the else branch are defined.
    The resulting process term has the same condition and the merge of the two process terms in the then branch and the else branch as its branches.

\end{remark}

\begin{example}\label{ex:EPP}
    The EPP of the choreography from \Cref{ex:chor} is the network in \Cref{ex:network}.
\end{example}

We recall the following properties of the merge operation, which we will use later in some proofs.
\begin{proposition}\label{prop:merge}
    The merge operator
    is a partial join operator on process terms respecting the following properties:
    \begin{align*}
    \pP \sqcup \pP &= \pP & \text{(Idempotency)} \\
    \pP \sqcup \pQ &= \pQ \sqcup \pP & \text{(Commutativity)} \\
    \pP \sqcup (\pQ \sqcup \pR) &= (\pP \sqcup \pQ) \sqcup \pR & \text{(Associativity)}
    \end{align*}
\end{proposition}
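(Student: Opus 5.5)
We prove the three identities simultaneously by structural induction on the process terms involved. Merge is a partial operation, so each equation is read as a Kleene equality: one side is defined iff the other is, and then they coincide. The only cases are those listed in \Cref{fig:projection}. The base cases $\pnil$ and $\pX$ are immediate because the merge is defined only when both arguments are syntactically equal.

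\emph{Idempotency.} We argue by induction on $\pP$. For $\pP = \palpha \pseq \pPp$ we get $\palpha \pseq (\pPp \sqcup \pPp) = \palpha\pseq\pPp$ by the induction hypothesis. For a single-label branching $\ponelabel[\pPp] p\albl$, a branching over $\Lset$, or a conditional, the same-label or same-guard clause applies componentwise and the induction hypothesis closes the case. No mixed clause can fire, since the two sides have identical shape.

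\emph{Commutativity.} We argue by induction on the sum of the sizes of $\pP$ and $\pQ$, using the symmetry of the clause table. Each clause with matching shapes is symmetric up to the induction hypothesis on the continuations. The asymmetric clauses come in mirror pairs: single $\lleft$ against single $\lright$ is paired with its reverse, and single-label against full branching is paired with full branching against single-label. Inspection shows each pair yields the same term, for instance $\pQ[\lleft]\sqcup\pPp$ against $\pPp\sqcup\pQ[\lleft]$, equal by the induction hypothesis. The same pairing shows definedness is symmetric, because the ``otherwise'' case is closed under swapping.

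\emph{Associativity.} This is the main obstacle, because of the many shape combinations of branching terms. We proceed by induction on the total size of $\pP,\pQ,\pR$. First observe that a merge is defined only if the top-level constructors are compatible: same action, same guard $\ee$, or both branchings on the same $\pid p$. The result keeps that constructor. Hence either both sides of the equation are undefined at the top level, or all three terms share the head constructor. In the non-branching cases the equation reduces directly to the induction hypothesis on continuations or on the then/else branches.

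For branchings on $\pid p$, the plan is to normalise every branching term as a partial map from $\Lset$ to continuations, where a single-label term $\ponelabel[\pPp] p\albl$ is the map defined only at $\albl$. We then check that the merge clauses are exactly the following rule: take the union of the domains, and merge the continuations pointwise on labels present in both. One has to confirm that each of the eight branching clauses is an instance of this rule. Under this rule, associativity is pointwise: at each label we must compare $(\pP_\albl\sqcup\pQ_\albl)\sqcup\pR_\albl$ with $\pP_\albl\sqcup(\pQ_\albl\sqcup\pR_\albl)$ wherever the relevant components exist. This holds by the induction hypothesis, and it holds trivially when fewer than two components are present. Definedness also matches, because each side is defined iff every pointwise merge is. The care needed is in this reformulation step, since the paper's clauses never produce a term with an empty label set, so the partial-map view must be applied consistently.

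Process variables $\pX$ are treated as atoms and are never unfolded, so the induction is well-founded on syntax.
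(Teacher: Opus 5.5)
Your proposal is correct and takes the same approach as the paper, whose proof is just ``by structural induction on the process terms''. Your write-up fills in that induction, and reading branchings as partial label maps (union of domains, pointwise merge on shared labels) is a clean way to handle the eight branching clauses. One detail to note when you check those clauses: the clause for a single $\lright$-branch against a full branching writes the component merge in reversed order ($\pQ[\lright] \sqcup \pPp$), so matching it to your pointwise rule uses commutativity on the components, which you have already established before associativity.
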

\begin{proof}
    By structural induction on the process terms. 
\end{proof}

By definition of the merge operator, we have the following properties relating the merge operator and the local semantics of processes, which we will use later in some proofs.
\begin{lemma}\label{lem:merge_and_local}
    Let $\pP[1]$ and $\pP[2]$ be process terms such that $\pP[1] \sqcup \pP[2]$ is defined
    and $\pid p \in\Pset $ such that \emph{$\pid p \in \pnof{\pP[i]}$} for $i \in \set{1,2}$.
    Then,
    \begin{enumerate}
        \item\label{lem:merge_local_sem1} 
        if $\pdo[p]{\pP[1]} \netsem{\mu} \pdo[p]{\pPp[1]}$ and $\pdo[p]{\pPp[2]} \netsem{\mu} \pdo[p]{\pPp[2]}$, 
        then $\pdo[p]{\pP[1] \sqcup \pP[2]} \netsem{\mu} \pdo[p]{\pPp[1] \sqcup \pPp[2]}$;
        
        \item\label{lem:merge_local_sem2} 
        if $\pdo[p]{\pP[1]} \netsem{\mu} \pdo[p]{\pPp[1]}$ and $\pdo[p]{\pP[2]} \notnetsem{\mu}$, 
        then $\pdo[p]{\pP[1] \sqcup \pP[2]} \netsem{\mu} \pdo[p]{\pPp[1]}$.

        \item\label{lem:merge_local_sem5} 
        if $\pdo[p]{\pP[1] \sqcup \pP[2]} \netsem{\mu} \nNp$, then either:
        \begin{itemize}
            \item 
            both 
            $\pdo[p]{\pP[1]} \netsem{\mu} \pdo[p]{\pPp[1]}$ and 
            $\pdo[p]{\pP[2]} \netsem{\mu} \pdo[p]{\pPp[2]}$, 
            and $\nNp = \pdo[p]{\pP[1] \sqcup \pP[2]}$;
            or

            \item 
            $\mu = \PRlrecv[$\albl$]{q}{p}$, 
            and we have $\pdo[p]{\pP[i]} \netsem{\mu} \pdo[p]{\pPp[i]}$ and
            $\pdo[p]{\pP[j]} \notnetsem{\mu}$, 
            but 
            $\pdo[p]{\pP[j]} \netsem{\PRlrecv[$\overline{\albl}$]{q}{p}} \pdo[p]{\pPp[j]}$ 
            with $\set{i,j} = \set{1,2}$.
        \end{itemize} 
    \end{enumerate}
\end{lemma}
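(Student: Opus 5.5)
The plan is a simultaneous induction on the derivation of $\pP[1] \sqcup \pP[2]$, that is, on the structure of the pair $(\pP[1],\pP[2])$ following the clauses of \Cref{fig:projection}. Along the way we perform case analysis on the last rule of the transition derivation, which is one of the process rules of \Cref{fig:network:semantics}. Since we only consider atomic networks $\pdo[p]{\cdot}$, the rules \textsc{com}, \textsc{sel}, \textsc{par-left} and \textsc{par-right} cannot be used. The only rules to consider are \textsc{assign}, \textsc{sendVal}, \textsc{recvVal}, \textsc{sendLbl}, \textsc{recvLbl}, \textsc{cond-then}, \textsc{cond-else} and \textsc{call}.

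Each transition of a process term is determined by the head constructor of the term, after unfolding process variables via \textsc{call}. So in each clause of the merge we compare the head of $\pP[1] \sqcup \pP[2]$ with the heads of $\pP[1]$ and $\pP[2]$.

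Item~\ref{lem:merge_local_sem1} (reading its second hypothesis as being about $\pP[2]$) goes by cases on the defining clauses of the merge.
\begin{itemize}
\item For $\palpha\pseq\pPp \sqcup \palpha\pseq\pQp$, both sides fire $\mu$ determined by $\palpha$ and reach $\pPp$ and $\pQp$. The merge fires the same $\mu$ and reaches $\pPp\sqcup\pQp$.
\item The conditional case is identical, using \textsc{cond-then} or \textsc{cond-else} on matching branches.
\item In the variable case $\pX\sqcup\pX=\pX$ the residuals coincide and idempotency (\Cref{prop:merge}) gives the claim.
\item For branching terms, a common label $\albl$ selects continuations $\pP[1,\albl]$ and $\pP[2,\albl]$. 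Every branching clause of the merge places $\pP[1,\albl]\sqcup\pP[2,\albl]$ under $\albl$, which gives the claim.
\end{itemize}

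Item~\ref{lem:merge_local_sem2} can only arise in the label-receiving clauses. In every other clause, definedness forces identical heads, so $\pP[1]$ and $\pP[2]$ enable the same labels. In a branching clause where $\pP[2]$ cannot receive $\albl$, the merged term carries under $\albl$ exactly the continuation coming from $\pP[1]$, as the clauses with one side of the form $\ponelabel[\cdot] p\albl$ and the other of the form $\ponelabel[\cdot] p{\overline\albl}$ show.

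Item~\ref{lem:merge_local_sem5} is the converse, obtained by inverting the rule that fired for $\pP[1]\sqcup\pP[2]$.
\begin{itemize}
\item If the head is not a label reception, definedness forces both $\pP[i]$ to have that same head, so both fire $\mu$. We read the residual as $\pdo[p]{\pPp[1]\sqcup\pPp[2]}$; the statement as printed appears to contain a typo. We then conclude as in item~\ref{lem:merge_local_sem1}.
\item If the head is a label reception, either both $\pP[i]$ offer $\albl$ (first alternative), or only one does. Since the label set is $\Lset=\set{\lleft,\lright}$ and every branching offers a nonempty set of labels, the other must offer $\overline\albl$ (second alternative).
\end{itemize}

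The main obstacle is the interaction with \textsc{call} and guardedness. A merge such as $\pX\sqcup\pX$ requires unfolding the definition of $\pX$, so plain structural induction is not directly available. I would handle this by inducting on the height of the transition derivation instead, noting that $\pX$ only merges with itself. The hypothesis $\pid p\in\pnof{\pP[i]}$ is not really used beyond ruling out degenerate terminated cases. Apart from this, the proof is a routine but lengthy case check over the twelve merge clauses.
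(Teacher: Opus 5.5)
Your proposal is correct and takes essentially the same approach as the paper, whose proof is a case analysis on the clauses defining $\sqcup$. You also rightly read the second hypothesis of item 1 as concerning $\pP[2]$ and the residual in item 3 as $\pdo[p]{\pPp[1]\sqcup\pPp[2]}$. One correction: no induction is needed. Transitions of $\pdo[p]{\cdot}$ depend only on the head constructor, and the $\pX\sqcup\pX$ case closes by determinism and idempotency, as you already note, so the worry about \textsc{call} and guardedness is unfounded.
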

\begin{proof}
    By case analysis on the definition of $\sqcup$.
\end{proof}

\subsection{Bisimulation and EPP Theorem}\label{subsec:EPPthm}
We conclude this section by recalling the definition of bisimulation (between choreographies and networks), and the statement of the EPP theorem, which states that the endpoint projection of a choreography is bisimilar to the original choreography.

\begin{definition}\label{def:bisim}
    A relation $\relSymb \subset( \ChorSet \times \NetSet )$ between the set of choreographies and the set of networks is 
    a \defn{bisimulation} if the following hold for any choreograpy $\cC$, any network $\nN$, and any label $\mu\in\Glab$:
    \begin{enumerate}
        \item\label{def:bisim:1} 
        if $\cC \tradsem{\mu} \cCp$ and $\cC\relSymb \nN$, 
        then $\nN \netsem{\mu} \nNp $ and $\cCp \ \mathcal{R} \ \nNp$;

        \item\label{def:bisim:2} 
        if $\nN \netsem{\mu} \nNp $ and $\nN\relSymb \cC$, 
        then $\cC \tradsem{\mu} \cCp$ and $\cCp \ \mathcal{R} \ \nNp$.
    \end{enumerate}
    A choreography $\cC$ and a network $\nN$ are \defn{bisimilar}, written $\cC \bisim \nN$ if there is 
    a bisimulation $\relSymb$ such that $\cC \relSymb \nN$.
\end{definition}

\begin{restatable}[EPP Theorem]{theorem}{EPP}\label{thm:EPPtrad}
    Let $\cC$ be choreography. If $\cC$ is projectable, then $\cC \bisim \proj[]{\cC}$.
\end{restatable}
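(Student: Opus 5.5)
We propose to prove the statement in the traditional way, but isolating the knowledge-of-choice issue in one relation. Since $\proj{\cC}$ itself is not closed under transitions (after a conditional, processes not involved still carry merged code), the relation must contain more pairs than $(\cC,\proj{\cC})$. We define
\[
\relSymb = \setdef{(\cC,\nN)}{\cC \text{ projectable and } \proj[p]{\cC} \sqsubseteq \nN(\pid p) \text{ for all } \pid p},
\]
where $\pP \sqsubseteq \pQ$ is the pruning preorder generated by $\pP \sqsubseteq \pP \sqcup \pQ$. Equivalently, $\nN(\pid p)$ may offer extra label branches that $\proj[p]{\cC}$ lacks, closed under the term constructors. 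We then show that $\relSymb$ is a bisimulation; the theorem follows since $(\cC,\proj{\cC})\in\relSymb$ by reflexivity. Two auxiliary lemmas are needed first.

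\textbf{Commutation of projection with transitions.} If $\cC \tradsem{\mu} \cCp$, then for each $\pid r\in\pnof\mu$ the atomic network $\pdo[r]{\proj[r]{\cC}}$ performs the matching local action (send, receive, label send, label receive, assign, or branch) and reaches $\proj[r]{\cCp}$. For $\pid r\notin\pnof\mu$ we need $\proj[r]{\cC} \sqsupseteq \proj[r]{\cCp}$. This is proved by induction on the derivation of $\cC \tradsem{\mu} \cCp$:
\begin{itemize}
\item \textsc{delay} is immediate, because projection skips the instruction for uninvolved processes.
\item \textsc{call} uses the projected procedure definitions $\pX^{\pid r}$ and guardedness.
\item \textsc{cond-then}/\textsc{cond-else} at $\pid p$ gives $\proj[r]{\cC[1]}\sqcup\proj[r]{\cC[2]} \sqsupseteq \proj[r]{\cC[1]}$ for $\pid r\neq\pid p$, using \Cref{prop:merge}.
\item \textsc{delay-cond} needs that projection's merge commutes with stepping. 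This uses \Cref{lem:merge_and_local}, items \ref{lem:merge_local_sem1} and \ref{lem:merge_local_sem2}.
\end{itemize}

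\textbf{Monotonicity of $\sqsubseteq$ with respect to local transitions.} If $\pP\sqsubseteq\pQ$ and $\pdo{\pP}\netsem{\mu}\pdo{\pPp}$, then $\pdo{\pQ}\netsem{\mu}\pdo{\pQp}$ with $\pPp\sqsubseteq\pQp$. Conversely, if $\pdo{\pQ}$ moves with a non-label-receive action, then $\pP$ moves likewise. For a label receive, $\pQ$ may offer labels that $\pP$ does not. This follows by case analysis as in \Cref{lem:merge_and_local}, item \ref{lem:merge_local_sem5}.

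\textbf{Choreography to network.} Given $\cC\relSymb\nN$ and $\cC\tradsem{\mu}\cCp$, the first lemma gives local moves of $\proj[r]{\cC}$ for the one or two participants. Monotonicity transfers these moves to $\nN(\pid r)$. We combine them with \Cref{lem:net_sem} and \textsc{com}/\textsc{sel}/\textsc{par} into $\nN\netsem{\mu}\nNp$. Processes outside $\pnof\mu$ are unchanged in $\nNp$, and by the first lemma their projections only shrink, so $\cCp\relSymb\nNp$ by transitivity of $\sqsubseteq$.

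\textbf{Network to choreography (main obstacle).} Given $\nN\netsem{\mu}\nNp$, we must find the corresponding choreography step. We first show, by induction on $\cC$ and unfolding calls via guardedness, a \emph{readiness} lemma: if $\proj[p]{\cC}$ starts with an action $\palpha$ towards $\pid q$ (or a conditional), and the complementary action is available at $\pid q$'s projection, then $\cC$ can perform $\mu$ by \textsc{delay}/\textsc{delay-cond}. Here the "delayed" prefix instructions cannot involve $\pid p$ or $\pid q$, by the structure of projection.

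The delicate case is a label receive offered by $\nN(\pid q)$ only because of pruning. We must argue that the sender $\pid p$ in $\nN$ agrees with $\proj[p]{\cC}$, since the sender's code has no extra branches. Hence the label $\albl$ actually sent is one that $\proj[q]{\cC}$ also accepts, in every branch of every enclosing conditional that the delay must traverse. Establishing this uniformly across \textsc{delay-cond} requires that projectability of both branches forces the same sender action, by the merge definition. I expect this synchronisation-in-both-branches argument to be the hardest step, and I would handle it by a separate induction on the nesting of conditionals preceding the instruction.

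Once $\cC\tradsem{\mu}\cCp$ is obtained, $\cCp\relSymb\nNp$ follows exactly as in the forward direction.
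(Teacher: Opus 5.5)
Your outline takes the textbook route: a syntactic pruning relation shown directly to be a bisimulation. The paper instead uses a per-process semantic preorder. As stated, your route fails on recursion.

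In this paper \textsc{call} is transparent in both semantics. A choreography step through \textsc{call} unfolds $\cX$ for \emph{all} participants at once. In the network, only the processes in $\pnof\mu$ pass through their $\pX^{\pid r}$, and every other process keeps the bare variable. Take $\cC=\cX$ with $\cX\triangleq\cccom p{e}q{x}\seq\cccom r{f}s{y}\seq\cX$, which is guarded and projectable. Then $\proj{\cC}(\pid r)=\pX^{\pid r}$ with $\pX^{\pid r}\triangleq\psend{s}{f}\pseq\pX^{\pid r}$. We have $\cC\tradsem{\TRcom pqx}\cCp=\cccom r{f}s{y}\seq\cX$. The only matching network transition returns $\proj{\cC}$ itself, so $\pid r$ is still at $\pX^{\pid r}$. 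But $\proj[r]{\cCp}=\psend{s}{f}\pseq\pX^{\pid r}$, and $\psend{s}{f}\pseq\pX^{\pid r}\sqsubseteq\pX^{\pid r}$ does not hold for the merge order. Indeed, $\pX\sqcup\pQ$ is defined only when $\pQ=\pX$, and no merge with a prefixed term yields a bare variable. Hence $(\cCp,\nNp)\notin\relSymb$, and $\relSymb$ is not a bisimulation. The failure is in the \textsc{call} case of your commutation lemma: the claim $\proj[r]{\cC}\sqsupseteq\proj[r]{\cCp}$ for $\pid r\notin\pnof\mu$ is false there, and guardedness does not rescue it. Your backward direction inherits the same failure, since you close it ``exactly as in the forward direction''.

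To repair the argument you must compare $\proj[r]{\cC}$ and $\nN(\pid r)$ up to unfolding of process variables. For instance, close $\sqsubseteq$ under $\pX\sqsubseteq\pP$ and $\pP\sqsubseteq\pX$ for each definition $\pX\triangleq\pP$, or define the comparison coinductively. You would then need to re-prove transitivity, your monotonicity lemma and the readiness lemma for that larger relation. Alternatively, reintroduce runtime terms tracking which processes have entered a procedure, as in the textbook.

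This is exactly the bureaucracy the paper's route avoids. Its $\mbb[r]$ is defined purely through transitions, and since \textsc{call} leaves no trace in either transition system, it cannot tell $\pX^{\pid r}$ from its body (cf.\ \Cref{lem:ugly}.\ref{lem:mb_net6}). Apart from recursion, the rest of your plan is the standard one and looks sound. That includes the argument that the sender's head action fixes the label across every branch traversed by \textsc{delay-cond}.
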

The textbook proof of the EPP theorem \cite{montesi:book} -- and proofs in the literature based on the same ideas -- incurs the problems discussed in the introduction: it is not modular and requires dealing with the equivalence classes given by the merge operator.

We conclude this section by recalling the standard construction of the stratification of bisimulation and the textbook result (see \cite[Theorem~2.10.13]{S11}) that states that the stratification of bisimulation coincides with bisimulation itself, and therefore can be used as an alternative characterisation of bisimulation.
This result will be used in the proof of our main Theorem of \Cref{sec:MBandEPP}.
\begin{definition}\label{def:bisim_strat}
    Let $n \in \mathbb N$.
    We define the \defn{$n$-th stratification of bisimulation} relation $\bisimn\subseteq \ChorSet \times \NetSet$ by induction on $n$ as follows:
    \begin{itemize}
        \item $\cC \bisimn[0] \nN$ for all $\cC$ and $\nN$;
        \item $\cC \bisimn[n+1] \nN$ the following hold for all $\mu\in\Glab$:
        \begin{itemize}
            \item $\cC \tradsem{\mu} \cCp$ implies $\nN \netsem{\mu} \nNp$ and $\cCp \bisimn[n] \nNp$; and
            \item $\nN \netsem{\mu} \nNp$ implies $\cC \tradsem{\mu} \cCp$ and $\cCp \bisimn[n] \nNp$.
        \end{itemize}
    \end{itemize}

    The \defn{stratification of bisimulation} relation $\bisimn[\omega]$ as the limit of the stratification of bisimulation relations $\bisimn[n]$ as $n$ goes to infinity.
    That is, $\cC \bisimn[\omega] \nN$ if and only if $\cC \bisimn[n] \nN$ for all $n \in \mathbb N$.
\end{definition}
\begin{theorem}\label{thm:bisim_strat}
    The stratification of bisimulation ($\bisimn[\omega]$) coincides with bisimulation ($\bisim$).
\end{theorem}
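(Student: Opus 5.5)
The plan is the standard two-inclusion argument of \cite[Theorem~2.10.13]{S11}, specialised to our choreography/network transition systems. The one ingredient that needs work here is \emph{image-finiteness}.

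For $\bisim \subseteq \bisimn[\omega]$, I show by induction on $n$ that every bisimulation $\relSymb$ satisfies $\relSymb \subseteq \bisimn[n]$.
\begin{itemize}
\item The base case $n=0$ is trivial.
\item For the step, take $\cC \relSymb \nN$ and a transition $\cC \tradsem{\mu} \cCp$. Clause~\ref{def:bisim:1} of \Cref{def:bisim} gives $\nN \netsem{\mu} \nNp$ with $\cCp \relSymb \nNp$. The induction hypothesis then gives $\cCp \bisimn[n] \nNp$.
\item Transitions of $\nN$ are handled symmetrically with clause~\ref{def:bisim:2}.
\end{itemize}
I also record that the relations decrease: $\bisimn[n+1] \subseteq \bisimn[n]$, again by a trivial induction on $n$.

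For $\bisimn[\omega] \subseteq \bisim$, I show that $\bisimn[\omega]$ is itself a bisimulation.
\begin{itemize}
\item Let $\cC \bisimn[\omega] \nN$ and $\cC \tradsem{\mu} \cCp$.
\item For every $n$, the relation $\cC \bisimn[n+1] \nN$ yields some $\nNp[n]$ with $\nN \netsem{\mu} \nNp[n]$ and $\cCp \bisimn[n] \nNp[n]$.
\item If the set $\{\nNp \mid \nN \netsem{\mu} \nNp\}$ is finite, some fixed $\nNp$ equals $\nNp[n]$ for infinitely many $n$.
\item By monotonicity of the strata, this $\nNp$ satisfies $\cCp \bisimn[n] \nNp$ for all $n$, i.e.\ $\cCp \bisimn[\omega] \nNp$.
\end{itemize}
The symmetric clause uses finiteness of $\{\cCp \mid \cC \tradsem{\mu} \cCp\}$ in the same way. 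So everything reduces to image-finiteness of both transition systems for each fixed label $\mu$. Since our semantics is symbolic (labels carry expressions, not values), no infinite value-indexed branching arises.

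The main obstacle is proving image-finiteness for choreographies, because of the \textsc{call} and \textsc{delay-cond} rules.
\begin{itemize}
\item I prove, by induction on the height of derivations, that the derivatives of $\cC$ under $\mu$ form a finite set.
\item Rules \textsc{assign}, \textsc{com}, \textsc{sel}, \textsc{cond-then} and \textsc{cond-else} each contribute at most one derivative.
\item \textsc{delay} contributes $\{\cI\seq\cCp \mid \cC\tradsem{\mu}\cCp\}$, which is finite by induction.
\item \textsc{delay-cond} contributes a subset of a product of two finite sets.
\item For \textsc{call}, I use the guardedness of \Cref{def:cset-guarded} together with finite closedness (\Cref{rem:procChor}). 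Guardedness means each unfolding exposes an instruction or conditional before any further call. Hence any derivation reaches an axiom after unfolding each of the finitely many procedure definitions only boundedly often. This gives a well-founded measure for the induction, namely (number of definitions, syntactic size).
\end{itemize}
For networks, the argument is a simpler induction on the finitely many processes in the support. Each process term has finitely many derivatives per label, using guardedness of process definitions for \textsc{call}. Rules \textsc{com}, \textsc{sel} and \textsc{par-left}/\textsc{par-right} combine finite sets finitely.
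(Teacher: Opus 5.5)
The two inclusions you give are the standard argument behind Sangiorgi's Theorem~2.10.13, and that citation is all the paper offers for this statement. Given image-finiteness, they are correct. The step that fails as written is the \textsc{call} case of your image-finiteness proof. There the body of $\cX$ is not a subterm, so your measure decreases only if you already know that $\cX$ is not unfolded again lower in the derivation. Guardedness does not give you that; it only says that a body is neither $\cnil$ nor a bare variable.

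For example, $\cX \triangleq \cloc{q}{y}{f} \seq \cX$ is guarded. Take the label $\TRassign[p]{x}{e}$ with $p \neq q$. Every unfolding exposes an instruction, yet \textsc{delay} sends the search straight back to $\cX$ with the same label. So the ``Hence'' in your last bullet does not follow. Nothing you invoke excludes a finite derivation that re-enters the same pair $(\cX,\mu)$ along a branch. That is exactly how image-finiteness fails for recursion in process calculi: think of $X \triangleq a.\mathbf{0} + (X \mid b)$ in CCS. Your network argument has the same problem. It also appeals to guardedness of process definitions, which the paper never assumes.

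The missing idea is determinism. The paper appeals to it in the paragraph before Lemma~\ref{lem:decreasing}, and it does hold for the traditional and network semantics used here. For choreographies, the rules are syntax-directed with mutually exclusive side conditions:
\begin{itemize}
\item On $\cI \seq \cC$, the axiom needs $\cI$ to match $\mu$, which forces $\pnof{\cI}=\pnof{\mu}$, while \textsc{delay} needs them disjoint.
\item On a conditional at $p$, \textsc{cond-then} and \textsc{cond-else} need $p \in \pnof{\mu}$, with distinct labels, while \textsc{delay-cond} needs $p \notin \pnof{\mu}$.
\item \textsc{call} is the only rule for a variable.
\end{itemize}
So, by induction on derivations rather than on terms, $\cC \tradsem{\mu} \cCp[1]$ and $\cC \tradsem{\mu} \cCp[2]$ imply $\cCp[1]=\cCp[2]$. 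The \textsc{call} case is just the induction hypothesis on the premise, with no measure and no guardedness; even $\cX \triangleq \cX$ simply has no transitions. Networks work the same way: the global label fixes the processes involved, and their head constructors fix the rule.

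With at most one derivative per label, your pigeonhole step becomes ``all the $\nNp[n]$ coincide.'' Determinism also justifies the bound on unfoldings you wanted: when derivations are unique, a derivation for $(\cX,\mu)$ cannot properly contain another derivation for the same pair.
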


\section{From Local to Global: a new semantics for choreographies}\label{sec:newsem}

In this section, we define a new semantics for choreographies,
The main motivation for defining this new semantics is to have a semantics for choreographies that is more modular and compositional than the traditional semantics, and that is more convenient for proving the EPP theorem.
The idea is to define the semantics of choreographies by aggregating the transitions of a new process-oriented semantics, where each step is labelled by a local label, that is, a label that describes the action performed by a single process.
This approach is intended to replicate the way the semantics of processes is typically defined in process calculi \cite{M80,SW01,S11}, following the well-known approach of Structural Operational Semantics (SOS) \cite{P04a}, in order to simplify relating the semantics of choreographies to that of process terms.

\begin{figure}[t]
    \def\myskip{\hskip6em}
    \adjustbox{max width=\textwidth}{$
    \begin{array}{c}
        \inferrule*[right=$\pid r$ assign] {  }
        {\cloc rxe \seq \cC \localsem{\TRassign x e}  \cC}
    \myskip
        \inferrule*[right=$\pid r$ call]{\genProcDef \\ \cC \localsem{\mu} \cCp} 
        { \cX \localsem{\mu}  \cCp }
    \\\\
        \inferrule*[right=$\pid r$ send] {  } 
        {\cccom re qx \seq \cC  \localsem{\PRsend rq}  \cC }
    \myskip
        \inferrule*[right=$\pid r$ rec] {  } 
        {  \cccom pe rx \seq \cC       \localsem{\PRrecv pr}   \cC     }
    \\\\
        \inferrule*[right=$\pid r$ sendl] {  } 
        {  \ccsel rq \albl \seq \cC     \localsem{\PRlsend rq}   \cC     }
    \myskip
        \inferrule*[right=$\pid r$ recl] {  } 
        {  \ccsel pr \albl \seq \cC      \localsem{\PRlrecv pr}   \cC     }
    \\\\  
        \inferrule*[right=$\pid r$ cond-then] { } 
        {  \cifte re{\cC[1]}{\cC[2]}      \localsem{\TRthen[r]{e}}   \cC[1]      }
    \qquad
        \inferrule*[right=$\pid r$ cond-else] { } 
        {  \cifte re{\cC[1]}{\cC[2]}      \localsem{\TRelse[r]{e}}   \cC[2]      }
    \\\\
        \inferrule*[right=$\pid r$ delay] {  \cC      \localsem{\mu}   \cCp     \\ \pid r \notin \pn(\cI)} 
        {  \cI\seq \cC      \localsem{\mu}   \cI\seq \cCp   }
    \qquad
        \inferrule*[right=$\pid r$ delay-cond] {  \cC[1]      \localsem{\mu}   \cCp[1]      \\   \cC[2]      \localsem{\mu}   \cCp[2]     \\ \pid{p}\neq \pid{r}}
        {  \cifte pe{\cC[1]}{\cC[2]}      \localsem{\mu}   \cifte pe{\cCp[1]}{\cCp[2]}    }
    \\\\
        \inferrule*[right=$\pid r$ delay-predict-left] {  \cC[1]      \localsem{\PRlrecv qr}   \cCp[1]      \\   \cC[2] \localsem[r]{\PRlrecv[$\overline{\albl}$] qr} \cCp[2]  \\ \pid{r} \neq \pid{p} }
        {  \cifte pe{\cC[1]}{\cC[2]}      \localsem{\PRlrecv qr}   \cCp[1]}
    \\\\
        \inferrule*[right=$\pid r$ delay-predict-right] {  \cC[1]      \localsem{\PRlrecv[$\overline{\albl}$] qr}   \cCp[1]      \\   \cC[2] \localsem[r]{\PRlrecv qr} \cCp[2]  \\ \pid{r} \neq \pid{p}}
        {  \cifte pe{\cC[1]}{\cC[2]}      \localsem{\PRlrecv qr}   \cCp[2]}
    \\\\\hline\\
        \inferrule*[right=local] {  \cC      \localsem{\mu}   \cCp     \\ \mu \in \set{\TRassign xe , \ \TRthen[r]{e} , \TRelse[r]{e} \mid \ee \in \Eset, x \in \Varset}}
        {  \cC       \aggsem{\mu}   \cCp   }
    \\\\
        \inferrule*[right=agg-com] {    \cC      \localsem[p]{\PRsend pq}    \cCp    \\   \cC     \localsem[q]{\PRrecv pq}   \cCp     }
        {    \cC       \aggsem{\TRcom pqx}    \cCp    }
    \qquad
        \inferrule*[right=agg-sel] {  \cC      \localsem[p]{\PRlsend pq}   \cCp      \\ 
        \cC     \localsem[q]{\PRlrecv pq}   \cCp     }
        {   \cC      \aggsem{\TRsel pq}   \cCp     }
    \\\\
    \end{array}
    $}

    \caption{
        Rules for the local and aggregate operational semantics for choreographies.
    }
    \label{fig:new-semantics}
\end{figure}

\subsection{Local and aggregated symbolic semantics for choreographies}\label{subsec:local}

We define new \emph{local} and \emph{aggregated} (symbolic) semantics for choreographies, to reflect the GSOS-style semantics of processes, and to make the proof of the EPP theorem easier to carry out.

The \defn{local (symbolic) semantics} for choreographies is defined by the transitions in the top part of \Cref{fig:new-semantics}, where each transition is labelled by a
label in the following extended set of \defn{(local) labels}:
\begin{equation}
    \Llab =
    \setdef{
        \begin{array}{l@{\;,\;}l@{\;,\;}l}
            \PRsend rq      &
            \PRrecv pr      &
            \PRlsend rq     \;,\;
            \PRlrecv pr   
        \\
            \TRassign[\rclr{r}] xe & \TRelse[\rclr{r}] e &\TRthen[\rclr{r}] e 
        \end{array}
    }{
        \pid p , \pid q,\pid r \in \Pset, 
        e \in \Eset, 
        \albl \in \Lset,
        x \in \Varset
    }  
\end{equation}
where, we extend the definition of $\pnof\cdot$ to the new labels not in $\Glab$ as follows:
\[
    \pnof{\PRsend rq} = \pnof{\PRrecv pr} = \pnof{\PRlsend rq} = \pnof{\PRlrecv pr} = \set{\pid r}
\]

The \defn{aggregate (symbolic) semantics} for choreographies is 
defined by the transitions in the bottom part of \Cref{fig:new-semantics}, where each transition is labelled by a label in $\Glab$.

\begin{example}\label{ex:aggregate}
    Let $\cC$ be the choreography defined in \Cref{ex:chor}.

    In the aggregate semantics, the transition for the initial communication is decomposed in two local transitions, one for the sender and one for the receiver, where the client $\pid c$ performs a send action, while $\cas$ performs the corresponding receive action. 
    The aggregate semantics derives the communication transition as follows:

    $$
    \inferrule*[right=agg-com]{
        \inferrule*[right=$\pid c$ send]{ }{
            \cccom c{\creds}{\cas}{x} \seq\cCp \localsem[c]{\PRsend[\creds]{c}{\cas}} \cCp
        }
    \\
        \inferrule*[right=$\cas$ rec]{ }{
            \cccom c{\creds}{\cas}{x}\seq\cCp \localsem[\cas]{\PRrecv[x]{c}{\cas}} \cCp
        }
    }{
        \cC \aggsem{\TRcom[\creds]{c}{\cas}{x}} \cCp
    }
    $$

    From $\cCp=\cifte{\cas}{\valid(x)}{\cC[ok]}{\cC[ko]}$, we have two possible outcomes: the positive one, where the credentials are valid, and the negative one, where they are not. 
    In both cases, the transition is triggered by the process $\cas$ and involves only $\cas$ in the aggregate semantics.
    Below the derivation of the positive outcome, and the negative one is analogous (but with $\TRelse[\cas]{\valid(x)}$ instead of $\TRthen[\cas]{\valid(x)}$ and $\cC[ko]$ instead of $\cC[ok]$).
    
    $$
    \inferrule*[right=local] {      
        \inferrule*[right=$\pid{\cas}$ cond-then]{ }{
            \cifte{\cas}{\valid(x)}{\cC[ok]}{\cC[ko]}
            \localsem[\cas]{\TRthen[\cas]{\valid(x)}}
            \cC[ok]
        }
    }{ 
        \cifte{\cas}{\valid(x)}{\cC[ok]}{\cC[ko]}
        \localsem[\cas]{\TRthen[\cas]{\valid(x)}}
        \cC[ok]
    }
    $$

    Then, the continuation $\cC[ok]$ performs the selection with label
    $\lbl{ok}$, involving both the sender $\pid{\cas}$ and the receiver $\pid c$ of the label.
    This transition in the aggregate semantics is a \textsc{agg-sel}, which is derived by combining the local selection send of $\pid{\cas}$ and the local selection receive of $\pid c$ in the local semantics:

    $$
    \inferrule*[right=agg-sel]{
        \inferrule*[right=$\cas$ sendl]{  }{  
            \cC[ok]
            \localsem[\cas]{\PRlsend[ok]{\cas}{c}}
            \cCp[ok]
        }
        \ \
        \inferrule*[right=$\pid c$ recl]{  }{    
            \cC[ok]
            \localsem[c]{\PRlrecv[ok]{\cas}{c}}
            \cCp[ok]
        }
    }{
        \cC[ok]
        \aggsem{\TRsel[ok]{\cas}{c}}
        \ccsel{\cas}{c}{l} \seq \cCp[ok]
    }
    $$
    where we are letting $\cCp[ok]$ such that $\cC[ok]=\ccsel{\cas}{c}{l} \seq\cCp[ok]$.

    After another selection with label $\lbl{ok}$ between $\pid{\cas}$ and $\pid s$, we end up with a choreography 
    $$
    \cC[tail]=\cccom{s}{\token()}{c}{t} \seq \cccom{\cas}{\okval}{\logger}{\resvar} \seq \cnil 
    \qquad. 
    $$
    The reader familiar with the semantics of choreographies may recognise a scenario where both communications can be performed: the second one could be performed out of order by using the rule \textsc{delay} to delay the first communication, obtaining the choreography $\cCp[tail]=\cccom{s}{\token()}{c}{t} \seq \cnil$.
    The same translation in the aggregate semantics can be derived as follows:

    \[\scalebox{.76}{$
    \inferrule*[right=agg-com]{
        \inferrule*[right=$\cas$ delay] {
            \inferrule*[right=$\cas$ send]{  }{
                \cccom{\cas}{\okval}{\logger}{\resvar} \seq \cnil
                \localsem[\cas]{\PRsend[\okval]{\cas}{\logger}}
                \cnil
            }
        \ \ 
            \cas\notin\set{\pid s, \pid c} 
        } 
        {  \cC[tail]
            \localsem[\cas]{\PRsend[\okval]{\cas}{\logger}}
            \cccom{s}{\token()}{c}{t} \seq \cnil
        }
        \ \
        \inferrule*[right=$\logger$ delay] {
            \inferrule*[right=$\logger$ send]{  }{
                \cccom{\cas}{\okval}{\logger}{\resvar} \seq \cnil
                \localsem[\logger]{\PRrecv[\resvar]{\cas}{\logger}}
                \cnil
            }
        \ \ 
            \logger \notin \set{\pid s, \pid c} 
        } 
        {  
            \cC[tail]
            \localsem[\logger]{\PRrecv[\resvar]{\cas}{\logger}}
            \cccom{s}{\token()}{c}{t} \seq \cnil
        }
    }{
        \cC[tail]
        \aggsem{\TRcom[\okval] {\cas}{\logger}{\resvar}}
        \cccom{s}{\token()}{c}{t} \seq \cnil
    }
    $}\]

    The continuation of $\cC[ko]$ is analogous, but simpler.
\end{example}

A natural question is whether the new semantics for choreographies is equivalent to the traditional semantics, that is, whether the transitions of the new semantics can be simulated by the transitions of the traditional semantics, and vice versa.
The following result states that the two semantics are indeed equivalent, in the sense that for every transition of the new semantics there is a corresponding transition of the traditional semantics, and vice versa.

\begin{theorem}\label{thm:agg-trad}
    The traditional semantics and the aggregate semantics for choreographies are equivalent.
    That is, for every choreography $\cC$ and label $\mu\in\Glab$ we have that
    $\cC \tradsem{\mu} \cCp$ if and only if $\cC \aggsem{\mu} \cCp$.
\end{theorem}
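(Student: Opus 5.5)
The plan is to prove the two implications separately, each by induction on the height of the transition derivation, and to treat \textsc{call} uniformly throughout. Guardedness (\Cref{rem:procChor}) guarantees that unfolding a procedure call eventually exposes an instruction or a conditional, so derivations are finite and the induction is well founded. Throughout, I also use the same trick as in \Cref{ex:aggregate}: the premises of the single rule \textsc{local} coincide with the traditional axioms for assignments and conditionals.

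\emph{Traditional $\Rightarrow$ aggregate.} I would prove a stronger statement: if $\cC \tradsem{\mu} \cCp$, then for every $\pid r \in \pnof{\mu}$ there is a local transition $\cC \localsem{\mu_{\pid r}} \cCp$ with the \emph{same} target. Here $\mu_{\pid r}$ is the component of $\mu$ at $\pid r$: the send or receive part for communications and selections, and $\mu$ itself for assignments and conditionals. From this strengthened statement, \textsc{local}, \textsc{agg-com} or \textsc{agg-sel} immediately gives $\cC \aggsem{\mu} \cCp$. The cases are as follows.
\begin{itemize}
\item The axioms \textsc{assign}, \textsc{com}, \textsc{sel} and the two conditional rules map directly to the local axioms for each participant.
\item For \textsc{call}, apply the induction hypothesis and then $\pid r$ \textsc{call}.
\item For \textsc{delay}, the side condition $\pnof{\cI}\cap\pnof{\mu}=\emptyset$ gives $\pid r\notin\pnof{\cI}$ for each $\pid r \in \pnof{\mu}$. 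Hence the induction hypothesis followed by $\pid r$ \textsc{delay} works for every participant, with the same target $\cI\seq\cCp$.
\item For \textsc{delay-cond}, the induction hypothesis on both branches yields local steps $\cC[i]\localsem{\mu_{\pid r}}\cCp[i]$ for $i=1,2$. Since $\pid p\notin\pnof{\mu}$ we have $\pid p\neq\pid r$, so $\pid r$ \textsc{delay-cond} applies. The prediction rules are never needed in this direction.
\end{itemize}

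\emph{Aggregate $\Rightarrow$ traditional.} The \textsc{local} case is a straightforward induction. The single-participant label excludes the prediction rules, since those only produce label receptions, and each local rule is mirrored by its traditional counterpart; for delays, $\pid r\notin\pnof{\cI}$ and $\pnof{\mu}=\set{\pid r}$ give the disjointness condition.

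The real content is a joint lemma for \textsc{agg-com} and \textsc{agg-sel}. Suppose $\cC \localsem[p]{\mu_p} \cCp$ and $\cC \localsem[q]{\mu_q} \cCp$, where $\mu_p$ and $\mu_q$ are the matching send and receive halves of $\mu$. I claim $\cC\tradsem{\mu}\cCp$, and prove it by induction on the sum of the heights of the two derivations, doing case analysis on the shape of $\cC$.
\begin{itemize}
\item If $\cC=\cX$, unfold using \textsc{call} in both derivations.
\item If $\cC=\cI\seq\cC_0$ and $\cI$ is exactly the matching communication or selection, both derivations are axioms and we conclude immediately.
\item If $\cC=\cI\seq\cC_0$ with $\cI$ not the matching instruction, I first show that neither derivation can end with an axiom. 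If $\pid p$ or $\pid q$ were in $\pnof{\cI}$, its derivation would have to use an axiom on $\cI$; that axiom would yield target $\cC_0$ for that participant, while the other participant's derivation (by \textsc{delay} if it is not in $\pnof{\cI}$, or by an axiom with a mismatched label otherwise) would yield either a label mismatch or target $\cI\seq\cCp[0]$, which has a different shape from $\cC_0$. So both derivations use \textsc{delay}, hence $\pnof{\cI}\cap\set{\pid p,\pid q}=\emptyset$, and the induction hypothesis plus traditional \textsc{delay} conclude.
\item If $\cC=\cifte se{\cC[1]}{\cC[2]}$, then $\pid s\notin\set{\pid p,\pid q}$, since otherwise that participant would have to branch and could not emit a send or receive label. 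The sender's derivation must therefore use \textsc{delay-cond}, so its target is a conditional $\cifte se{\cCp[1]}{\cCp[2]}$.
\end{itemize}
If the receiver also uses \textsc{delay-cond}, the induction hypothesis on both branches together with traditional \textsc{delay-cond} concludes.

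\emph{Main obstacle.} The step I expect to be hardest is excluding the remaining case, where a selection receiver uses $\pid q$ \textsc{delay-predict-left} (or \textsc{-right}), so its target is some $\cCp[1]$ obtained from the branch $\cC[1]$. Matching the sender's target would then force $\cCp[1] = \cifte se{\cCp[1]}{\cCp[2]}$. Each side of this equation occurs as a proper subterm of the other, which is impossible for finite terms provided the targets are compared as finite syntax and not up to unfolding of calls. I would make this precise by proving that whenever a derivation does not end in \textsc{call}, its target is syntactically smaller in a suitable measure; alternatively, I would check directly that the targets of local steps never contain the enclosing conditional. If that cannot be established cleanly, the fallback is to refine the joint lemma so that it tracks which rule each derivation uses at every conditional, and then show that predict rules can only pair with predict rules. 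A predict–predict pairing is in turn impossible because the sender never uses a predict rule, since the predict rules only label receptions.
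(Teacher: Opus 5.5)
Your left-to-right direction is correct and matches the paper. The paper proves admissibility of aggregate versions of \textsc{delay}, \textsc{delay-cond} and \textsc{call} by pushing each rule into the local premises, which is your ``same target at every participant'' strengthening. The \textsc{local} and \textsc{agg-com} cases of the converse are also fine.

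The gap is exactly the case you flag: a selection whose receiver $q$ ends in \textsc{delay-predict} at a conditional on $s\notin\{p,q\}$, while the sender $p$ ends in \textsc{delay-cond}. The paper offers no help here, since it dismisses the whole converse as ``similar''. None of your repairs works.

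Once calls are unfolded, local targets need not shrink, and they may contain the enclosing conditional. Take the guarded $X \triangleq p.e\to q.x;\ \mathsf{if}\ s.e'\ \mathsf{then}\ X\ \mathsf{else}\ X$ and $C=\mathsf{if}\ s.e'\ \mathsf{then}\ X\ \mathsf{else}\ X$. The branch $X$ steps at $p$ to $C$ itself. Then $C$ steps at $p$, by a derivation ending in \textsc{delay-cond}, to $\mathsf{if}\ s.e'\ \mathsf{then}\ C\ \mathsf{else}\ C$, and this continues forever. So no well-founded measure decreases along derivations not ending in \textsc{call}. Your claim that each side of $C_1''=\mathsf{if}\ s.e\ \mathsf{then}\ C_1'\ \mathsf{else}\ C_2'$ is a proper subterm of the other has no basis, because $C_1''$ is not a subterm of $C_1'$. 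The fallback ``predict only pairs with predict'' merely restates the goal.

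A smaller point of the same kind: in the $I;C_0$ case, your appeal to $I;C_0'$ having a different shape from $C_0$ is unjustified. There the labels alone suffice. An axiom yielding $p$'s send-to-$q$ label, or $q$'s receive-from-$p$ label, forces $I$ to be an interaction between $p$ and $q$, which mentions both.

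The missing idea is to stop assuming that the receiver's label and the targets match, so that the induction hypothesis can be applied on the branch where $q$ receives the complementary label. Prove instead the following: if $C \xrightarrow{p!\to q[l]}_p C'$ and $C \xrightarrow{p\to ?q[\lambda]}_q C''$ for an \emph{arbitrary} $\lambda$, then $\lambda=l$, $C'=C''$, and $C$ performs the selection $p\to q[l]$ to $C'$ in the traditional semantics.

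The instruction, call and \textsc{delay-cond}/\textsc{delay-cond} cases go through as in your proposal, now deriving rather than assuming equality of labels and targets. Suppose instead that $q$ uses \textsc{delay-predict-left}. Then $C_1 \xrightarrow{p\to ?q[\lambda]}_q C_1''$ and $C_2 \xrightarrow{p\to ?q[\overline{\lambda}]}_q C_2''$, while $p$'s \textsc{delay-cond} provides sender steps labelled $l$ on both branches. The induction hypothesis on the left branches gives $\lambda=l$, and on the right branches gives $\overline{\lambda}=l$. This contradicts $\overline{l}\neq l$; the case \textsc{delay-predict-right} is symmetric.

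So the problematic pairing never arises, and no comparison of targets is ever needed. The same-target premise of \textsc{agg-sel} then holds automatically.
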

\begin{proof}
    We prove both directions by case analysis on the possible transitions of the two semantics.

    We reason by induction on the derivation of a transition $\cC \tradsem{\mu} \cCp$ to show that a transition with the same source ($\cC$), label ($\mu$), and target ($\cCp$) is possible in the aggregate semantics.
    \begin{itemize}
        \item if the derivation of the transition is a single rule \textsc{rule}, then we have the following cases:
        \begin{itemize}
            \item $\textsc{rule}\in\set{\textsc{assignement},\textsc{cond-then},\textsc{cond-else}}$, then the same transition is possible in the aggregate semantics by the rule \textsc{local} that, in turn, is derivable by a rule in the local semantics with the corresponding label in $\set{\textsc{r assignement},\textsc{r cond-then},\textsc{r cond-else}}$.
            
            \item if $\textsc{rule}=\textsc{com}$, then the same transition with label $\mu=\TRcom pqx$ is possible in the aggregate semantics by the rule \textsc{agg-com} that, in turn, has two premises with transitions labeled by $\PRsend pq$ and $\PRrecv pq$ respectively,
            each of which is derivable by rules \textsc{p send} and \textsc{q recv} in the local semantics, respectively.

            \item if $\textsc{rule}=\textsc{sel}$, then the reasoning is similar to the previous case, by using the rule \textsc{agg-sel} instead of \textsc{agg-com}, and \textsc{p sendl} and \textsc{q recvl} instead of \textsc{p send} and \textsc{q recv}, respectively;
        \end{itemize}

        \item otherwise, the derivation of the transition $\cC \tradsem{\mu} \cCp$ starts with a rule $\textsc{rule}=\in\set{\textsc{delay},\textsc{delay-cond},\textsc{call}}$.
        In this case, we can immediately conclude by showing that a corresponding rule of the following form is admissible in the aggregate semantics.
        \[\begin{array}{c}
            \inferrule*[right=agg-delay] { \cC \aggsem{\mu} \cCp \\ \pn(\cI)\disj\pn(\mu)} 
            { \cI\seq \cC \aggsem{\mu}  \cI\seq \cCp}
        \hskip6em
            \inferrule*[right=agg-call]{
                \genProcDef \in \procC\\ 
                \cC \aggsem{\mu} \cCp
            }
            { \cX \aggsem{\mu}  \cCp }
        \\[5pt]
            \inferrule*[right=agg-delay-cond] { \cC[1] \aggsem{\mu}  \cCp[1] \\  \cC[2] \aggsem{\mu}  \cCp[2] \\ \pid{p}\notin \pn(\mu)}
            { \cgenifte \aggsem{\mu}  \cgeniftep }
        \end{array}\]
        
        We show the case for the rule \textsc{agg-delay}, the other cases are similar.
        We proceed by case analysis on the rule used in the derivation of the transition $\cC \aggsem{\mu} \cCp$.
        \begin{itemize}
            \item if there is a transition $\cC \aggsem{\mu} \cCp$ with $\mu \in \set{\TRassign xe , \ \TRthen[r]{e} , \TRelse[r]{e} \mid \ee \in \Eset, x \in \Varset}$ is derivable using the rule \textsc{local} with a premise a (local) transition $\cC \localsem[r]{\mu} \cCp$ derivable in the (local) semantics, then we can construct the following derivation of the transition $\cI\seq\cC \aggsem{\mu} \cI\seq\cCp$:
            $$
            \inferrule*[right=local] {
                \inferrule*[right=$\pid p$ delay] {
                    \cC \localsem[p]{\mu} \cCp
                    \ \ 
                    \qquad
                    \pid p \notin \pnof \mu=\set{\pid r}
                }{
                    \cI\seq \cC \localsem[p]{\mu}  \cI\seq \cCp
                }
            }{
                \cI\seq \cC \aggsem{\mu}  \cI\seq \cCp
            }
            $$
            \item 
            if there is a transition $\cC \aggsem{\TRcom pqx} \cCp$ derivable using the rule \textsc{agg-com} with premises the two (local) transitions $\cC \localsem[p]{\PRsend pq}  \cCp$ and $\cC \localsem[q]{\PRrecv pq}  \cCp$ derivable in the (local) semantics, then we can construct the following derivation of the transition $\cI\seq\cC \aggsem{\TRcom pqx} \cI\seq \cCp$:
            $$
            \inferrule*[right=agg-com] {
                \inferrule*[right=$\pid p$ delay] {
                    \cC \localsem[p]{\PRsend pq}  \cCp
                    \ \ 
                    \qquad
                    \pid p \notin \pnof\cI
                }{
                    \cI\seq \cC \localsem[p]{\PRsend pq}  \cI\seq \cCp
                }
                \ \
                \inferrule*[right=$\pid q$ delay] {
                    \cC \localsem[q]{\PRrecv pq}  \cCp
                    \ \ 
                    \qquad
                    \pid q \notin \pnof\cI
                }{
                    \cI\seq \cC \localsem[q]{\PRrecv pq}  \cI\seq \cCp
                }
            }{
                \cI\seq \cC \aggsem{\TRcom pqx}  \cI\seq \cCp
            }
            $$
            \item the case for the rule \textsc{agg-sel} is similar to the previous one, by using the corresponding rules and labels for selection instead of communication.
        \end{itemize}
    \end{itemize}
    
    The right-to-left direction is similar, taking each transition of the aggregate semantics and showing that it can be simulated by a transition of the traditional semantics.
\end{proof}

Because of this result, in the following we may use the traditional semantics and the aggregate semantics interchangeably, depending on which one is more convenient for the proof we are carrying out.
In particular, we may use notations $\tradsem{}$ and $\aggsem{}$ interchangeably.

\section{The Branch Preorder}\label{sec:EPPnew}

In this section, we introduce the main technical tool we use to prove the EPP theorem, that is, the \emph{branch preorder} relation $\mbb$ between choreographies and networks, capturing the fact that, due to branch prediction, the endpoint projection of a choreography may temporarily expose more local behaviours than the choreography itself.

It is defined by first defining the (local) relations $\mbb[p]$ between choreographies and networks at participant $\pid p$, thus using our newly defined local semantics of choreographies, which intuitively states that the local behaviour of participant $\pid p$ in the network $\nN$ is compatible with the local behaviour prescribed by the choreography $\cC$.
Here, compatibility is understood in a weak sense: the network may offer additional local choices, provided that they correspond to alternative branches of a conditional in the choreography.
Then, the (global) relation $\mbb$ is defined by requiring that $\mbb[p]$ holds for all participants $\pid p$.

\begin{definition}[Branch Preorder]\label{def:more_branch_cn}
    Let $\cC$ be a choreography, and let $\nN$ be a network.
    We say that $\cC$ has \defn{less branches} than $\nN$ \defn{with respect to $\pid p$} (denoted $\cC\mbb[p] \nN$) if 
    \begin{itemize}
        \item\label{def:more_branch_cn0} 
        \defn{\mbrel}:
        every process that is active in $\cC$ is also active in $\nN$.
        That is,
        \[\pid p \in \pnof{\cC} \implies \nN(\pid p) \neq \pnil\]

        \item\label{def:more_branch_cn1} 
        \defn{\mbGtoL}:
        if $\cC$ can make a local $\mu$-transition at $\pid p$ to $\cCp$, then the network $\nN$ can make the same transition to $\nNp$, and the derivatives are still related by $\mbb[p]$. That is, 
        \[\cC \localsem[p]{\mu} \cCp \implies \nN \netsem{\mu} \nNp \quand \cCp\mbb[p] \nNp\]
        
        \item\label{def:more_branch_cn2} 
        \defn{\mbLtoG}:
        if $\nN$ can make a local $\mu$-transition at $\pid p$ to $\nNp$, then either $\cC$ can make the same transition to $\cCp$, or the transition is the reception of a \bllabel that $\cC$ cannot perform, but the complementary \bllabel is available in $\cC$ and can be performed;
        in both cases, the derivatives obtained after performing the transition available to both $\cC$ and $\nN$ are still related by $\mbb[p]$.
        That is, 
        \[
            \nN \netsem{\mu} \nNp
            \text{ with }
            \pnof{\mu} = \set{\pid p}
         \implies 
            \begin{cases}
                \text{either}& \cC \localsem[p]{\mu} \cCp \mbb[p] \nNp 
            \\[4pt]
                \text{or}&
                \cC \not\localsem[p]{\mu} 
                \text{ with }
                \mu = \PRlrecv[$\albl$]{q}{p}
                \text{ , but}
            \\
                &
                \nN \netsem{\overline\mu} \nNpp
            \quand    
                \cC \localsem[p]{\overline\mu} \cCpp\mbb[p] \nNpp
                \text{ with } 
                \overline\mu=\PRlrecv[$\overline\albl$]{q}{p}
            \end{cases}
        \]
    \end{itemize} 
    We say that $\nN$ has \defn{more branches} than $\cC$ (denoted $\cC\mbb \nN$) if $\cC\mbb[p] \nN$  for all $\pid p\in\Pset$.
\end{definition}

Notice that Condition~\ref{def:more_branch_cn1} requires every local transition of $\cC$ at $\pid p$ to be matched by an identical transition of $\nN$, while  Condition~\ref{def:more_branch_cn2} allows $\nN$ to perform certain local transitions that are not immediately available in $\cC$: either the choreography can perform the same transition, or the transition corresponds to the reception of a \bllabel that originates from branch prediction. Then the choreography is required to offer the complementary reception, and the derivatives
obtained after receiving the complementary labels must again be related
by $\mbb[p]$.
As a consequence, $\mbb$ does not require the choreography and the
network to expose exactly the same set of local transitions.
Rather, it allows the network to anticipate branch-dependent behaviour,
while preserving compatibility with the future evolution prescribed by
the choreography.

\begin{example}
    Consider the choreography and network in
    \Cref{eq:more-branch-example}.

    \begin{equation}\label{eq:more-branch-example}
        \begin{array}{l@{\ =\ }l}
        \cC
        &
        \cifte pe{
            \left(\ccsel pq{$\lleft$} \seq \cnil\right)
        }{
            \left(\ccsel pq{$\lright$} \seq \cnil\right)
        }
    \\[8pt]
        \nN
        &
        \pdo[p]{
            \pifte e
            {\left(\psendl[\lleft]{q} \pseq \pnil\right)}
            {\left(\psendl[\lright]{q} \pseq \pnil\right)}
        }
        \ppar
        \pdo[q]{
            \plabels{p}{
                \lleft : \pnil,
                \;
                \lright : \pnil
            }
        }.
        \end{array}
    \end{equation}

    Before participant $\pid p$ resolves the conditional,
    every local transition of $\cC$ is matched by a corresponding
    local transition of $\nN$.
    Assume now that both the choreography and the network select the left branch.
    The choreography evolves to
    $\ccsel pq{$\lleft$} \seq \cnil$,
    while the network evolves to a state where participant
    $\pid q$ still offers both receptions
    $\lleft$ and $\lright$.
    Consequently, the local reception
    $\PRlrecv[$\lright$]{p}{q}$
    is available in the network but not in the choreography.
    Nevertheless, the resulting systems remain related by $\mbb$.
    Indeed, the additional reception corresponds exactly to the
    unselected branch of the original conditional and therefore
    represents a behaviour predicted by the endpoint projection.
    This example illustrates the purpose of
    Condition~\ref{def:more_branch_cn2}, which allows projected
    networks to expose branch-dependent local behaviours that are
    no longer present in the choreography while preserving semantic
    compatibility.
\end{example}

\subsection{Stratified Branch Preorder}

Before proving the main result, we introduce a stratification $\mbbn[n]{}$ of the more-branches relation using the standard construction from fixed-point theory to characterise bisimilarity via successive approximants~\cite[Chapter~2]{S11}.
This stratification yields an inductive characterisation of $\mbb$ (see\Cref{lem:mbb_strat}) allowing us to prove our results by induction (on the stratification of $\mbb$), which is technically more convenient than proving it directly on $\mbb$ using coinduction.

\begin{definition}\label{def:mbb_strat}
    The \defn{stratification} $\mbbn[\omega]{r}$ of the more-branches relation is defined inductively:
    \begin{enumerate}
        \item\label{mbb_strat:0} $\cC \mbbn[0]{r} \nN$ for all $\cC$ and $\nN$;
        \item $\cC \mbbn[n+1]{r} \nN$ if the following hold%
        \footnote{These conditions are the bounded version of the ones of \cref{def:more_branch_cn}, which can be consulted for a more discursive definition.}%
        :
        \begin{itemize}\label{mbb_strat:Step}
            \item\label{def:mbb_strat0} 
            \defn{\mbrel}:
            if $\pid p \in \pnof{\cC}$, then $\nN(\pid p) \neq \pnil$ ;

            \item\label{def:mbb_strat1} 
            \defn{\mbGtoL}:
            if $\cC \localsem[r]{\mu} \cCp$, then $\nN \netsem{\mu} \nNp$ and $\cCp\mbbn{r} \nNp$ ;

            \item\label{def:mbb_strat2} 
            \defn{\mbLtoG}:
            if $\nN \netsem{\mu} \nNp$ with $\pnof{\mu} = \set{\pid r}$, 
            then either $\cC \localsem[r]{\mu} \cCp \mbbn{r} \nNp $, or $\cC$ cannot perform $\mu=\PRlrecv[$\albl$]{q}{r}$ but both $\cC$ and $\nN$ can perform $\overline\mu=\PRlrecv[$\overline\albl$]{q}{r}$, and $\cC \localsem[r]{\overline\mu} \cCpp\mbb[r] \nNpp\conetsem{\overline\mu} \nN$.
        \end{itemize}

        \item\label{mbb_strat:Omega} $\cC \mbbn[\omega]{r} \nN$ if $\cC \mbbn{r} \nN$ for all $n \in \mathbb N$.
        
    \end{enumerate} 
    We say that $\nN$ has \defn{more branches up-to $n$} than $\cC$ (denoted $\cC\mbbn{} \nN$) if $\cC\mbbn{p} \nN$  for all $p\in\Pset$.
\end{definition}

In the following lemmas we prove that $\mbb[r]$ and $\mbbn[\omega]{r}$ are the same relation, using the standard proof technique for showing that a bisimulation relation coincides with its stratification .
The proof in \cite[Chapter~2]{S11} of this result is based on the assumption that the LTSs under consideration are finitely branching.
In our setting we are restricting the scope of the proof to deterministic LTSs, in the sense that given a state $\cC$ \resp{$\nN$} and a label $\mu$, if a transition is defined, then it is unique, since both the one induced by local semantics and network semantics are deterministic.  
The proof of \Cref{lem:mbb_strat} can be extended to finite branching LTS, as in~\cite[Chapter~2]{S11}. 

\begin{lemma}\label{lem:decreasing}
    For every $n \in \mathbb N$, 
    the relation $\mbbn[n+1]{p}$ is smaller of $\mbbn[n]{p}$, that is, $\mbbn[n+1]{p} \subeqw \mbbn[n]{p}$.
\end{lemma}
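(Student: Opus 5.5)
Monotonicity of the approximants is proved in the standard way, by induction on $n$: we show that for every $n$ and all $\cC$, $\nN$, if $\cC \mbbn[n+1]{p} \nN$ then $\cC \mbbn[n]{p} \nN$. Throughout, the process name $\pid p$ is fixed.

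\emph{Base case} ($n=0$). By clause~\ref{mbb_strat:0} of \Cref{def:mbb_strat}, $\mbbn[0]{p}$ relates every pair. Hence $\mbbn[1]{p} \subseteq \mbbn[0]{p}$ holds trivially.

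\emph{Inductive step.} Assume $\mbbn[n+1]{p} \subseteq \mbbn[n]{p}$, and let $\cC \mbbn[n+2]{p} \nN$. We check the three conditions defining $\cC \mbbn[n+1]{p} \nN$.
\begin{enumerate}
\item \emph{Relevance.} This condition does not depend on the index, so it transfers verbatim.
\item \emph{Global to local.} If $\cC \localsem[p]{\mu} \cCp$, the hypothesis gives $\nN \netsem{\mu} \nNp$ with $\cCp \mbbn[n+1]{p} \nNp$. The induction hypothesis turns this into $\cCp \mbbn[n]{p} \nNp$, as required.
\item \emph{Local to global.} Let $\nN \netsem{\mu} \nNp$ with $\pnof{\mu}=\set{\pid p}$. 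There are two cases.
\begin{itemize}
\item In the first disjunct, $\cC \localsem[p]{\mu} \cCp \mbbn[n+1]{p} \nNp$. The induction hypothesis again downgrades this to $\mbbn[n]{p}$.
\item In the branch-prediction disjunct, $\mu$ is a label reception $\PRlrecv[$\albl$]{q}{p}$ that $\cC$ cannot perform, and both sides perform $\overline\mu$, reaching $\cCpp$ and $\nNpp$. The side facts (that $\cC$ cannot do $\mu$, and that both can do $\overline\mu$) are index-independent and carry over unchanged. The derivatives are related at level $n+1$, and the induction hypothesis brings them to level $n$.
\end{itemize}
\end{enumerate}

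The only point needing care is this last sub-case. As printed, \Cref{def:mbb_strat} writes the relation between $\cCpp$ and $\nNpp$ as $\mbb[r]$ rather than the indexed $\mbbn{r}$. I read this as a typo for the indexed relation, by analogy with the other clause and with the footnote saying the conditions are the bounded versions of \Cref{def:more_branch_cn}. Under that reading the argument above goes through uniformly.

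If instead the unindexed $\mbb[r]$ is meant literally, this sub-case is even easier: the condition is then identical at every level $n+1$, so it transfers directly without any appeal to the induction hypothesis.

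Either way the inductive step closes, which gives $\mbbn[n+1]{p} \subseteq \mbbn[n]{p}$ for all $n$. I expect no real obstacle: determinism of the transition systems is not needed for this lemma, which is purely structural.
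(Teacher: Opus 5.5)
Your proof is correct and follows the paper's own argument: induction on $n$, where the base case holds because $\mbbn[0]{p}$ is the total relation and the inductive step brings each derivative relation down one level using the induction hypothesis. Your remark on the unindexed $\mbb[r]$ in the branch-prediction clause is sound under either reading; the paper's one-line proof does not address that point.
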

\begin{proof}
    The proof proceeds by induction on $n$. The base case is immediate, since $\mbbn[0]{p}$ is the total relation. The inductive step follows directly from the definition and the induction hypothesis.
\end{proof}

\begin{lemma}\label{lem:mbb_strat}
    The relations $\mbb[r]$ and $\mbbn[\omega]{r}$ coincide.
\end{lemma}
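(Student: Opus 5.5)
We prove the two inclusions separately, following the textbook argument for bisimilarity and its stratification \cite[Chapter~2]{S11}, and using that both the local semantics of choreographies and the network semantics are deterministic: for a fixed source and label there is at most one target.

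\emph{Inclusion $\mbb[r] \subseteq \mbbn[\omega]{r}$.} We show by induction on $n$ that $\cC \mbb[r] \nN$ implies $\cC \mbbn{r} \nN$.
\begin{itemize}
\item The case $n=0$ is trivial, since $\mbbn[0]{r}$ is the total relation.
\item For $n+1$, relevance is inherited directly from \Cref{def:more_branch_cn}.
\item For global-to-local, a transition $\cC \localsem[r]{\mu} \cCp$ yields $\nN \netsem{\mu} \nNp$ with $\cCp \mbb[r] \nNp$, and the induction hypothesis gives $\cCp \mbbn{r} \nNp$.
\item For local-to-global, we get either $\cC \localsem[r]{\mu}\cCp \mbb[r]\nNp$, and conclude by the induction hypothesis, or the complementary-label case with $\cCpp \mbb[r] \nNpp$, which we handle the same way.
\end{itemize}
(If the clause in \Cref{def:mbb_strat} is read literally with $\mbb[r]$ in the complementary case, that case is immediate.)

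\emph{Inclusion $\mbbn[\omega]{r} \subseteq \mbb[r]$.} We show that $\mbbn[\omega]{r}$ satisfies the three clauses of \Cref{def:more_branch_cn}; since $\mbb[r]$ is the largest relation satisfying them, the inclusion follows. Assume $\cC \mbbn[\omega]{r} \nN$.
\begin{itemize}
\item Relevance follows from $\cC \mbbn[1]{r} \nN$.
\item For global-to-local, let $\cC \localsem[r]{\mu} \cCp$. For every $n$, $\cC \mbbn[n+1]{r} \nN$ gives a transition $\nN \netsem{\mu} \nNp_n$ with $\cCp \mbbn{r} \nNp_n$. By determinism all $\nNp_n$ equal a single $\nNp$, so $\cCp \mbbn{r} \nNp$ for all $n$, that is, $\cCp \mbbn[\omega]{r} \nNp$.
\item For local-to-global, let $\nN \netsem{\mu}\nNp$ with $\pnof\mu=\set{\pid r}$. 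For each $n$ one of the two alternatives holds. Whether $\cC \localsem[r]{\mu}$ holds does not depend on $n$, so the same alternative is selected for every $n$. In the first alternative, determinism fixes $\cCp$ and $\nNp$, and we obtain $\cCp \mbbn{r}\nNp$ for all $n$. In the second alternative, $\overline\mu$ is determined by $\mu$, determinism fixes $\cCpp$ and $\nNpp$, and the same argument applies.
\end{itemize}

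\emph{Main obstacle.} The delicate step is the second inclusion, because we must extract a single derivative that works uniformly for all $n$. In general LTSs this requires finite branching together with a pigeonhole argument, possibly combined with \Cref{lem:decreasing} to pass from infinitely many $n$ to all $n$. Here determinism removes the issue. The only remaining care is to verify that determinism indeed holds for $\localsem[r]{}$. The delay-predict rules could in principle produce two targets for the same label: \textsc{delay-cond} and \textsc{delay-predict} might both apply to a label reception. However, \textsc{delay-predict-left} requires $\cC[2]$ to perform the complementary label $\overline\albl$, whereas \textsc{delay-cond} requires $\cC[2]$ to perform $\albl$. A short induction on derivations shows that a given choreography cannot offer both $\albl$ and $\overline\albl$ receptions from the same sender through incompatible routes. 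We state this determinism as an auxiliary claim and prove it by induction on derivations, noting that if it failed we would fall back on the finite-branching argument of \cite{S11}.
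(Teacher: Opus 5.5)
Your argument follows the paper's proof step by step. Most of it is sound. This includes the first inclusion and the global-to-local clause of the second inclusion. That clause quantifies universally over $\cCp$, so it only needs determinism of the network semantics, which does hold.

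The gap is in the local-to-global clause. There you need one choreography derivative $\cCp$ (resp.\ $\cCpp$) that works for every $n$, and you obtain it from determinism of $\localsem[r]{}$. That property is false, even though the paper asserts it too, and the auxiliary claim you plan to prove by induction fails. Here is a counterexample:
\begin{itemize}
\item Take $\pid p,\pid s,\pid q$ all different from $\pid r$.
\item Let $\cC[2] = \cifte se{(\ccsel qr{l}\seq\cnil)}{(\ccsel qr{r}\seq\cnil)}$ and $\cC = \cifte pe{(\ccsel qr{l}\seq\cnil)}{\cC[2]}$.
\item Let $\mu=\PRlrecv{q}{r}$ with $\albl=\lleft$.
\end{itemize}
By \textsc{delay-predict-left} and \textsc{delay-predict-right}, $\cC[2]$ performs both $\mu$ and $\overline\mu$ at $\pid r$, reaching $\cnil$ in both cases. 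Hence $\cC\localsem[r]{\mu}\cifte pe{\cnil}{\cnil}$ by \textsc{delay-cond}, and also $\cC\localsem[r]{\mu}\cnil$ by \textsc{delay-predict-left}. So a choreography can offer both complementary receptions, and a single label can lead to two distinct derivatives.

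What you need is the finite-branching argument you mention only as a fallback, and you must actually establish it. It does hold here, for the following reasons.
\begin{itemize}
\item Every rule for a conditional on some $\pid p\neq\pid r$ has premises on both branches. \textsc{delay} and \textsc{call} each have a single premise, on the continuation or on the procedure body.
\item So, whatever rules are chosen, all derivations of transitions of $\cC$ at $\pid r$ share one underlying tree of subterms. This tree is $\cC$ unfolded through procedure bodies down to the first action of $\pid r$ on each path.
\item This tree is finite whenever some derivation exists. Each node admits at most three rules, and premise labels are $\mu$ or $\overline\mu$. Hence there are finitely many derivations, and thus finitely many $\mu$-derivatives.
\end{itemize}
Now for each $n$ pick $\cCp[n]$ with $\cC\localsem[r]{\mu}\cCp[n]$ and $\cCp[n]\mbbn{r}\nNp$. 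Some $\cCp$ equals $\cCp[n]$ for infinitely many $n$, and \Cref{lem:decreasing} then gives $\cCp\mbbn[m]{r}\nNp$ for all $m$. The complementary alternative is handled the same way, with $\nNpp$ fixed by determinism of the network semantics.
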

\begin{proof}
    We prove both inclusions.
    \begin{itemize}
        \item[$(\subseteq)$]
        We show by induction on $n$ that $\mbb[r] \subeqw \mbbn[n]{r}$ for all $n \geq 0$.
        \begin{itemize}
            \item Case base: the relation $\mbb[r]$ satisfies the same condition as $\mbbn[0]{r}$ plus additional requirements, hence $\mbb[r] \subeqw \mbbn[0]{r}$.

            \item Inductive step: 
            Since $\mbb[r]$ satisfies the clauses of \Cref{def:more_branch_cn}, if $\cC \mbb[r] \nN$, then the induction hypothesis $\cC \mbbn[n]{r} \nN$ immediately yields $\cC \mbbn[n+1]{r} \nN$.
        \end{itemize}

        \item[$(\supseteq)$]
        We show that $\mbbn[\omega]{r}$ satisfies the clauses of \Cref{def:more_branch_cn}. Assume $\cC \mbbn[\omega]{r} \nN$, we show that $\mbbn[\omega]{r}$ satisfies the three conditions of \Cref{def:more_branch_cn}.
        \begin{itemize}
            \item[$\bullet$]\mbrel: trivial by definition of $\mbbn[\omega]{r}$.
            
            \item[$\bullet$]\mbGtoL:
            suppose $\cC \localsem[r]{\mu} \cCp$. For every $n \geq 1$, since $\cC \mbbn[n]{r} \nN$, there exists $\nNp$ such that
            $
                \nN \netsem{\mu} \nNp
                \quad\text{and}\quad
                \cCp \mbbn[n-1]{r} \nNp.
            $
            By determinism of the LTS, $\nNp$ is unique; hence the same $\nNp$ works for all $n$, yielding $\cCp \mbbn[\omega]{r} \nNp$.

            \item[$\bullet$]\mbLtoG:
            suppose $\nN \netsem{\mu} \nNp$ with $\pnof{\mu} = \set{\pid r}$. For every $n \geq 1$, since $\cC \mbbn[n]{r} \nN$, either:
            \begin{enumerate}
                \item
                $\cC \localsem[r]{\mu} \cCp$
                and
                $\cCp \mbbn[n-1]{r} \nNp$; or

                \item
                $\mu = \PRlrecv[\albl]{q}{r}$,
                $\cC$ has no $\mu$-transition,
                and there exist $\cCpp,\nNpp$ such that
                $
                    \cC \localsem[r]{\PRlrecv[$\overline{\albl}$]{q}{r}}
                        \cCpp[]$ ,
                $
                    \nN 
                    \netsem{
                        \PRlrecv[$\overline{\albl}$]{q}{r}
                    }
                        \nNpp[] ,
                $
                and
                $
                    \cCpp \mbbn[n-1]{r} \nNpp.
                $
            \end{enumerate}
        \end{itemize}
        Determinism guarantees uniqueness of the corresponding successor states, 
        and since the two cases are mutually exclusive, exactly one of the two cases holds for all $n$, hence, the same successors satisfy the
        required relations for all $n$. Therefore, $\mbbn[\omega]{r}$ satisfies \Cref{def:more_branch_cn}.
    \end{itemize}
\end{proof}

\subsection{A choreography $\mbb$-smaller than a network has local progress}\label{sec:MBandEPP}

A key property that we use in our new proof of the EPP theorem is that we can use the local semantics of choreographies to check the local behaviour of the network at each participant.
Before stating this result, we need an intermediate result that ensures that we can check whether $\cC\mbb\nN$ by only checking the local behaviour of $\cC$ and $\nN$ at each participant, without having to check the global behaviour of $\cC$ and $\nN$.

\begin{theorem}\label{thm:mb_locality}
    $\cC \mbb \nN$ if and only if $\cC \mbb \pdo{\nN(\pid p)}$ for all $\pid p \in \Pset$.  
\end{theorem}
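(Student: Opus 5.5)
The plan is to prove the statement participant-wise: for every $\pid r\in\Pset$,
\[
  \cC \mbb[r] \nN \quad\text{if and only if}\quad \cC \mbb[r] \pdo[r]{\nN(\pid r)} .
\]
The theorem then follows by quantifying over all $\pid r$. I read the right-hand side of the statement this way, i.e.\ each participant is compared only with its own atomic component. Under the literal reading with a single $\pid p$, the atomic network maps every other name to $\pnil$, so relevance at those names could not hold in general; the participant-wise reading is the one the later modular use (relating $\cC$ to $\pdo{\proj[p]{\cC}}$) needs.

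Rather than building a coinductive witness, I will use \Cref{lem:mbb_strat} and show, by induction on $n$, that $\cC \mbbn[n]{r} \nN$ iff $\cC \mbbn[n]{r} \pdo[r]{\nN(\pid r)}$, for all $\cC$ and $\nN$ simultaneously. The case $n=0$ is trivial. For the step, I compare the three clauses of \Cref{def:mbb_strat} on the two sides.

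\emph{Relevance.} I read this clause as concerning the designated participant $\pid r$, which is the only name the local relation is about. Then it is literally the same condition on both sides, since $\pdo[r]{\nN(\pid r)}(\pid r) = \nN(\pid r)$.

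\emph{Global to local and local to global.} Every label of a local transition $\cC \localsem[r]{\mu}\cCp$ satisfies $\pnof{\mu}=\set{\pid r}$ by the definition of $\Llab$. The \mbLtoG clause only concerns network transitions with $\pnof{\mu}=\set{\pid r}$, and this includes the complementary label $\overline\mu$. Hence both clauses only involve transitions where \Cref{lem:net_sem} applies. That lemma gives $\nN\netsem{\mu}\nNp$ iff $\pdo[r]{\nN(\pid r)}\netsem{\mu}\pdo[r]{\nNp(\pid r)}$.

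I also need a small sharpening of \Cref{lem:net_sem}, read off from rules \textsc{par-left}/\textsc{par-right}: $\nNp$ coincides with $\nN$ outside $\pid r$. So the derivative of the atomic network is exactly the atomic network of the derivative, $\pdo[r]{\nNp(\pid r)}$. With this, each matching transition on one side yields one on the other, and the derivatives are pairs $(\cCp,\nNp)$ versus $(\cCp,\pdo[r]{\nNp(\pid r)})$. The induction hypothesis at level $n$ then transfers $\mbbn[n]{r}$ between them.

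I expect the main obstacle to be bookkeeping rather than mathematics. First, the relevance clause must be fixed precisely: the bound name in \Cref{def:more_branch_cn} clashes with the subscript, and the participant-wise reading above is needed. Second, \Cref{lem:net_sem} as stated has a typo in its conclusion; I will restate it with the right target, $\pdo{\nNp(\pid p)}$, and justify the "unchanged elsewhere" fact by induction on the derivation of the network transition. Once these are in place, the branch-prediction alternative in \mbLtoG goes through in exactly the same way as the direct case, by applying the same correspondence to $\overline\mu$.
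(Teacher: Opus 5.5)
Your proposal is correct and is essentially the paper's argument. The paper likewise uses \Cref{lem:mbb_strat} to reduce the theorem to the participant-wise stratified equivalence $\cC \mbbn[n]{p} \nN$ iff $\cC \mbbn[n]{p} \pdo{\nN(\pid p)}$, which is \Cref{lem:mb_net}.\ref{lem:mb_net3}, proved by induction on $n$ through \Cref{lem:net_sem}, with relevance treated as trivial (i.e.\ read at the designated participant, as you do). The only minor difference is that the paper gets the right-to-left direction from the weakening property \Cref{lem:mb_net}.\ref{lem:mb_net2}, applied with $\nN = \pdo{\nN(\pid p)} \ppar \nM$, whereas you obtain both directions in a single induction from the two-way correspondence of network transitions.
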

\begin{proof}
    By \Cref{lem:mbb_strat} and by definition of $\mbb$,
    it suffices to check that $\cC \mbbn[n]{p} \nN$ holds if and only if $\cC \mbbn[n]{p} \pdo{\nN(\pid p)}$ holds, for all $\pid p \in \Pset$ and for all $n \geq 0$.
    Then we conclude by \Cref{lem:mb_net}.\ref{lem:mb3}.
\end{proof}

\begin{lemma}\label{lem:mb_wf_merge}
    Let \emph{$\cC = \cifte pe{\cC[1]}{\cC[2]}$} be a choreography, $\nN$ a network such that $\cC\mbb[r] \nN$ for any $\pid r$ such that $\pid r\neq \pid p$, and $i, j \in \set{1,2}$ with $i \neq j$.
    If $\cC[i] \localsem[r]{\mu} \cCp[i]$, then 
    \begin{itemize}
        \item either
        $\cC[j] \localsem[r]{\mu} \cCp[j]$;
        
        \item 
        or 
        $\mu = \PRlrecv[$\albl$]{q}{r}$
        and 
        $\cC[j] \notlocalsem[r]{\mu}$, but $\cC[j] \localsem[r]{\PRlrecv[$\overline{\albl}$]{q}{r}} \cCp[j]$.
    \end{itemize}

\end{lemma}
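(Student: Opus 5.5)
The plan is to read off the behaviour of the two branches at $\pid r$ from a single transition of the network at $\pid r$, exploiting that $\pid r \neq \pid p$. Then $\cC$ can only move at $\pid r$ via \textsc{$\pid r$ delay-cond} or one of the two \textsc{$\pid r$ delay-predict} rules. The argument goes through the condition $\cC \mbb[r] \nN$ (so the statement is only meaningful for $\pid r \neq \pid p$).

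First, assume $\cC[i] \localsem[r]{\mu} \cCp[i]$ and suppose, towards the dichotomy, that $\cC[j] \notlocalsem[r]{\mu}$. I will show that $\mu$ is a label reception and that $\cC[j]$ performs the complementary reception.

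The obstacle is that $\cC$ itself need not perform $\mu$, since delay-cond requires both branches to move. So the network cannot be reached through the global-to-local clause, and a different route is needed. My first attempt is to use the fact that for the relations of interest the network's behaviour at $\pid r$ dominates both branches. Concretely, I would argue by induction on the derivation of $\cC[i] \localsem[r]{\mu} \cCp[i]$, simultaneously with a structural analysis of $\cC[j]$, using the well-formedness information carried by $\mbb[r]$. That information says $\nN(\pid r)$ is a single deterministic process term that must match, via local-to-global, every move of $\cC$.

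The induction runs over the structure of $\cC[i]$ and $\cC[j]$:
\begin{itemize}
\item If both start with instructions not involving $\pid r$, they are stripped by delay.
\item If the first $\pid r$-relevant actions differ and are not both label receptions from the same $\pid q$, then $\cC$ has no transition at $\pid r$ at all. In that case relevance forces $\nN(\pid r) \neq \pnil$, and Lemma~\ref{lem:local_progress_net} gives a transition of $\nN$ at $\pid r$. By local-to-global this transition must come from $\cC$ or from a complementary label reception of $\cC$, and both are impossible. This contradiction rules out the case.
\item If both are label receptions from $\pid q$ with different labels, we land exactly in the second alternative.
\item If they agree, then $\cC[j]$ performs $\mu$.
\end{itemize}

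For nested conditionals inside $\cC[i]$ (rules delay-cond and delay-predict), I apply the induction hypothesis to the inner branches and combine the results. The same contradiction argument is available in the inner cases, because it only uses the outer $\cC \mbb[r] \nN$.

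The main obstacle I expect is making precise the claim that "the first $\pid r$-relevant actions of $\cC[i]$ and $\cC[j]$ are compared". The branches can interleave delays and nested conditionals, so the comparison must proceed by induction on derivations rather than on the syntax. I would handle this by proving an auxiliary statement: if $\cC$ has no local transition at $\pid r$ while $\pid r \in \pnof{\cC}$, then $\cC \mbb[r] \nN$ fails. This reduces every mismatch case to a single contradiction.
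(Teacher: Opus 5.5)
Your proposal is correct and is essentially the paper's argument. If the dichotomy fails, $\cC$ has no local transition at $\pid r$; yet relevance gives $\nN(\pid r)\neq\pnil$, Lemma~\ref{lem:local_progress_net} gives a network move at $\pid r$, and so the local-to-global clause of $\cC \mbb[r] \nN$ is violated (the paper cites the global-to-local clause here, but the one you use is the one that actually applies). The only difference is that the paper asserts the stuckness of $\cC$ at $\pid r$ directly ``by definition of local semantics''. You instead plan to justify it by an induction comparing the $\pid r$-behaviour of the two branches, which is the step the paper leaves implicit.
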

\begin{proof}
    We proceed by contradiction. Suppose that $\cC[j]$ admits no transition labelled 
    by $\mu$ at $\pid{r}$, and that if $\mu = \PRlrecv[$\albl$]{q}{r}$ then $\cC[j]$ 
    also admits no transition labelled by $\PRlrecv[$\overline{\albl}$]{q}{r}$ at $\pid{r}$. By definition of local semantics (see \Cref{fig:new-semantics}) 
    we have that $\cifte pe{\cC[1]}{\cC[2]}$ is stuck at $\pid r$.
    By \Cref{def:more_branch_cn}.\ref{def:more_branch_cn0} sice $\pid r \in \pnof{\cifte pe{\cC[1]}{\cC[2]}}$ then $\nN(\pid r) \neq \pnil$, 
    hence by \Cref{lem:local_progress_net} and by definition of network semantics (see \Cref{fig:network:semantics}) there is $\tilde{\mu}$ such that $\pid r \in \pnof{\tilde{\mu}}$ and 
    $\nN \netsem{\tilde{\mu}} \nNp$ contradicting \Cref{def:more_branch_cn}.\ref{def:more_branch_cn1}.
\end{proof}

To prove that the branch preorder coincides with bisimilarity, we make use of the stratified construction of $\mbb$, which facilitates a more straightforward proof of the result.
We first prove a \emph{local progress} property for choreographies that are $\mbbn[n]{p}$-smaller than a network.
\begin{proposition}[Local progress given a network]\label{prop:local_progress}
    Let $\cC$ be a choreography and \emph{$\pid p \in \pnof{\cC}$}.
    If there is a network $\nN$ such that $\cC \mbbn[n]{p} \nN$, then $\cC \localsem[p]{\mu} \cCp$ for some label $\mu$ and some choreography $\cCp$.
\end{proposition}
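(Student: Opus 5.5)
The plan is to read local progress directly off the \mbLtoG clause of the stratified preorder, without any induction on the structure of $\cC$. The network provides a local step at $\pid p$, and that clause forces $\cC$ to match it, possibly with the complementary \bllabel.

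First, the case $n=0$ must be excluded. By \Cref{def:mbb_strat}.\ref{mbb_strat:0}, $\mbbn[0]{p}$ is the total relation, so for $n=0$ the hypothesis says nothing. The statement genuinely needs $n \geq 1$: the choreography $\cifte pe{(\cloc qxe \seq \cnil)}{(\cloc qy{e'} \seq \cnil)}$ is stuck at $\pid q$, because \textsc{$\pid q$ delay-cond} needs the same label in both branches. I will therefore read the statement with $n \geq 1$. If needed, I will state this explicitly, noting that every later use of the proposition is at a level $n+1$, or on $\mbbn[\omega]{p}$, which is contained in $\mbbn[1]{p}$.

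So assume $\cC \mbbn[n]{p} \nN$ with $n \geq 1$, and unfold one level of \Cref{def:mbb_strat}.
\begin{enumerate}
\item By the \mbrel clause, $\pid p \in \pnof{\cC}$ gives $\nN(\pid p) \neq \pnil$.
\item By \Cref{lem:local_progress_net}, there are a network $\nNp$ and a label $\mu$ with $\pnof{\mu} = \set{\pid p}$ such that $\nN \netsem{\mu} \nNp$. This lemma relies on guardedness of the process definitions, so a process variable always unfolds to a term with an initial action.
\item Apply the \mbLtoG clause to this transition. In the first case, $\cC \localsem[p]{\mu} \cCp$ for some $\cCp$, and we are done. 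In the second case, $\mu = \PRlrecv[$\albl$]{q}{p}$, and the clause itself provides $\cC \localsem[p]{\overline\mu} \cCpp$ with $\overline\mu = \PRlrecv[$\overline\albl$]{q}{p}$. So $\cC$ again makes a local step at $\pid p$, this time with label $\overline\mu$.
\end{enumerate}
In both cases we obtain a label and a derivative, which is exactly the conclusion.

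I expect no real obstacle in this argument; the only delicate point is a bookkeeping check on labels. Network transitions with $\pnof{\mu} = \set{\pid p}$ are produced by the single-process rules of \Cref{fig:network:semantics}. Their labels must coincide syntactically with the local labels in $\Llab$ used by $\localsem[p]{}$, so that the \mbLtoG clause applies literally. The labels are the send, receive, label-send, label-receive, assign and conditional labels at $\pid p$. I would check this against the two figures, and note that the global labels produced by \textsc{com} and \textsc{sel} mention two processes and are therefore excluded by the condition $\pnof{\mu} = \set{\pid p}$.
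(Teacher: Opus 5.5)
Your proof is correct and follows the paper's own argument. The relevance clause gives $\nN(\pid p)\neq\pnil$, \Cref{lem:local_progress_net} then yields a network step at $\pid p$, and the local-to-global clause converts that step into a local step of $\cC$ at $\pid p$, labelled either $\mu$ or its complementary \bllabel.

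Two of your refinements are worth keeping. First, the restriction to $n \geq 1$ is genuinely needed, and your counterexample for $n=0$ is valid; the paper's proof assumes $n \geq 1$ tacitly, since it invokes the relevance clause, which exists only at successor levels. Second, you invoke the right clause, whereas the paper's proof cites the global-to-local clause, which goes in the wrong direction.
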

\begin{proof}
 By \Cref{def:mbb_strat}.\ref{def:mbb_strat0} $\nN(\pid p) \neq \pnil$ then by \Cref{lem:local_progress_net} there are $\mu'$ and $\nNp$ such that $\nN \netsem{\mu'} \nNp$.
 We conclude by \Cref{def:mbb_strat}.\ref{def:mbb_strat1}.
\end{proof}
Note that the assumption $\cC \mbbn[n]{p} \nN$ is crucial, as the fact that $\pid p$ is a process in $\pnof{\cC}$ alone is insufficient
to establish the result.
For instance, in $\cifte qe{\left(\cloc pxf \seq \cnil\right)}{\cnil}$, the name $\pid p$ occurs in the choreography, but the conditional is controlled by $\pid q$, and there is no local transition available at $\pid p$ before the choice is resolved.

\begin{corollary}\label{cor:well_guardedness}
    Let \emph{$\cC = \cifte pe{\cC[1]}{\cC[2]}$} be a choreography and $\nN$ a network such that $\cC \mbb[r] \nN$ for a $\pid r\neq \pid p$. Then \emph{$\pid r \in \pnof{\cC[1]}$} if and only if \emph{$\pid r \in \pnof{\cC[2]}$}.
\end{corollary}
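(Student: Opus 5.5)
We prove the nontrivial direction by contraposition-free reasoning from local progress; by symmetry of the statement in $\cC[1]$ and $\cC[2]$ it suffices to show that $\pid r \in \pnof{\cC[1]}$ or $\pid r \in \pnof{\cC[2]}$ implies $\pid r \in \pnof{\cC[1]} \cap \pnof{\cC[2]}$. The plan is to exploit the fact that, at a process $\pid r$ different from the one controlling the conditional, the local semantics of \Cref{fig:new-semantics} has no rule that looks at only one branch.

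First, assume $\pid r \in \pnof{\cC[i]}$ for some $i$. Since $\pnof{\cC} = \set{\pid p} \cup \pnof{\cC[1]} \cup \pnof{\cC[2]}$, we get $\pid r \in \pnof{\cC}$. From $\cC \mbb[r] \nN$ and \Cref{lem:mbb_strat} we obtain $\cC \mbbn[n]{r} \nN$ for every $n$, in particular for some $n \geq 1$. \Cref{prop:local_progress} then yields a local transition $\cC \localsem[r]{\mu} \cCp$ for some $\mu$ and $\cCp$. Alternatively, and equally directly, relevance gives $\nN(\pid r)\neq\pnil$, \Cref{lem:local_progress_net} gives a network transition at $\pid r$, and the local-to-global clause gives a choreography transition at $\pid r$ (possibly on the complementary label).

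Second, we inspect which rules can derive $\cC \localsem[r]{\mu} \cCp$ for $\cC = \cifte pe{\cC[1]}{\cC[2]}$ with $\pid r \neq \pid p$. The rules \textsc{$\pid r$ cond-then} and \textsc{$\pid r$ cond-else} require the guard to be at $\pid r$, so they are excluded. The remaining candidates are \textsc{$\pid r$ delay-cond}, \textsc{$\pid r$ delay-predict-left}, and \textsc{$\pid r$ delay-predict-right}. Each of these has premises giving a local $\pid r$-transition from both $\cC[1]$ and $\cC[2]$.

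Third, we need an auxiliary fact: if $\cC' \localsem[r]{\mu'} \cC''$, then $\pid r \in \pnof{\cC'}$. This is proved by induction on the derivation.
\begin{itemize}
\item Axioms (assign, send, rec, sendl, recl, cond-then, cond-else) explicitly mention $\pid r$ in the head instruction or guard.
\item For delay and the three conditional rules, $\pid r$ occurs in a subterm by the induction hypothesis, hence in $\cC'$.
\item For call, $\cX$ with $\genProcDef$ is handled using the axiom that $\pnof{\cX}$ coincides with $\pnof{\cC}$ for the defining body, which is what the computed $\pn$ of \cite{CMP23} guarantees.
\end{itemize}
Applying this fact to the two branch premises gives $\pid r \in \pnof{\cC[1]}$ and $\pid r \in \pnof{\cC[2]}$, which concludes the proof.

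The main obstacle I expect is this auxiliary fact in the \textsc{call} case. It requires making explicit the assumption that the axiomatically given $\pnof{\cX}$ is consistent with the bodies of procedure definitions, that is, that $\pnof{\cX} \supseteq \pnof{\cC}$ for $\genProcDef$. I would state this as a standing well-formedness assumption on the finitely closed set of procedures. The case analysis on rules is routine, since determinism of which rules apply at a conditional follows by inspection of \Cref{fig:new-semantics}.
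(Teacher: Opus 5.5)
Your proof is correct and follows essentially the paper's route: local progress of $\cC$ at $\pid r$ (\Cref{prop:local_progress}, applied with $n\geq 1$ as you rightly note) together with the observation that, for $\pid r\neq\pid p$, a conditional moves at $\pid r$ only when both branches do, which the paper extracts via \Cref{lem:mb_wf_merge} and you obtain by direct inspection of the rules. Your auxiliary fact (a local transition at $\pid r$ forces $\pid r$ to occur in the source) and your consistency assumption $\pnof{\cX}\supseteq\pnof{\cC}$ for $\genProcDef$ in the \textsc{call} case are left implicit in the paper but are genuinely needed, since with an inconsistent axiomatic $\pnof{\cX}$ the statement can fail.
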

\begin{proof}
    Follows directly by \Cref{prop:local_progress} and \Cref{lem:mb_wf_merge}.
\end{proof}

We can now, thanks to the following technical lemmas, prove the main theorem of this section stating that the branch preorder $\mbb$ coincides with bisimilarity.
The proofs are quite technical and rely on the definition of $\mbbn{}$ and on the properties of local semantics, requiring tedius case analysis and inductions.
To avoid distracting the reader from the main ideas of the proof, we have moved these technical lemmas and their proofs in \Cref{app:proofs}.

\begin{restatable}{lemma}{lemmaMBbasic}\label{lem:mb}
    Let $\cC$, $\cC[1]$, and $\cC[2]$ be choreographies, 
    $\nN$ a network.
    The following hold for any $n \geq 0$ and $\pid p, \pid r \in \Pset$ with $\pid p \neq \pid r$:
    \begin{enumerate}
        \item\label{lem:mb1} If $\cC \mbbn{r} \nN$ and $\nN \netsem{\mu} \nNp$ with $\pnof{\mu} = \set{\pid p}$,
        then $\cC \mbbn{r} \nNp$.

        \item\label{lem:mb2}
        \emph{$\cifte pe{\cC[1]}{\cC[2]} \mbbn{r} \nN$} if and only if
        $\cC[i] \mbbn{r} \nN$ for both $i \in\set{1,2}$.

        \item\label{lem:mb3} $\cI \seq \cC \mbbn{r} \nN$, for \emph{$\pid r \notin \pnof{\cI}$} if and only if
        $\cC \mbbn{r} \nN$.
    \end{enumerate}
\end{restatable}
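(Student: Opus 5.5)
All three items will be proved by induction on $n$, following the clauses of \Cref{def:mbb_strat}. For $n=0$ every statement is trivial, since $\mbbn[0]{r}$ is the total relation. For the step $n+1$ we check the three clauses (\mbrel, \mbGtoL, \mbLtoG) separately, using the induction hypothesis on the derivatives. Two structural facts are used throughout. First, network transitions are local to their subject: by \Cref{lem:net_sem}, a transition with $\pnof{\mu}=\set{\pid p}$ only changes $\nN(\pid p)$. Second, transitions at $\pid p$ and at $\pid r$ commute, since the network is a function on process names.

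\emph{Item \ref{lem:mb1}.} Assume $\cC \mbbn[n+1]{r} \nN$ and $\nN \netsem{\mu} \nNp$ with $\pnof{\mu}=\set{\pid p}$, $\pid p\neq\pid r$. Since $\nNp(\pid r)=\nN(\pid r)$, the transitions of $\nNp$ at $\pid r$ are exactly those of $\nN$ at $\pid r$, by \Cref{lem:net_sem}. Every $\pid r$-derivative of $\nNp$ is the $\mu$-derivative of the corresponding $\pid r$-derivative of $\nN$ (commutation). For \mbGtoL, a step $\cC\localsem[r]{\nu}\cCp$ gives $\nN\netsem{\nu}\nNp[1]$ with $\cCp\mbbn{r}\nNp[1]$. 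Then $\nNp\netsem{\nu}\nNp[1]'$, where $\nNp[1]'$ is $\nNp[1]$ after $\mu$, and the induction hypothesis gives $\cCp\mbbn{r}\nNp[1]'$. \mbLtoG is symmetric, including the complementary-label case. For \mbrel, the clause only concerns processes $\pid q$ with $\nN(\pid q)\neq\pnil$. The one delicate point is whether $\nNp(\pid p)$ can become $\pnil$. As stated, the clause ranges over all of $\pnof{\cC}$, but in the bounded, $\pid r$-indexed reading it is only needed for $\pid r$, because it is used only through \Cref{prop:local_progress} at $\pid r$. I would make this reading explicit and treat it as the first potential obstacle.

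\emph{Item \ref{lem:mb3}.} When $\pid r\notin\pnof{\cI}$, the only rule that gives $\pid r$-transitions of $\cI\seq\cC$ is \textsc{$\pid r$ delay}. So $\cI\seq\cC\localsem[r]{\nu}\cI\seq\cCp$ holds exactly when $\cC\localsem[r]{\nu}\cCp$; this is an inversion on \Cref{fig:new-semantics}. \mbGtoL and \mbLtoG then transfer in both directions, and the induction hypothesis relates $\cI\seq\cCp$ to $\cCp$. One subtlety remains: the left-hand derivative is $\cI\seq\cCp$, not $\cCp$, and the induction hypothesis at level $n$ handles precisely this. For \mbrel, $\pnof{\cI\seq\cC}$ exceeds $\pnof{\cC}$ only by the names in $\pnof{\cI}$, which do not include $\pid r$, so the same reading as in item~\ref{lem:mb1} applies.

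\emph{Item \ref{lem:mb2}.} This is the main obstacle, because of \textsc{delay-predict}. The $\pid r$-transitions of $\cifte pe{\cC[1]}{\cC[2]}$ with $\pid r\neq\pid p$ come from two rules.
\begin{itemize}
\item \textsc{delay-cond}: the same label in both branches. This case is direct, and the induction hypothesis relates $\cifte pe{\cCp[1]}{\cCp[2]}$ to $\cCp[i]$ in the appropriate sense.
\item \textsc{delay-predict}: complementary label receptions, which commit to one branch.
\end{itemize}

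($\Leftarrow$) Assume $\cC[1]\mbbn[n+1]{r}\nN$ and $\cC[2]\mbbn[n+1]{r}\nN$.
\begin{itemize}
\item \mbGtoL via \textsc{delay-cond}: both branches are matched by the same network derivative, by determinism, and the induction hypothesis applies.
\item \mbGtoL via \textsc{predict-left}: the target is $\cCp[1]$, which is related to the network derivative by the hypothesis on $\cC[1]$.
\item \mbLtoG: a network step $\nu$ at $\pid r$ is analysed through the hypotheses on $\cC[1]$ and $\cC[2]$. If both branches perform $\nu$, use \textsc{delay-cond}. If only one does, $\nu$ is a label reception and the other branch performs $\overline\nu$, so \textsc{delay-predict} applies. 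If neither performs $\nu$, both branches perform $\overline\nu$ and we fall in the complementary clause.
\end{itemize}

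($\Rightarrow$) Assume $\cifte pe{\cC[1]}{\cC[2]}\mbbn[n+1]{r}\nN$ and derive each $\cC[i]\mbbn[n+1]{r}\nN$. Here I would use \Cref{lem:mb_wf_merge} (in its stratified form) to show that every $\pid r$-step of $\cC[i]$ lifts to a step of the conditional, through either \textsc{delay-cond} or \textsc{delay-predict}. The remaining difficulty is the \mbLtoG clause when the conditional's step came from predict-right while we are examining $\cC[1]$: the network step $\nu$ is matched in $\cC[1]$ only through its complement. I expect this case to need careful bookkeeping of which branch each label selects, and I would handle it by case analysis on the four label combinations.
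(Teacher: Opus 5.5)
Your arguments for item~3 and for item~2 ($\Leftarrow$) coincide with the paper's. Item~3 is inversion on the local rules: only \textsc{$\pid r$ delay} fires on $\cI \seq \cC$ when $\pid r \notin \pnof{\cI}$. Item~2 ($\Leftarrow$) splits the conditional's step into \textsc{delay-cond}, where determinism of network steps merges the two matches, and \textsc{delay-predict}, where the target is one branch and is matched by that branch's hypothesis.

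Item~1 is proved differently. The paper reduces it to locality (\Cref{lem:mb_net}.\ref{lem:mb_net2}--\ref{lem:mb_net3}): $\cC \mbbn{r} \nN$ only depends on $\nN(\pid r)$, which a step at $\pid p \neq \pid r$ leaves untouched. You instead run a direct induction on $n$ via commutation of independent steps. Your route is self-contained and does not need the locality lemma; the paper's is a one-liner once locality is available. Both rely on the $\pid r$-indexed reading of the relevance clause, and you are right to make it explicit. Under the literal reading item~1 already fails at level $1$: take $\cC = \cgenloc \seq \cnil$, $\nN = \pdo{\pgenass \pseq \pnil}$, $\pid r \neq \pid p$, and the assignment step at $\pid p$.

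The gap is in item~2 ($\Rightarrow$), in the local-to-global clause for $\cC[1]$. You stop exactly at the case that needs an idea, and a ``case analysis on the four label combinations'' does not say where the required network match comes from. The missing idea is to obtain that match from the \emph{global-to-local} clause of the conditional, applied to a lifted step of $\cC[1]$. The local-to-global clause of the conditional is used only to learn which step $\cC[1]$ has.

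Concretely, let $\nN \netsem{\nu} \nNp$ at $\pid r$.
\begin{itemize}
\item If $\cC[1] \localsem[r]{\nu} \cCp[1]$, lift this step to the conditional by \textsc{delay-cond} or \textsc{delay-predict-left}, using \Cref{lem:mb_wf_merge}. Apply global-to-local, identify the network derivative with $\nNp$ by determinism, and use the induction hypothesis if the lifted target is a conditional.
\item Otherwise, $\nu$ is a label reception and $\cC[1] \localsem[r]{\overline\nu} \cCpp[1]$. To see this, inspect the rules behind the conditional's local-to-global answer. In the \textsc{delay-predict-right} case it is a premise. If the conditional answers with the complement, its $\overline\nu$-step cannot come from \textsc{delay-predict-right}, since that would need $\cC[1] \localsem[r]{\nu}$.
\item Now lift this $\overline\nu$-step in the same way. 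In the predict-right situation the lift is \textsc{delay-predict-left} with label $\overline\nu$, because $\cC[2] \localsem[r]{\nu}$. Global-to-local then gives $\nN \netsem{\overline\nu} \nNpp$ with $\cCpp[1] \mbbn{r} \nNpp$, directly or by induction hypothesis. This is exactly the complementary alternative for $\cC[1]$.
\end{itemize}
This is how the paper closes the case. Splitting on whether $\cC[1]$ can perform $\nu$ matters, because the complementary alternative is only admissible when it cannot.
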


\begin{restatable}{lemma}{lemmaUgly}\label{lem:ugly}
    Let $\cC$, $\cC[1]$, and $\cC[2]$ be chorographies, $\cX$ a choreography variable, and $\nN$ a network.
    The following hold for any $n \geq 0$, for any $\albl \in \Lset$, and for any $\pid p, \pid q, \pid r \in \Pset$ with $\pid p\neq \pid r \neq \pid q,  $:
    \begin{enumerate}
        \item\label{lem:mb4} 
        $\cC \mbbn[n]{r} \pdo[r]{\pR}$
        ~if and only if~
        $\cI \seq \cC \mbbn[n+1]{r} \pdo[r]{\palpha \pseq \pR}$
        for any instruction $\cI\in\set{\cgenloc, \cgencom, \cgensel}$ and a compatible $\palpha$ such that $\pid r \notin \palpha$;

        \item\label{lem:mb4bis}
        $\cC \mbbn[n]{r} \pdo[r]{\pR[{\albl}]}$
        ~if and only if~
        $\ccsel pr{$\albl$} \seq \cC 
        \mbbn[n+1]{r}
        \pdo[r]{\plabel[\pR]p{\albl}{\Lset}}$;

        \item \label{lem:mb5} 
        $\cC[i] \mbbn{p} \pdo[p]{\pP[i]}$
        ~if and only if~
        \emph{$\cifte pe{\cC[1]}{\cC[2]} \mbbn[n+1]{p}
        \pdo[p]{\pifte e{\pP[1]}{\pP[2]}}$}
        for any $i\in\set{1,2}$;

        \item\label{lem:mb_net4}\label{lem:prefix1} 
        if ~$\cI \seq \cC \mbbn{p} \pdo[p]{\pP}$ and $\pid p \in \pnof{\cI}$, 
        then 
        either 
        $\pP = \palpha \pseq \pP[1]$ 
        with $\palpha$ compatible with $\cI$ and $\pid p \notin \palpha$, 
        or 
        $\cI = \cgensel $ and $\pP= \plabel p{\albl}{\Lset}$;
        
        \item\label{lem:mb_net5}\label{lem:prefix2} 
        if ~\emph{$\cifte pe{\cC[1]}{\cC[2]} \mbbn{p} \pdo[p]{\pP}$}, then \emph{$\pP = \pifte e{\pP[1]}{\pP[2]}$};
        
        \item\label{lem:mb_net6} 
        if ~$\cX \mbbn[n]{} \nN$ and $\genProcDef \in \procC$, then $\cC \mbbn{} \nN$.
    \end{enumerate}
\end{restatable}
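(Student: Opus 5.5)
Every item is proved by induction on $n$, following the stratified \Cref{def:mbb_strat}. The case $n=0$ is trivial for the ``if'' directions, since $\mbbn[0]{r}$ is total. For the ``only if'' directions we use the first $n$ at which the relation carries content, which is $n+1$ in the statements with index $n+1$. In each inductive step we unfold the three clauses (relevance, global to local, local to global). We then compare the local transitions of the choreography at the fixed participant with those of the atomic network. For the network side, \Cref{lem:net_sem} lets us reason only on $\pdo[r]{\nN(\pid r)}$.

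For items \ref{lem:mb4} and \ref{lem:mb4bis} we first note which local transitions are available at $\pid r$. Since $\pid r$ participates in $\cI$, the rule \textsc{$\pid r$ delay} does not apply. So the only local transition of $\cI\seq\cC$ at $\pid r$ is the one given by the axiom for $\cI$, with target $\cC$. Symmetrically, the only transition of $\pdo[r]{\palpha\pseq\pR}$ is the one labelled by the action $\palpha$, with target $\pdo[r]{\pR}$; compatibility is exactly what makes the two labels coincide.

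Global to local and local to global then reduce to $\cC \mbbn[n]{r} \pdo[r]{\pR}$ in both directions. Relevance is immediate because $\palpha\pseq\pR\neq\pnil$.

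For item \ref{lem:mb4bis}, the network can also receive $\overline\albl$ when $\overline\albl\in\Lset$. The choreography cannot, and moreover has no $\albl$-alternative branch. The ``either'' case of local to global therefore only covers $\albl$. For $\overline\albl$ we must check whether $\Lset=\set{\albl}$ is forced, or whether the second disjunct is vacuously blocked. I expect this to need the reading that the projection only produces $\Lset=\set{\albl}$ here, so I would first make that restriction explicit.

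Item \ref{lem:mb5} is analogous. The local transitions of $\cifte pe{\cC[1]}{\cC[2]}$ at $\pid p$ are exactly \textsc{cond-then} and \textsc{cond-else}, since delay-cond requires $\pid p\neq\pid r$. They are matched by the two network rules for conditionals, with derivatives $\cC[i]$ and $\pP[i]$.

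Items \ref{lem:prefix1} and \ref{lem:prefix2} are inversion arguments, proved by contraposition using global to local, which holds for $n\geq 1$. The choreography $\cI\seq\cC$ performs its $\cI$-transition at $\pid p$, so $\pdo[p]{\pP}$ must perform the same label. Case analysis on the network rules of \Cref{fig:network:semantics} (with \textsc{call} unfolded via guardedness) pins down the head of $\pP$. A label $\PRlrecv pq$ can only come from a branching term, which gives the second alternative. A $\TRthen[p]{e}$ label can only come from $\pifte e{\pP[1]}{\pP[2]}$, which gives item \ref{lem:prefix2}.

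Item \ref{lem:mb_net6} follows from the rule \textsc{$\pid r$ call}. The local transitions of $\cX$ coincide with those of $\cC$, including their derivatives, and $\pnof{\cX}=\pnof{\cC}$, so all three clauses transfer verbatim at every $\pid r$.

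The main obstacle is the extra-label subtlety in item \ref{lem:mb4bis}, together with the interaction of the ``or'' clause of local to global with the induction hypothesis. There we must supply the complementary-label derivatives at level $n$; this follows from the induction on $n$ and from \Cref{lem:decreasing}.
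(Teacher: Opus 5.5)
The gap is in item~\ref{lem:mb4bis}. Your handling of items~\ref{lem:mb4}, \ref{lem:mb5}, \ref{lem:prefix1}, \ref{lem:prefix2} and~\ref{lem:mb_net6} matches the paper's proof. The paper also just unfolds the clauses of \Cref{def:mbb_strat} and compares the unique local transitions of the choreography at the participant with those of the atomic network. No induction on $n$ is actually needed, and you are right that items~\ref{lem:prefix1}--\ref{lem:prefix2} only carry content for $n\geq 1$.

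The ``main obstacle'' you single out is not an obstacle, and restricting to a singleton label set $\set{\albl}$ is a wrong turn. The network's extra transition $\mu=\PRlrecv[$\overline{\albl}$]{p}{r}$ is exactly what the second disjunct of the local-to-global clause is for:
\begin{itemize}
\item $\ccsel pr{$\albl$}\seq\cC$ cannot perform $\mu$.
\item Since $\overline{\overline{\albl}}=\albl$, the complementary label is $\overline{\mu}=\PRlrecv[$\albl$]{p}{r}$.
\item The choreography performs $\overline{\mu}$ by \textsc{$\pid r$ recl}, reaching $\cC$.
\item The network performs $\overline{\mu}$ by \textsc{recvLbl}, reaching $\pdo[r]{\pR[\albl]}$.
\item These derivatives satisfy $\cC\mbbn[n]{r}\pdo[r]{\pR[\albl]}$ by hypothesis.
\end{itemize}
So the clause holds outright, with no induction and no appeal to \Cref{lem:decreasing}. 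That disjunct does not ask for a conditional or an ``alternative branch'' in the choreography, only for the complementary reception. The converse direction needs only global-to-local on the \textsc{$\pid r$ recl} step plus determinism. It yields the relation for the selected $\albl$ alone, which is all the statement claims.

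The restriction would also make the item useless where it is needed. In the \textsc{delay} case of the proof of \Cref{thm:main}, item~\ref{lem:mb4bis} is applied to $\nN(\pid r)$, an arbitrary branching term of a network related to the choreography. The branch preorder was designed for exactly the situations where that process still offers both labels while the choreography has already committed to $\ccsel pr{$\albl$}\seq\cC$. Examples are the one after \Cref{def:more_branch_cn} once the conditional has been resolved, or any projection where a merge combined a branching on $\lleft$ with one on $\lright$. So the claim that ``the projection only produces $\set{\albl}$ here'' is false, and once the second disjunct is applied correctly the restriction is not needed.
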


\begin{restatable}{theorem}{thmStratifiedSimu}\label{thm:main}
    Let $\cC$ be a choreography, and let $\nN$ be a network.
    Then, for all $n \geq 0$,
    $$\cC \mbbn{} \nN \quad \implies \quad \cC \bisimn \nN
    \mydot
    $$
\end{restatable}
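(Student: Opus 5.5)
We argue by induction on $n$, proving for every $n$ the implication $\cC \mbbn{} \nN \implies \cC \bisimn \nN$ simultaneously for all pairs $(\cC,\nN)$. The base case $n=0$ is trivial, since $\bisimn[0]$ is the total relation. For the inductive step, assume $\cC \mbbn[n+1]{} \nN$, that is, $\cC \mbbn[n+1]{r} \nN$ for every $\pid r \in \Pset$. We must establish the two clauses of \Cref{def:bisim_strat}. In each clause we show that the derivatives are related by $\mbbn{}$ and then apply the induction hypothesis. By \Cref{thm:agg-trad} we may work with the aggregate semantics $\aggsem{}$ throughout, so every choreography step decomposes into one or two local steps.

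\emph{Choreography to network.} Let $\cC \aggsem{\mu} \cCp$. We split on the last aggregate rule.

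If the rule is \textsc{local}, then $\cC \localsem[p]{\mu} \cCp$ with $\pnof{\mu}=\set{\pid p}$. The global-to-local clause for $\mbbn[n+1]{p}$ gives $\nN \netsem{\mu} \nNp$ and $\cCp \mbbn{p} \nNp$.

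If the rule is \textsc{agg-com} (or \textsc{agg-sel}), we have $\cC \localsem[p]{\PRsend pq} \cCp$ and $\cC \localsem[q]{\PRrecv pq} \cCp$. Global-to-local at $\pid p$ and at $\pid q$ yields the network steps $\nN \netsem{\PRsend pq} \nNp[1]$ and $\nN \netsem{\PRrecv pq} \nNp[2]$. By \Cref{lem:net_sem} these are steps of the atomic components $\pdo{\nN(\pid p)}$ and $\pdo[q]{\nN(\pid q)}$. Rules \textsc{com}/\textsc{sel} together with \textsc{par} then combine them into $\nN \netsem{\mu} \nNp$, where $\nNp$ updates exactly the two components. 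The relation $\cCp \mbbn{p} \nNp$ follows from $\cCp \mbbn{p} \nNp[1]$ by \Cref{thm:mb_locality}: the stratified version states that $\mbbn{p}$ only depends on $\nN(\pid p)$, and the $\pid p$-component of $\nNp$ agrees with that of $\nNp[1]$. The same argument applies at $\pid q$.

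For the remaining $\pid r\notin\pnof{\mu}$, we need $\cCp \mbbn{r} \nNp$ from $\cC \mbbn[n+1]{r} \nN$. Here $\nN(\pid r)=\nNp(\pid r)$, so by locality it suffices to show that $\cC \localsem[]{}$-stepping at other processes preserves $\mbbn{r}$ against the unchanged component. I would prove this auxiliary fact by induction on the derivation of the local step of $\cC$ at $\pid p$. It uses \Cref{lem:mb}.\ref{lem:mb2} and \ref{lem:mb3} to strip or peel instructions and conditionals not involving $\pid r$, \Cref{lem:ugly}.\ref{lem:mb4}/\ref{lem:mb4bis} to pass under prefixes at $\pid r$, and \Cref{lem:ugly}.\ref{lem:mb_net6} for calls. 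Decreasing the index from $n+1$ to $n$ is allowed by \Cref{lem:decreasing}.

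The \textsc{cond-then}/\textsc{cond-else} case at $\pid p$ is the delicate one. For $\pid r\neq\pid p$ we pass from $\cifte pe{\cC[1]}{\cC[2]}$ to $\cC[i]$, and \Cref{lem:mb}.\ref{lem:mb2} gives exactly $\cC[i] \mbbn{r} \nN$.

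\emph{Network to choreography.} Let $\nN \netsem{\mu} \nNp$. If $\pnof{\mu}=\set{\pid p}$, the local-to-global clause at $\pid p$ cannot take the branch-prediction alternative, because $\mu$ is not a label reception. So $\cC \localsem[p]{\mu} \cCp$, and \textsc{local} gives $\cC\aggsem{\mu}\cCp$; the relations for the other processes follow as above.

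If $\mu$ is a communication or selection between $\pid p$ and $\pid q$, decompose it into the local send at $\pid p$ and the receive at $\pid q$. Local-to-global at $\pid p$ gives $\cC \localsem[p]{\PRsend pq} \cCp[1]$. At $\pid q$ we get either $\cC \localsem[q]{\PRrecv pq} \cCp[2]$, or, for selections, the prediction case in which $\cC$ offers only the complementary label $\overline\albl$.

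\textbf{Main obstacle.} The hard part is showing two things: first, $\cCp[1]=\cCp[2]$, so that \textsc{agg-com}/\textsc{agg-sel} applies; second, the prediction case cannot occur when $\pid p$ is actually sending $\albl$. I would prove a lemma by induction on the derivation of the send transition at $\pid p$: in the choreography the matching receive is the same syntactic instruction, reachable by the same delay/delay-cond/call path. The case \textsc{delay-predict} at $\pid q$ is ruled out because it would require a conditional at some $\pid s\ne\pid p$ that $\pid p$'s step had already delayed through. In that situation $\pid p$ sends the same label in both branches, which contradicts the fact that $\pid q$ receives different labels in the two branches.

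Once the choreography step is established, relatedness of the derivatives is obtained exactly as in the first direction, and the induction hypothesis yields $\cCp \bisimn \nNp$.
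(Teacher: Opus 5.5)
Your proposal is correct and follows essentially the paper's proof: induction on $n$, reasoning in the aggregate semantics, extracting steps from the global-to-local and local-to-global clauses at the participants of $\mu$, and transferring the relation to uninvolved processes by induction on a derivation. Your explicit lemma showing that the branch-prediction alternative cannot arise at $\pid q$ when $\pid p$ actually sends the label, and that both local steps reach the same target, is more careful than the paper, which invokes \Cref{lem:mb_wf_merge} for communications and treats selections only with ``similarly''. The one omission is in your toolkit for the auxiliary preservation fact: when the step at $\pid p$ delays through a conditional controlled by $\pid r$ itself, \Cref{lem:mb}.\ref{lem:mb2} does not apply, so you need \Cref{lem:ugly}.\ref{lem:mb_net5} together with \Cref{lem:ugly}.\ref{lem:mb5}, as the paper does in its \textsc{delay-cond} subcase for the controlling process.
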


As a consequence of \Cref{thm:main} and by definition of $\mbb$ and $\bisim$, the proof of the EPP theorem becomes an immediate corolloary

\begin{corollary}\label{cor:main}
    Let $\cC$ be a choreography, and let $\nN$ be a network.
    If $\cC \mbb \nN$, then $\cC \bisim \nN$.
\end{corollary}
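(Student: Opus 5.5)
The corollary follows by chaining the stratifications. Assume $\cC \mbb \nN$, that is, $\cC \mbb[p] \nN$ for every $\pid p \in \Pset$. First, by \Cref{lem:mbb_strat}, each $\mbb[p]$ coincides with $\mbbn[\omega]{p}$. Hence $\cC \mbbn[n]{p} \nN$ for every $n \in \mathbb N$ and every $\pid p$. By the definition of the global stratified relation, this gives $\cC \mbbn[n]{} \nN$ for every $n$.

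Second, apply \Cref{thm:main} at each level $n$, which yields $\cC \bisimn \nN$ for all $n \in \mathbb N$. By the definition of the limit relation in \Cref{def:bisim_strat}, this is exactly $\cC \bisimn[\omega] \nN$. Finally, \Cref{thm:bisim_strat} identifies $\bisimn[\omega]$ with $\bisim$, so $\cC \bisim \nN$.

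The only point that needs care is the last step, because the cited textbook result \cite{S11} is stated for finitely branching LTSs. I would check this hypothesis explicitly. Both the traditional/aggregate choreography semantics (by \Cref{thm:agg-trad}) and the network semantics are deterministic per label. Each state also has only finitely many enabled labels: a choreography has finitely many syntactic redexes reachable through \textsc{delay}, \textsc{delay-cond} and guarded \textsc{call}, and a finitely supported network has finitely many enabled actions.

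So the LTSs are finitely branching, and the inclusion $\bisimn[\omega] \subseteq \bisim$ holds. It can also be shown directly, mirroring the $(\supseteq)$ half of \Cref{lem:mbb_strat}: we show $\bisimn[\omega]$ is itself a bisimulation. Given $\cC \tradsem{\mu} \cCp$, for each $n$ the network answers with $\nN \netsem{\mu} \nNp$ and $\cCp \bisimn[n] \nNp$. By determinism, $\nNp$ is the same for all $n$, so $\cCp \bisimn[\omega] \nNp$. The symmetric clause, where the network moves first, is handled in the same way.
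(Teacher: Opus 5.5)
Your proposal is correct and follows the paper's own route. $\cC \mbb \nN$ gives $\cC \mbbn{} \nN$ for every $n$ through \Cref{lem:mbb_strat}, where only the easy inclusion $\mbb[p] \subseteq \mbbn[n]{p}$ is needed. \Cref{thm:main} then gives $\cC \bisimn \nN$ for all $n$, and \Cref{thm:bisim_strat} identifies $\bisimn[\omega]$ with $\bisim$. Your check that the traditional choreography semantics and the network semantics are deterministic per label, together with the resulting direct proof of $\bisimn[\omega] \subseteq \bisim$, supplies the image-finiteness hypothesis of the textbook result, which the paper leaves implicit.
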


\section{A New Proof of the EPP Theorem}\label{sec:epp_new}

In this section, we use the results about the branch preorder to give a new proof of the EPP theorem, which states that the endpoint projection of a choreography is bisimilar to the original choreography.
For this purpose, we reformulate the result of the previous section by replacing the assumption $\cC \mbb \nN$ with the assumption that $\cC$ is projectable and $\nN$ being the endpoint projection of $\cC$.

\begin{lemma}\label{lem:epp_cond1}
    If $\cC$ is projectable at $\pid p$ for \emph{$\pid p \in \pnof{\cC}$} and $\cC \localsem[p]{\mu} \cCp$ then  
    $\pdo{\proj[p]{\cC}}\netsem{\mu} \pdo{\proj[p]{\cCp}}$.
\end{lemma}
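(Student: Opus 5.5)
We argue by induction on the derivation of the local transition $\cC \localsem[p]{\mu} \cCp$, i.e., by case analysis on the last rule of \Cref{fig:new-semantics} used, with the induction hypothesis available for its premises. Throughout, $\pdo{\proj[p]{\cC}}$ is an atomic network, so its transitions are given directly by the process rules of \Cref{fig:network:semantics}.

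\emph{Axioms.} For \textsc{$\pid p$ assign}, \textsc{$\pid p$ send}, \textsc{$\pid p$ rec}, \textsc{$\pid p$ sendl}, \textsc{$\pid p$ recl}, \textsc{$\pid p$ cond-then} and \textsc{$\pid p$ cond-else}, we unfold the projection in \Cref{fig:projection}. The term $\proj[p]{\cC}$ is, respectively:
\begin{itemize}
\item $\pgenass\pseq\proj[p]{\cCp}$ for an assignment;
\item $\psend qe\pseq\proj[p]{\cCp}$ for a send;
\item $\precv qx\pseq\proj[p]{\cCp}$ for a receive;
\item $\psendl qq\pseq\proj[p]{\cCp}$ for a label send;
\item $\ponelabel[{\proj[p]{\cCp}}]q\albl$ for a label receive;
\item $\pifte e{\proj[p]{\cC[1]}}{\proj[p]{\cC[2]}}$ for a conditional.
\end{itemize}
The matching network rule (\textsc{assign}, \textsc{sendVal}, \textsc{recvVal}, \textsc{sendLbl}, \textsc{recvLbl}, \textsc{cond-then}/\textsc{else}) then yields the required transition to $\pdo{\proj[p]{\cCp}}$.

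\emph{Call.} For \textsc{$\pid p$ call} we have $\cC=\cX$ with $\genProcDef$. Since $\pid p\in\pnof{\cX}$, we get $\proj[p]{\cX}=\pX^{\pid p}$ with $\pX^{\pid p}\triangleq\proj[p]{\cC}$. The induction hypothesis on the premise, together with the network rule \textsc{call}, gives the claim. Here we must check that $\pid p\in\pnof{\cC}$ for the body, which holds since $\pnof{\cX}=\pnof{\cC}$.

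\emph{Delay.} For \textsc{$\pid p$ delay}, with $\pid p\notin\pnof{\cI}$, we have $\proj[p]{\cI\seq\cC}=\proj[p]{\cC}$ and $\proj[p]{\cI\seq\cCp}=\proj[p]{\cCp}$. The induction hypothesis therefore applies directly. Note that $\pid p\in\pnof{\cC}$, since $\cC$ performs a $\pid p$-labelled transition, and an easy auxiliary fact gives that a local transition at $\pid p$ implies $\pid p\in\pn$.

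\emph{Conditionals not owned by $\pid p$.} The main obstacle is \textsc{$\pid p$ delay-cond} and the two \textsc{delay-predict} rules, where $\proj[p]{\cC}=\proj[p]{\cC[1]}\sqcup\proj[p]{\cC[2]}$.
\begin{itemize}
\item For \textsc{delay-cond}, the induction hypothesis gives $\pdo{\proj[p]{\cC[i]}}\netsem{\mu}\pdo{\proj[p]{\cCp[i]}}$ for both $i$. \Cref{lem:merge_and_local}.\ref{lem:merge_local_sem1} then yields a transition to $\pdo{\proj[p]{\cCp[1]}\sqcup\proj[p]{\cCp[2]}}$, which is exactly $\proj[p]{\cifte qe{\cCp[1]}{\cCp[2]}}$.
\item For \textsc{delay-predict-left}, the induction hypothesis gives that $\proj[p]{\cC[1]}$ can do $\PRlrecv q p$ to $\proj[p]{\cCp[1]}$, while $\proj[p]{\cC[2]}$ can do the complementary label instead. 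We then inspect the merge clauses of \Cref{fig:projection}. Both projections must be branchings on $\pid q$: by determinism of the process LTS, a term with a $\overline{\albl}$-reception from $\pid q$ is a branching on $\pid q$ (possibly behind a call, which we unfold). Hence the merged branching offers label $\albl$ with continuation $\proj[p]{\cCp[1]}$, possibly merged with the $\albl$-continuation of $\proj[p]{\cC[2]}$ if that side also offers $\albl$. In the latter subcase, \Cref{lem:merge_and_local}.\ref{lem:merge_local_sem1} rather than \ref{lem:merge_local_sem2} would apply, yielding a merged target. We expect to resolve this by noting that if $\cC[2]$'s projection offered $\albl$, then the choreography $\cC[2]$ could itself perform that reception (an auxiliary induction on projection), contradicting the determinism of local labels implicit in the rule's side premise; alternatively, by \Cref{lem:merge_and_local}.\ref{lem:merge_local_sem2}. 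The right rule is symmetric.

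This merge/prediction analysis, in particular the handling of calls under merge, is where the bulk of the careful work will lie.
\end{itemize}
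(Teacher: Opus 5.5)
\textbf{Assessment.} Your cases for the axioms, \textsc{delay}, \textsc{call} and \textsc{delay-cond} match the paper's proof. The subcase you flag in \textsc{delay-predict-left}, however, is a genuine gap, and neither of your proposed fixes closes it.

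\textbf{Why your fixes fail.} Nothing in that rule says that $\cC[2]$ cannot also perform the $\albl$-reception, and it typically can. A conditional whose branches select different labels performs both receptions, one via each \textsc{delay-predict} rule. So there is no ``determinism of local labels'' to contradict. Your fallback, \Cref{lem:merge_and_local}.\ref{lem:merge_local_sem2}, is unusable for the same reason: its premise $\pdo{\proj[p]{\cC[2]}}\notnetsem{\mu}$ is exactly what fails.

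\textbf{Counterexample.} The statement itself is false in this subcase. Let $\cC = \cifte qe{\cC[1]}{\cC[2]}$ and $\cC[2] = \cifte qf{\cC[3]}{\cC[4]}$, where $\cC[1] = \ccsel qp{$\lleft$}\seq\ccsel qp{$\lleft$}\seq\cnil$, $\cC[3] = \ccsel qp{$\lleft$}\seq\ccsel qp{$\lright$}\seq\cnil$ and $\cC[4] = \ccsel qp{$\lright$}\seq\cnil$. This choreography is projectable, with
\[
\proj[p]{\cC[2]} = \plabels{q}{\lleft : \plabels{q}{\lright : \pnil},\ \lright : \pnil}
\qquad\text{and}\qquad
\proj[p]{\cC} = \plabels{q}{\lleft : \plabels{q}{\lleft : \pnil,\ \lright : \pnil},\ \lright : \pnil}.
\]
By \textsc{delay-predict-right}, $\cC[2]\localsem[p]{\PRlrecv[$\lright$]{q}{p}}\cnil$. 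Then \textsc{delay-predict-left} yields $\cC\localsem[p]{\PRlrecv[$\lleft$]{q}{p}}\ccsel qp{$\lleft$}\seq\cnil$, whose projection at $\pid p$ is $\plabels{q}{\lleft : \pnil}$. But the only $\PRlrecv[$\lleft$]{q}{p}$-transition of $\pdo{\proj[p]{\cC}}$ goes to $\pdo{\plabels{q}{\lleft : \pnil,\ \lright : \pnil}}$. Deriving the same label by \textsc{delay-cond} instead gives a different target, whose projection does match. So the local semantics is not even deterministic per label.

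\textbf{Relation to the paper.} The paper's own proof has the same hole. It cites \Cref{lem:merge_and_local}.\ref{lem:merge_local_sem2} for the \textsc{delay-predict} cases without establishing that lemma's non-availability premise.

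\textbf{Possible repairs.} One option is to add the premise $\cC[2]\notlocalsem[r]{\PRlrecv qr}$ to \textsc{delay-predict-left}, in the rule's own notation, and symmetrically for the right rule. Your auxiliary fact then does the job, provided you strengthen it to an equivalence between $\cC\localsem[p]{\mu}$ and $\pdo{\proj[p]{\cC}}\netsem{\mu}$ for label receptions, proved simultaneously with the lemma. It then supplies $\pdo{\proj[p]{\cC[2]}}\notnetsem{\mu}$, so item~\ref{lem:merge_local_sem2} applies. It would also make the local semantics deterministic per label, as \Cref{lem:mbb_strat} assumes.

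The other option is to keep the rules and weaken the conclusion to some target $\nNp$ with $\cCp\mbb[p]\nNp$, which is all that \Cref{thm:cor_mbstrat_p_proj} consumes. The cost is that the induction in that theorem must then be generalised beyond exact projections.
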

\begin{proof}
    By induction on the derivation of the transition $\cC \localsem[p]{\mu} \cCp$. 
    \begin{itemize}
        \item If the last rule is \cthen[p], \celse[p], \ccassign[p], \csend[p], \crec[p], \csendl[p], \crecl[p]
        the claim follows directly from definition of endpoint projection (see \Cref{fig:projection}) and network semantics. 
        \item If the last rule is a \cdelay[p] then $\cC = \cI \seq \cC[1]$ with $\pid p \notin \pnof{\cI}$. 
        Since $\proj[p]{\cC} = \proj[p]{\cC[1]}$ we conclude by induction hypothesis.
        \item If the last rule is a \delaycond[p] then $\cC = \cifte qe{\cC[1]}{\cC[2]}$ with $\pid q \neq \pid p$ and 
        $\proj[p]{\cC} = \proj[p]{\cC[1]} \sqcup \proj[p]{\cC[2]}$. We conclude by \Cref{lem:merge_and_local}.\ref{lem:merge_local_sem1}.
        \item If the last rule is a \delaypredictl[p] (resp. \delaypredictr[p]) we conclude by \Cref{lem:merge_and_local}.\ref{lem:merge_local_sem2}
        (resp \Cref{lem:merge_and_local}.\ref{lem:merge_local_sem2} and {the commutativity of merge in proposition~\ref{prop:merge}}).
        \item If the last rule is a \ccall[p] we conclude by induction hypothesis.
    \end{itemize}
\end{proof}

The local progress, now becomes a consequence of the projectability of the choreography.

\begin{proposition}[Local progress for projectable choreographies]\label{prop:projectable_local_prog}
    If $\cC$ is projectable at \emph{$\pid p \in \pnof{\cC}$}, 
    then there is $\mu$ such that $\cC \localsem[p]{\mu} \cCp$ for some $\cCp$.
\end{proposition}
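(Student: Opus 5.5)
The plan is structural induction on $\cC$, with an inner induction on procedure unfoldings to handle calls. The invariant is that whenever $\pid p \in \pnof{\cC}$ and $\proj[p]{\cC}$ is defined, some local rule at $\pid p$ applies.

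The base and prefix cases go as follows.
\begin{itemize}
\item For $\cC = \cnil$ the statement is vacuous, since $\pnof{\cnil}=\emptyset$.
\item For $\cC = \cI \seq \cC_1$ with $\pid p \in \pnof{\cI}$, one of the axioms \ccassign[p], \csend[p], \crec[p], \csendl[p], \crecl[p] fires immediately.
\item For $\cC = \cI \seq \cC_1$ with $\pid p \notin \pnof{\cI}$, we have $\pid p \in \pnof{\cC_1}$ and $\proj[p]{\cC} = \proj[p]{\cC_1}$. The induction hypothesis gives $\cC_1 \localsem[p]{\mu} \cCp[1]$, and \cdelay[p] closes the case.
\item For a conditional guarded by $\pid p$ itself, \cthen[p] applies.
\end{itemize}

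The main case is $\cC = \cifte qe{\cC[1]}{\cC[2]}$ with $\pid q \neq \pid p$. Projectability gives that $\proj[p]{\cC[1]} \sqcup \proj[p]{\cC[2]}$ is defined. I first argue that $\pid p$ occurs in both branches. If $\pid p \notin \pnof{\cC[j]}$, then $\proj[p]{\cC[j]}$ is $\pnil$, possibly via calls $\pX$ that project to $\pnil$. Inspecting the merge table, the merge with $\pnil$ is defined only when the other side is $\pnil$ too, which contradicts $\pid p \in \pnof{\cC[i]}$. So both branches mention $\pid p$ and are projectable at $\pid p$, and the induction hypothesis yields transitions $\cC[1] \localsem[p]{\mu_1} \cCp[1]$ and $\cC[2] \localsem[p]{\mu_2} \cCp[2]$. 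By \Cref{lem:epp_cond1} these become transitions of $\pdo{\proj[p]{\cC[i]}}$. I then need a sharper fact: that the branches admit a \emph{common} label, or complementary label receptions from the same sender. To get it, I would strengthen the induction statement so that the first local transitions available at $\pid p$ are determined by the head of $\proj[p]{\cC}$. Concretely, I would prove that for every label $\mu$ with $\pdo{\proj[p]{\cC}} \netsem{\mu}$, some transition $\cC \localsem[p]{\mu'}$ exists with $\mu'=\mu$, or with $\mu'$ the complementary label reception. That is a converse to \Cref{lem:epp_cond1}. With this strengthened statement in hand, the conditional case goes through:
\begin{itemize}
\item Pick any transition of the merged term $\proj[p]{\cC[1]} \sqcup \proj[p]{\cC[2]}$; one exists because the merge is defined and nonempty.
\item \Cref{lem:merge_and_local}.\ref{lem:merge_local_sem5} splits it into two cases: both branches perform $\mu$, or it is a label reception where the branches perform $\mu$ and $\overline\mu$ respectively.
\item Transferring through the strengthened hypothesis, apply \delaycond[p] in the first case and \delaypredictl[p] or \delaypredictr[p] in the second.
\end{itemize}

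For $\cC = \cX$ with $\procDef{\cX}{\cC_X}$, I would use \ccall[p] and reduce to $\cC_X$. This is not a structural subterm, so I would induct on the number of unfoldings needed before $\pid p$ becomes active. Guardedness (\Cref{def:cset-guarded}) ensures each unfolding exposes a nonempty active prefix. It also rules out the case where $\pid p$ occurs only behind calls. The remaining worry is termination when $\pid p$ is never active but merely in $\pnof{\cX}$ through other calls. Here the measure is the minimal call depth at which $\pid p$ becomes active, which is finite by the finite closure of the definitions and the computation of $\pnof{\cX}$.

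The main obstacle is the conditional case. The obstacle is ensuring that the two branches' transitions are compatible rather than merely both existing, which is why I would state and prove the strengthened, converse form of \Cref{lem:epp_cond1} simultaneously by the same induction.
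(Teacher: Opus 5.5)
Your argument is correct, but in the two non-trivial cases it takes a different route from the paper.

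\textbf{Conditional guarded by $\pid q \neq \pid p$.} The paper applies the induction hypothesis to each branch separately. It transfers the two resulting transitions to $\proj[p]{\cC[1]}$ and $\proj[p]{\cC[2]}$ via \Cref{lem:epp_cond1}, then cites \Cref{lem:merge_and_local}.\ref{lem:merge_local_sem5} to conclude that the two labels are equal or complementary label receptions, and closes with \textsc{delay-cond} or \textsc{delay-predict-left}.

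You instead prove, by the same induction, a reflection property: a converse of \Cref{lem:epp_cond1} saying that every transition of $\pdo{\proj[p]{\cC}}$ is matched in $\cC$ by the same label, or, for label receptions, by the complementary one. In the conditional case you start from a transition of the merge and pull it back into both branches.

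This costs a stronger invariant. Checking it in the remaining cases is routine, since the head of $\proj[p]{\cC}$ determines which labels are available. What it buys is rigour. \Cref{lem:merge_and_local}.\ref{lem:merge_local_sem5} is about transitions of the merge, not about two independently chosen transitions of its arguments, so the paper's inference is not justified as written. For instance, if both projections are conditionals at $\pid p$, the hypothesis may return a then-step in one branch and an else-step in the other. Your reflection property is also exactly what the local-to-global clause of \Cref{thm:cor_mbstrat_p_proj} needs; there the paper reasons ad hoc by swapping \textsc{cond-then} and \textsc{cond-else}.

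\textbf{Recursion.} The paper reduces to finite choreographies by replacing variables with $\cnil$. It asserts that guardedness places every participant of $\pnof{\cX[i]}$ in $\apnof{\cC[i]}$, which \Cref{def:cset-guarded} does not provide. A well-founded measure, as you propose, is the sounder approach.

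Two points in your recursion case need tightening.
\begin{enumerate}
\item Guardedness does not rule out $\pid p$ occurring only behind calls. Take a body $\cccom qerx \seq \cX[2]$ with $\pid p$ only in the body of $\cX[2]$. This is precisely the situation your measure is for, so drop that claim.
\item A \emph{minimal} depth over paths would not decrease for the deeper branch of a conditional handled by \textsc{delay-cond}. Use instead the length of the chain of variable heads of $\proj[p]{\cC}$. Merge is undefined between a variable and anything other than the same variable, and between $\pnil$ and a non-$\pnil$ term. Hence projectability forces both branches to have the same head, and so the same depth. \textsc{delay} preserves this depth and \textsc{call} decreases it. It is finite because $\pnof{\cdot}$ is a least fixed point: a cycle of variable heads would contain no occurrence of $\pid p$.
\end{enumerate}
The induction is then lexicographic on (depth, size), with depth outermost, since unfolding may increase size.
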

\begin{proof}
    By structural induction on $\cC$.
    \begin{itemize}
        \item If $\cC = \cnil$, $\pid p \notin \pnof{\cC}$: the statement holds trivially. 
        \item If $\cC = \cI \seq \cC[1]$ there are two cases $\pid p \in \pnof{\cI}$ or otherwise. In the first case the claim follows from definition of local semantics. 
        If $\pid p \in \pnof{\cI}$ then $\pid p \in \pnof{\cC[1]}$. Since $\cC[1]$ is projectable by induction hypothesis there are $\mu$ and $\cCp[1]$ such that
        $\cC[1] \localsem[p]{\mu} \cCp[1]$. We infer $\cC \localsem[p]{\mu} \cCp$ by using a rule \cdelay[p].
        \item If $ \cC =\cifte pe{\cC[1]}{\cC[2]}$ by definition of projection and merge (see \Cref{fig:projection}) $\pid p \in \pnof{\cC[i]}$ if and only if 
        $\pid p \in \pnof{\cC}$, for $i= 1,2$. By induction hypothesis there are $\mu_1$, $\mu_2$, $\cCp[1]$ and $\cCp[2]$ such that $\cC[i] \localsem[p]{\mu} \cCp[i]$
        for $i=1,2$. By \Cref{lem:epp_cond1} $\pdo[p]{\proj[p]{\cC[i]}} \localsem[p]{\mu} \pdo[p]{\proj[p]{\cCp[i]}}$ and by \Cref{lem:merge_and_local}.\ref{lem:merge_local_sem5} 
        either $\mu_1 = \mu_2$ or $\mu_i = \PRlrecv[$\albl$]{q}{p}$ and $\mu_j = \PRlrecv[$\overline{\albl}$]{q}{p}$, for $i \neq j$, $i.j \in \set{1,2}$.
        If $\mu_1 = \mu_2$ we infer $\cC \localsem[p]{\mu} \cCp$ by using a rule \cdelay[p], otherwise we use \delaypredictl[p].
        \item If $\cC = \cX[i]$, with $\genProcDef[i] = \cC[i]$, we conclude by
        guardedness and induction hypothesis.
        Indeed, the previous cases establish the claim for finite
        choreographies, namely choreographies containing no choreography
        variables. By guardedness, every participant in
        $\pnof{\cX[i]}$ also belongs to $\apnof{\cC[i]}$.
        Hence, $\cC[i]$ can perform a local transition at $\pid p$
        if and only if the finite choreography obtained from $\cC[i]$
        by replacing every choreography variable with $\cnil$
        can perform a local transition at $\pid p$.
        Observe that this finite choreography is still projectable.
        Indeed, this follows directly from the definition of projection (see \Cref{fig:projection}) and
        merge. Replacing choreography variables with $\cnil$ can only affect
        projections of the form $\proj[p]{\cX}$, but by definition of merge
        we have $\cX \sqcup \pQ$ defined only if $\pQ = \cX$, and similarly
        $\pnil \sqcup \pQ$ is defined only if $\pQ = \pnil$.
        Hence, replacing every choreography variable with $\cnil$ preserves the
        definedness of all merge operations.
        We can therefore apply the finite case proved above and conclude that a
        local transition at $\pid p$ exists.
    \end{itemize}
\end{proof}

\begin{theorem}\label{thm:cor_mbstrat_p_proj}
    Let $\cC$ be a choreography.
    If $\cC$ is projectable,
    then
    $\cC \mbb[p] \pdo[p]{\proj[p]{\cC}}$.
\end{theorem}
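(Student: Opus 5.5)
We will prove the statement through the stratification. By \Cref{lem:mbb_strat} it is enough to show that, for every $n \in \mathbb N$ and every projectable $\cC$, we have $\cC \mbbn[n]{p} \pdo[p]{\proj[p]{\cC}}$. We argue by induction on $n$ and treat all projectable choreographies simultaneously. The case $n=0$ holds trivially. For the step we fix a projectable $\cC$, we assume the statement at level $n$ for all projectable choreographies, and we check the three clauses of \Cref{def:mbb_strat} for $\cC$ against $\nN = \pdo[p]{\proj[p]{\cC}}$. Projectability is preserved by local transitions, so the induction hypothesis will apply to every derivative we meet. This preservation is a lemma we prove by induction on the derivation of $\cC \localsem[p]{\mu} \cCp$: in the delay cases the projection is unchanged or is a merge of defined projections, and in the \textsc{delay-predict} cases the target is a branch whose projection is defined.

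The relevance clause follows from \Cref{prop:projectable_local_prog} together with \Cref{lem:epp_cond1}. Relevance concerns only $\nN(\pid p)$, because $\nN$ is atomic at $\pid p$. If $\pid p \in \pnof{\cC}$, then $\cC$ has some local transition at $\pid p$, and that transition is mirrored by $\proj[p]{\cC}$. Hence $\proj[p]{\cC} \neq \pnil$. The global-to-local clause is handled the same way. If $\cC \localsem[p]{\mu} \cCp$, then \Cref{lem:epp_cond1} gives $\pdo[p]{\proj[p]{\cC}} \netsem{\mu} \pdo[p]{\proj[p]{\cCp}}$. Since $\cCp$ is again projectable, the induction hypothesis gives $\cCp \mbbn[n]{p} \pdo[p]{\proj[p]{\cCp}}$. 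Because the network LTS is deterministic, this derivative is the one required by the clause.

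The local-to-global clause is the converse of \Cref{lem:epp_cond1}, and we expect it to be the main obstacle. We prove an auxiliary lemma by structural induction on $\cC$, unfolding calls through guardedness as in the proof of \Cref{prop:projectable_local_prog}. The lemma reads: if $\pdo[p]{\proj[p]{\cC}} \netsem{\mu} \nNp$, then either (a) $\cC \localsem[p]{\mu} \cCp$ with $\nNp = \pdo[p]{\proj[p]{\cCp}}$, or (b) $\mu = \PRlrecv[$\albl$]{q}{p}$, $\cC$ has no $\mu$-transition at $\pid p$, and $\cC \localsem[p]{\overline\mu} \cCpp$ with $\pdo[p]{\proj[p]{\cC}} \netsem{\overline\mu} \pdo[p]{\proj[p]{\cCpp}}$. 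Once the lemma is available, the induction hypothesis at level $n$ closes both branches of the clause.

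Most cases of the lemma are routine. For instructions mentioning $\pid p$ and for conditionals at $\pid p$, the projection starts with the matching action, so case (a) holds immediately. For instructions not mentioning $\pid p$, we apply the induction hypothesis and then \textsc{$\pid p$ delay}. The delicate case is a conditional $\cifte qe{\cC[1]}{\cC[2]}$ with $\pid q \neq \pid p$, whose projection is $\proj[p]{\cC[1]} \sqcup \proj[p]{\cC[2]}$. Here \Cref{lem:merge_and_local}.\ref{lem:merge_local_sem5} yields two possibilities. In the first, both merged components move with $\mu$; the induction hypothesis on each branch then gives transitions $\cC[i] \localsem[p]{\mu} \cCp[i]$, and \textsc{delay-cond} applies. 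In the second, exactly one component, say the one from $\cC[i]$, moves with the label, and the other moves with the complementary label; we then apply \textsc{delay-predict-left} or \textsc{delay-predict-right}. We must also check that the targets agree, namely that the derivative obtained through the merge equals the projection of the choreography derivative. For \textsc{delay-cond} this requires merge to commute with the transition, that is, the resulting term is $\proj[p]{\cCp[1]} \sqcup \proj[p]{\cCp[2]}$. This is the point where we expect to spend the most care. The sub-case where an induction hypothesis returns alternative (b) inside a branch must be tracked carefully: the two label choices have to combine into a \textsc{delay-predict} step. When neither branch can perform $\mu$ directly, the choreography cannot perform $\mu$ and we land in case (b).
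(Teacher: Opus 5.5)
You diverge from the paper only in the local-to-global clause. The skeleton is the paper's: stratification via \Cref{lem:mbb_strat}, induction on $n$ over all projectable choreographies at once, and the global-to-local clause via \Cref{lem:epp_cond1} plus determinism. Your relevance argument is sound, whereas the paper's ``holds by induction hypothesis'' gives nothing at $n=0$.

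\textbf{How the routes differ.} The paper proves no converse of \Cref{lem:epp_cond1}. It takes \emph{some} step $\cC \localsem[p]{\mu'} \cCp$ from \Cref{prop:projectable_local_prog} and transfers it to the projection. It then uses the fact that two enabled transitions of one process term are either equal, a then/else pair, or receptions of complementary labels from the same sender. These cases are settled by determinism, by swapping the conditional axiom in the derivation, and by the prediction alternative of the clause. Your converse lemma replaces this case analysis.

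\textbf{The gap.} In the disjunctive form you state, the induction for your converse lemma does not close. Take the conditional at $\pid q \neq \pid p$ and write $\pP[i] = \proj[p]{\cC[i]}$. Consider the first case of \Cref{lem:merge_and_local}.\ref{lem:merge_local_sem5}: both $\pP[i]$ move with $\mu$ to $\pPp[i]$, and $\nNp = \pdo[p]{\pPp[1] \sqcup \pPp[2]}$. Suppose the hypothesis returns (a) for $\cC[1]$ and (b) for $\cC[2]$.
\begin{itemize}
\item Then $\cC$ performs $\mu$ only through \textsc{delay-predict-left}, whose target $\cCp[1]$ projects to $\pPp[1]$, not to $\pPp[1] \sqcup \pPp[2]$.
\item Alternative (b) is also unavailable, because $\cC$ does perform $\mu$.
\end{itemize}
So ``combining the label choices into a delay-predict step'' yields the wrong derivative, and nothing in the disjunctive hypothesis excludes this sub-case. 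The commutation of merge with a common step, which you flag as the delicate point, is just \Cref{lem:merge_and_local}.\ref{lem:merge_local_sem1} plus determinism.

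\textbf{The repair.} Alternative (b) never occurs for a choreography against its own projection, so prove the stronger statement: if $\pdo[p]{\proj[p]{\cC}} \netsem{\mu} \nNp$, then $\cC \localsem[p]{\mu} \cCp$ for some $\cCp$ with $\nNp = \pdo[p]{\proj[p]{\cCp}}$.
\begin{itemize}
\item In the first case of \Cref{lem:merge_and_local}.\ref{lem:merge_local_sem5}, this gives \textsc{delay-cond}.
\item In the second case, apply the hypothesis to $\cC[i]$ with $\mu$, and to $\cC[j]$ with $\overline{\mu}$ (which $\pP[j]$ enables). This licenses \textsc{delay-predict}, and the network target is $\pPp[i]$ by \Cref{lem:merge_and_local}.\ref{lem:merge_local_sem2}.
\end{itemize}
The local-to-global clause then always takes its first alternative.

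\textbf{What each route buys.} Yours costs an extra induction, with the same procedure-call unfolding issue as \Cref{prop:projectable_local_prog}. In return it gives an exact correspondence. It also makes explicit that the prediction alternative is never exercised against the projection itself, and is needed only once the network has run ahead of the choreography, as in \Cref{thm:main}. The paper's route reuses local progress, but leaves informal both the axiom swap and the complementary-label case in which $\cC$ can itself perform $\mu$.

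\textbf{Two smaller remarks.}
\begin{itemize}
\item A merge of defined projections need not be defined. Preservation of projectability, which you need only at $\pid p$, must therefore use the step-level information of \Cref{lem:merge_and_local}.\ref{lem:merge_local_sem1}.
\item Your global-to-local step, exactly like the paper's, trusts \Cref{lem:epp_cond1} for every derivation. That lemma fails when \textsc{delay-predict-left} fires although $\cC[2]$ can also perform $\mu$, e.g.\ when $\cC[2]$ is itself a conditional offering both labels by prediction. Then \textsc{delay-cond} applies too, and the projection moves to the merge of both $\mu$-continuations rather than to $\proj[p]{\cCp[1]}$. Adding to the prediction rules a premise that the other branch cannot perform $\mu$ repairs this.
\end{itemize}
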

\begin{proof}
    By \Cref{lem:mbb_strat}, if is sufficient to show that, for all $n \geq 0$, $ \cC \mbbn[n]{p} \pdo[p]{\proj[p]{\cC}}$.
    We proceed by induction on $n$.

    \begin{itemize}
        \item Base case: trivial since $\mbbn[0]{p}$ is the total relation. 

        \item Inductive step:
        Assume that, for every choreography $\cC[0]$ and every $\pid p \in \pnof{\cC[0]}$, if $\cC[0]$ is projectable at $\pid p$ then $\cC[0] \mbbn[n]{p} \pdo[p]{\proj[p]{\cC[0]}}$.
        We show that $\cC \mbbn[n+1]{p} \pdo[p]{\proj[p]{\cC}}$ by checking the conditions from \Cref{def:mbb_strat}.
        \begin{enumerate}
            \item Holds by induction hypothesis.
            \item If $\cC \localsem[p]{\mu} \cCp$, then we have that 
            $
            \pdo[p]{\proj[p]{\cC}}
            \netsem{\mu}
            \pdo[p]{\proj[p]{\cCp}}
            $ by  \Cref{lem:epp_cond1}.
            Hence,
            $
            \cCp \mbbn[n]{p}
            \pdo[p]{\proj[p]{\cCp}}
            $
            by the induction hypothesis.

            \item 
            If $\pdo[p]{\proj[p]{\cC}} \netsem{\mu} \nNp$
            with $\pnof{\mu} = \set{\pid p}$, then
            we must have $\pid p \in \pnof{\cC}$ 
            by definition of projection
            because ${\proj[p]{\cC}} \neq \pnil$
            Hence, by \Cref{prop:projectable_local_prog}, there exist $\mu'$ and $\cCp$ such that
            $\cC \localsem[p]{\mu'} \cCp$.
            Moreover, by \Cref{lem:epp_cond1} we also have $\pdo[p]{\proj[p]{\cC}} \netsem{\mu'} \pdo[p]{\proj[p]{\cCp}}$.
            According to the definition of semantics for networks,
            the transitions
            $\pdo[p]{\proj[p]{\cC}} \netsem{\mu} \nNp$
            and
            $\pdo[p]{\proj[p]{\cC}} \netsem{\mu'} \pdo[p]{\proj[p]{\cCp}}$
            can coexist only in one of the following situations:
            \begin{itemize}
                \item $\mu = \mu'$:
                In this case, the determinism of the network semantics implies $\nNp = \pdo[p]{\proj[p]{\cCp}}$. Since $\cCp$ is projectable at $\pid p$, we can apply the induction hypothesis and deduce that $\cCp \mbbn[n]{p} \pdo[p]{\proj[p]{\cCp}}$.
                
                \item $\mu,\mu' \in \set{\PRthen[p]{e},\PRelse[p]{e}}$:
                In this case, we consider the finite derivation tree witnessing $\cC \localsem[p]{\mu'} \cCp$.
                Since $\mu'$ is either $\PRthen[p]{e}$ or $\PRelse[p]{e}$, then the derivation must begin with a rule \cthen[p] or \celse[p], respectively.
                If $\cCpp$ be the choreography obtained by replacing this initial rule with the complementary one, then $\cC \localsem[p]{\mu} \cCpp$.
                Therefore, by \Cref{lem:epp_cond1}, we have 
                $
                \pdo[p]{\proj[p]{\cC}}
                \netsem{\mu}
                \pdo[p]{\proj[p]{\cCpp}}
                $
                and by the determinism of semantics of network, it follows that $\nNp = \pdo[p]{\proj[p]{\cCpp}}$.
                Since $\cCpp$ is projectable, we conclude by induction hypothesis that $\cCpp \mbbn[n]{p} \nNp$.

                \item $\mu,\mu' \in \set{\PRlrecv[$\albl$]{q}{p},\PRlrecv[$\overline{\albl}$]{q}{p}}$:
                In this case, we have exactly the label-receive exception allowed by the first clause \Cref{def:mbb_strat}.\ref{def:mbb_strat2}.
                Therefore, it suffuces to prove that also the second clause of \Cref{def:mbb_strat}.\ref{def:mbb_strat2} is satisfied to conclude.

                This follows by the fact that, if  $\mu = \PRlrecv[$\albl$]{q}{p}$ and $\mu' = \PRlrecv[$\overline{\albl}$]{q}{p}$, then  \Cref{lem:epp_cond1} applies, giving us
                $
                \pdo[p]{\proj[p]{\cC}}
                \netsem{\mu'}
                \pdo[p]{\proj[p]{\cCpp}}
                $.
                Since $\cCpp$ is projectable, we conclude by induction hypothesis that
                $\cCpp \mbbn[n]{p} \pdo[p]{\proj[p]{\cCpp}}$.

            \end{itemize}
        \end{enumerate}
    \end{itemize}
\end{proof}

We have shown that, for every participant, the choreography is $\mbb$-smaller than its projection onto that participant.
To conclude the proof of endpoint projection, it remains to lift this local result to the whole projected network.
For this purpose, we need to establish some additional properties of the branching preorder, which are stated in the following lemma, showing that, 
if we want to show that a choreography $\cC$ is $\mbb$-smaller than a network $\nN$, then it is sufficient to show that $\cC$ is $\mbb$-smaller than each atomic network $\pdo{\nN(\pid p)}$ in forming $\nN$.

\begin{lemma}\label{lem:mb_net}\label{lem:prefix}
    For all $n \geq 0$ and $\pid p \in \Pset$, $\pid p$,  for all $\cC \in \Chor$:
    \begin{enumerate}
        \item\label{lem:mb_net1} if $\cC \mbbn[n+1]{p} \nN$ then $\pid p \in \pnof{\nN}$ if and only if $\pid p \in \pnof{\cC}$;
        \item\label{lem:mb_net2} if $\cC \mbbn{p} \nN$ then $\cC \mbbn{p} \nN \ppar \nM$;
        \item\label{lem:mb_net3} $\cC \mbbn{p} \nN$ if and only if $\cC \mbbn{p} \pdo{\nN(\pid p)}$.
    \end{enumerate}
\end{lemma}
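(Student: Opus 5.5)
All three items will be proved by induction on $n$, directly from the clauses of \Cref{def:mbb_strat}, combined with \Cref{lem:net_sem}. That lemma says a network transition $\mu$ with $\pnof{\mu}=\set{\pid p}$ depends only on the component $\nN(\pid p)$, and it also fixes how $\nNp$ relates to $\nN$: only the $\pid p$-component changes. Every clause of $\mbbn[n+1]{p}$ mentions either the relevance condition on $\nN(\pid p)$ or local transitions of $\nN$ at $\pid p$. So the relation should only ever look at the $\pid p$-component of the network, and the lemma expresses exactly this.

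I would prove item~\ref{lem:mb_net3} first, by induction on $n$, since item~\ref{lem:mb_net2} then follows easily. The base case $n=0$ is trivial because $\mbbn[0]{p}$ is total. For the inductive step, suppose $\cC \mbbn[n+1]{p} \nN$.
\begin{itemize}
\item Relevance transfers because $\nN(\pid p)$ and $\pdo{\nN(\pid p)}(\pid p)$ are the same term.
\item For the global-to-local clause, a transition $\cC \localsem[p]{\mu}\cCp$ has $\pnof{\mu}=\set{\pid p}$. By \Cref{lem:net_sem}, $\nN\netsem{\mu}\nNp$ holds iff $\pdo{\nN(\pid p)}\netsem{\mu}\pdo{\nNp(\pid p)}$. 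From $\cCp\mbbn{p}\nNp$ the induction hypothesis gives $\cCp\mbbn{p}\pdo{\nNp(\pid p)}$.
\item The local-to-global clause, including the label-prediction alternative with $\overline\mu$, is handled the same way, transporting transitions through \Cref{lem:net_sem} in both directions.
\end{itemize}
The converse direction is symmetric: from $\pdo{\nN(\pid p)}$ we rebuild $\nN$ by parallel composition with the unchanged components. By \Cref{lem:net_sem}, the local step at $\pid p$ lifts to $\nN\netsem{\mu}\nNp$, where $\nNp(\pid p)$ is the new term and all other components are unchanged. The induction hypothesis then applies to the pair $(\cCp,\nNp)$.

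Item~\ref{lem:mb_net2} follows at once. Since $\nN\ppar\nM$ is defined only for disjoint supports, there are two cases. If $\pid p\in\supp{\nN}$, then $(\nN\ppar\nM)(\pid p)=\nN(\pid p)$, and item~\ref{lem:mb_net3} applied twice gives the claim. If $\pid p\notin\supp{\nN}$, then $\nN(\pid p)=\pnil$. Relevance forces $\pid p\notin\pnof{\cC}$ when $n\ge1$, and then $\cC$ has no local transitions at $\pid p$. The difficulty is that $\nM(\pid p)$ might offer transitions, which the local-to-global clause would then have to match. I would read the statement as intended for the case where $\pid p$ is not a new active participant of $\nM$, or, more simply, restrict it to $\pid p\in\supp{\nN}$ or $\nM(\pid p)=\pnil$. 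Under that restriction, item~\ref{lem:mb_net3} closes the argument.

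Item~\ref{lem:mb_net1} uses both clauses at level $n+1$. If $\pid p\in\pnof{\cC}$, relevance gives $\nN(\pid p)\neq\pnil$, and I would read ``$\pid p\in\pnof{\nN}$'' as ``$\pid p$ is active in $\nN$''. Conversely, if $\nN(\pid p)\neq\pnil$, \Cref{lem:local_progress_net} gives a transition $\nN\netsem{\mu}\nNp$ with $\pnof\mu=\set{\pid p}$. The local-to-global clause then yields a local transition of $\cC$ at $\pid p$, either $\mu$ itself or $\overline\mu$. A straightforward induction on the local semantics of \Cref{fig:new-semantics} shows that any such transition forces $\pid p\in\pnof{\cC}$.

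The main obstacle is bookkeeping rather than ideas. The most delicate points are the $\overline\mu$ alternative, where both the network step and the choreography step must be transported simultaneously, and the precise meaning of $\pnof{\nN}$ versus support in items~\ref{lem:mb_net1} and~\ref{lem:mb_net2}. I would settle these conventions first and then run all three inductions.
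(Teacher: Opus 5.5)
Your proposal is correct and follows essentially the paper's route, namely induction on $n$ with \Cref{lem:net_sem} transporting local transitions between $\nN$ and $\pdo{\nN(\pid p)}$. The only difference is the order: the paper proves item~2 directly and derives the right-to-left half of item~3 from it, while you prove both halves of item~3 and deduce item~2.

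Your caveats are also justified. The paper's proof of item~2 silently assumes $\pid p\notin\dom{\nM}$ (i.e.\ $\nM(\pid p)=\pnil$), and without it the claim fails: for example $\cnil\mbbn[1]{p}\nnil$ holds, but $\cnil\mbbn[1]{p}\pdo{\psend{q}{e}\pseq\pnil}$ does not. Likewise, item~1 only holds with $\pid p\in\pnof{\nN}$ read as $\nN(\pid p)\neq\pnil$, as you proposed.
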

\begin{proof} 
    We prove each item separately.
    \begin{enumerate}
        \item Follows directly by \Cref{def:mbb_strat} and \Cref{prop:local_progress}.
        \item It is sufficient to observe that since $\pid p \notin \dom{\nM}$, 
        for all $\mu$ such that $\pnof{\mu} = \set{\pid r}$ then 
        $\nN \netsem{\mu} \nNp$ if and only if $ \nN \ppar \nM \netsem{\mu} \nNp \ppar \nM$.
        \item Right to left implication follows form \Cref{lem:mb_net}.\ref{lem:mb_net2}.
        We show left to right implication by induction on $n$.
        \paragraph*{Base case $n=0$.}
        Immediate since $\mbbn[0]{p}$ is the total relation. 
        
        \paragraph*{Inductive step.}
        Assume the statement is true for $n$, we prove it for $n+1$. 
        In particular, we prove that for all $\pid p \in \Pset$,  $\cC \mbbn[n+1]{p} {\nN}$ implies $\cC \mbbn[n+1]{p} \pdo{\nN(\pid p)}$.
        To conclude, it suffices to check the properties of $\mbbn[n+1]{p}$:
        \begin{itemize}
            \item \mbrel is trivial;
            
            \item \mbGtoL: 
            if $\cC \localsem[p]{\mu} \cCp$ then $\nN \netsem{\mu} \nNp$ and $\pnof{\mu} = \set{\pid p}$, hence, by \Cref{lem:net_sem} $\pdo{\nN(\pid p)} \netsem{\mu} \pdo{\nNp(\pid p)}$.
            By \Cref{def:mbb_strat}, $\cCp \mbbn[n]{p} \nNp$ hence by induction hypothesis on $n$ we obtain $\cCp \mbbn[n]{p} \pdo{\nNp(\pid p)}$;
            
            \item \mbLtoG: 
            if $\pdo{\nN(\pid p)} \netsem{\mu} \pdo{\nNp(\pid p)}$ with $\pnof \mu = \set{\pid p}$,  then by \Cref{lem:net_sem} $\nN \netsem{\mu} \nNp$. 
            Then,  either 
            \begin{itemize}
                \item $\cC \localsem[p]{\mu} \cCp$  and $\cCp \mbbn{p} \nNp$, in which case we conclude by induction hypothesis on $n$, or
                \item $\cC \localsem[p]{\overline{\mu}} \cCpp$ and $\nN \netsem{\overline{\mu}} \nNpp$ and $\cCpp \mbbn{p} \nNpp$, with $\pnof{\overline{\mu}} = \set{\pid p}$.
                By \Cref{lem:net_sem} we deduce $\pdo{\nN(\pid p)} \netsem{\overline{\mu}} \pdo{\nNpp(\pid p)}$ and we conclude
                applying induction hypothesis to $\cCpp \mbbn{p} \nNpp$.
            \end{itemize}
        \end{itemize}
    \end{enumerate}
\end{proof}

We can now prove the EPP theorem as a direct consequence of the new results we have proved so far.
\EPP*
\begin{proof}\label{thm:epp}[New Proof]
    The proof follows directly by the following chain of implications:
    \begin{itemize}
        \item By \Cref{thm:cor_mbstrat_p_proj}, \Cref{lem:mb_net}.\ref{lem:mb_net3} and by definition of endpoint projection, if $\cC$ is projectable then $\cC \mbb[] \proj[]{\cC}$.
        \item By \Cref{cor:main}, if $\cC \mbb[] \proj[]{\cC}$ then $\cC \bisim \proj[]{\cC}$.
    \end{itemize}
\end{proof}

\begin{remark}
    A key advantage of our framework is that the proof of endpoint projection can be reduced to a collection of local behavioural checks.
    Indeed, by \Cref{lem:mb_net}.\ref{lem:mb_net3}, proving that a choreography has less branches than its projection amounts to proving the corresponding local property for each participant separately.
    This avoids the need to reason directly about the behaviour of the entire projected network and considerably simplifies the proof structure.
\end{remark}

\section{Extending our Results to the Stateful Setting}\label{sec:stores}

In this section, we show how to extend our results to the standard semantics of choreographies, which usually also take into account the state of the processes, by adding a memory store to the semantics to keep track of the values of the variables at each process.
We show how to extend the definitions of the previous sections to the setting with stores, and we show that all the results presented in the previous sections still hold in this extended setting.
In particular, we reconstruct the standard stateful semantics of choreographies by using a single rule factoring the evolution of the choreography and of the store using the symbolic semantics of choreographies and a new semantics for stores.

We recall the definition of store as given in \cite{montesi:book}, which is the standard way to model the state of processes in choreographies.
For this purpose, we assume a set of values $\Valset$ to be given and we assume that this set can be used to evaluate expressions in $\Eset$.
\begin{definition}
    A \defn{process store} is a (partial) function $\sstore: \Varset \to \Valset$ mapping variables in $\Varset$ to values in $\Valset$.
    We write $\evalin{\sstore}{e}{v}$ to denote that the expression $\ee$ evaluates to the value $v$ in the process store $\sstore$.
    \footnote{We assume that each expression $\ee$ can be evaluated in any process store $\sstore$, that is, for all $\sstore$ and $\ee$ there is a unique $v$ such that $\evalin{\sstore}{e}{v}$ and that can be effectively computed.}
    We denote by $\sstore\lupdate yv$ the process store obtained from $\sstore$ by updating the value of the variable $y$ to $v$, that is, $\sstore\lupdate yv$ is the process store such that $\sstore\lupdate yv (x)=v$ if $x=y$, and $\sstore\lupdate yv (x)=\sstore(x)$ otherwise.

    A \defn{(global) store} is a function $\sStore: \Pset \to (\Varset \to \Valset)$ mapping each process in $\Pset$ to a process store.
    We denote by $\sStore\update pyv$ the global store obtained from $\sStore$ by updating the value of the variable $y$ at process $p$ to $v$, that is, $\sStore\update pyv$ is the global store defined as $\sStore(q)\lupdate yv$ if $q=p$, and as $\sStore(p)$ everywhere else.

    A \defn{stateful choreography} is a pair $\pair\cC\sStore$ where $\cC$ is a choreography and $\sStore$ is a global store.
    A \defn{stateful network} is a pair $\pair\nN\sStore$ where $\nN$ is a network and $\sStore$ is a global store.
    A \defn{stateful process} is a pair $\pair\pP\sStore$ where $\pP$ is a process and $\sstore$ is a process store.
\end{definition}

The \defn{stateful semantics} for stateful choreographies and for stateful networks are  defined by the following rule schemes with the label $\mu$ in the same set $\Glab$ of global labels defined in \Cref{eq:Glab} used by the symbolic semantics of choreographies, and the \defn{state semantics} defined by the rules in \Cref{fig:state-semantics}.
\begin{equation}\label{eq:stateful_semantics}
    \inferrule*[right=stateful-chor]{
        \cC \tradsem{\mu} \cCp
    \qquad
        \sStore \tradsem{\mu} \sStorep
    }{
        \pair\cC\sStore \statefulsem{\mu} \pair{\cCp}{\sStorep}
    }
\hskip6em
     \inferrule*[right=stateful-net]{
        \nN \tradsem{\mu} \nNp
    \qquad
        \sStore \tradsem{\mu} \sStorep
    }{
        \pair\nN\sStore \statefulsem{\mu} \pair{\nNp}{\sStorep}
    }
\end{equation}
This rules allows for the choreography \resp{the network} and the store to evolve together, and it allows us to keep track of the values of the variables at each process as the choreography \resp{the network} evolves.

\begin{figure}
    \def\myskip{\hskip4em}
    \centering
    $\begin{array}{c}
        \inferrule*[right=assign] {  \evalin{\sStore(p)}{e}{v} }
        {\sStore  \tradsem{\TRassign[p]{x}{e}}  \sStore\update p{x}{v}}
    \myskip
        \inferrule*[right=com] { \evalin{\sStore(p)}{e}{v} } 
        { \sStore \tradsem{\TRcom pqx}  \sStore\update qxv }
    \myskip
        \inferrule*[right=no-update] {  } 
        {\sStore \tradsem{\mu} \sStore}
    \\[15pt]
    \end{array}$

    \caption{State semantics, where the label $\mu$ in the rule $\textsc{no-update}$ can be any label in $\Glab\setminus\set{\TRassign[p]{x}{e}, \TRcom pqx}$.}
    \label{fig:state-semantics}
\end{figure}

Observe that, by construction, this modular approach guarantees that bisimilarity under the symbolic semantics guarantees bisimilarity under the stateful semantics for free:
\begin{equation}\label{eq:stateful_bisim}
    \cC \sim \cC' \implies \pair\cC\sStore \sim \pair{\cC'}{\sStore}
    \qquad
    \nN \sim \nN' \implies \pair\nN\sStore \sim \pair{\nN'}{\sStore}
    \qquad
    \cC \sim \nN \implies \pair\cC\sStore \sim \pair{\nN}{\sStore}
\end{equation}
This allows us to easily extend all the results presented in the previous sections to the stateful setting, by simply replacing the symbolic semantics of choreographies and networks with the stateful semantics defined by the rules in \Cref{eq:stateful_semantics}, and by replacing the bisimilarity relation between choreographies and networks with a bisimilarity relation between stateful choreographies and networks.

\section{Related Work}
\label{sec:related}

\paragraph{The merge operator.}
The merge operator was first formulated in \cite{CHY12}.
This has given rise to the now standard problem of how to relate choreographies to their projections after the execution of a conditional: while the choreography commits to a single branch, its
projection may still expose behaviours originating from both branches.
The original solution of \cite{CHY12} addressed this challenge through a syntactic pruning relation. 
In \cite{montesi:book}, the theory of the merge has been reformulated as a join operator over an algebraic structure of process terms, yielding a semilattice whose associated preorder is, in fact, the branching preorder we use in this paper.
These works treated the branching preorder as a syntactic relation, whereas this work provides a purely semantic characterisation of the relation between a choreography and its projection.

\paragraph{Bisimilarity between choreographies and their projections.}
While proving bisimilarity between choreographies and their projections, previous works have often struggled with the treatment of recursion.
In particular, in works like \cite{CM13,PQM25}, the expansion of a choreography-level procedure may require multiple local transitions to expand the procedure call at each process, and these expansions can interleave with other transitions, giving rise to cumbersome syntactic bureacracy in the bisimulation proof.
To avoid this issue, some works sacrifice the fully concurrent nature of procedure invocation, either by introducing a centralised coordinator responsible for procedure entry \cite{DH12} or by requiring a global synchronisation among all participants whenever a procedure is entered, regardless of whether they participate in the procedure or not \cite{HG22}.
An alternative solution proposed in \cite{montesi:book} enriches choreographies with runtime terms that explicitly track the participants that have already entered a procedure and those that have not.
This approach achieves a strong bisimilarity result between a choreography and its projection, but at the cost of extending the choreography language with additional runtime constructs.

In contrast, our framework achieves a strong bisimilarity result between a choreography and its projection while retaining a fully distributed treatment of procedure calls and without extending the choreography language with additional runtime constructs.

\paragraph{The merge operator in multiparty session types.}
The merge operator has also become a standard component of the theory of multiparty session types, where choreographies are used as behavioural specifications rather than executable programs. 
Following its introduction in \cite{CHY12}, recent developments distinguish between the traditional merge operator, so-called \emph{full merge operator}, and the so-called \emph{plain merge operator}, which is defined as an equivalence relation on process terms \cite{HYK26}.
The semantic perspective developed in this paper suggests a new way of understanding the behavioural consequences of merge, potentially contributing to a broader understanding of branching compatibility beyond the setting considered here.

\section{Conclusion}\label{sec:conclusion}

In this paper, we have designed a new semantics for choreographies based on GSOS rules whose conditions are defined locally at each participant, and we have shown that this semantics is equivalent to the traditional semantics of choreographies.
Then, using this semantics we have defined a new relation between choreographies and networks, called the branchin preorder, which is a strengthening of the traditional relation between choreographies and their projections.
Using this new relation, we have shown that the endpoint projection of a choreography is more branching than the original choreography, and we have used this result to give a new proof of the EPP theorem for choreographies, which states that the endpoint projection of a choreography is bisimilar to the original choreography.

This work opens up several avenues for future research, including the study of the branching preorder in other settings, such as in the presence of asynchronous communication or in the context of multiparty session types, and the investigation of the connections between the branching preorder and other behavioural preorders, such as the may and must preorders \cite{den:hen:testing,den:hen:CCStau,hennessy1988algebraic,ber:hen:testing}.

\bibliography{biblio}

@article{P04a,
  author       = {Gordon D. Plotkin},
  title        = {A structural approach to operational semantics},
  journal      = {J. Log. Algebraic Methods Program.},
  volume       = {60-61},
  pages        = {17--139},
  year         = {2004},
  bibsource    = {dblp computer science bibliography, https://dblp.org}
}

@inproceedings{PGSN22,
  author       = {Johannes {\AA}man Pohjola and
                  Alejandro G{\'{o}}mez{-}Londo{\~{n}}o and
                  James Shaker and
                  Michael Norrish},
  editor       = {June Andronick and
                  Leonardo de Moura},
  title        = {Kalas: {A} Verified, End-To-End Compiler for a Choreographic Language},
  booktitle    = {13th International Conference on Interactive Theorem Proving, {ITP}
                  2022, Haifa, Israel, August 7-10, 2022},
  series       = {LIPIcs},
  volume       = {237},
  pages        = {27:1--27:18},
  publisher    = {Schloss Dagstuhl - Leibniz-Zentrum f{\"{u}}r Informatik},
  year         = {2022},
  url          = {https://doi.org/10.4230/LIPIcs.ITP.2022.27},
  doi          = {10.4230/LIPICS.ITP.2022.27},
  bibsource    = {dblp computer science bibliography, https://dblp.org}
}

@article{SHC25,
  author       = {Ashley Samuelson and
                  Andrew K. Hirsch and
                  Ethan Cecchetti},
  title        = {Choreographic Quick Changes: First-Class Location (Set) Polymorphism},
  journal      = {Proc. {ACM} Program. Lang.},
  volume       = {9},
  number       = {{OOPSLA2}},
  pages        = {1783--1808},
  year         = {2025},
  url          = {https://doi.org/10.1145/3763114},
  doi          = {10.1145/3763114},
  bibsource    = {dblp computer science bibliography, https://dblp.org}
}

@inproceedings{DH12,
  author       = {Romain Demangeon and
                  Kohei Honda},
  editor       = {Maciej Koutny and
                  Irek Ulidowski},
  title        = {Nested Protocols in Session Types},
  booktitle    = {{CONCUR} 2012 - Concurrency Theory - 23rd International Conference,
                  {CONCUR} 2012, Newcastle upon Tyne, UK, September 4-7, 2012. Proceedings},
  series       = {Lecture Notes in Computer Science},
  volume       = {7454},
  pages        = {272--286},
  publisher    = {Springer},
  year         = {2012},
  url          = {https://doi.org/10.1007/978-3-642-32940-1\_20},
  doi          = {10.1007/978-3-642-32940-1\_20},
  bibsource    = {dblp computer science bibliography, https://dblp.org}
}

@article{PQM25,
  author       = {Dan Plyukhin and
                  Xueying Qin and
                  Fabrizio Montesi},
  title        = {Relax! The Semilenient Core of Choreographic Programming (Functional
                  Pearl)},
  journal      = {Proc. {ACM} Program. Lang.},
  volume       = {9},
  number       = {{ICFP}},
  pages        = {947--973},
  year         = {2025},
  url          = {https://doi.org/10.1145/3747538},
  doi          = {10.1145/3747538},
  bibsource    = {dblp computer science bibliography, https://dblp.org}
}

@article{HYK26,
  author       = {Ping Hou and
                  Nobuko Yoshida and
                  Iona Kuhn},
  title        = {Less is more revisited: Association with global protocols and multiparty
                  sessions},
  journal      = {Theor. Comput. Sci.},
  volume       = {1076},
  pages        = {115873},
  year         = {2026},
  url          = {https://doi.org/10.1016/j.tcs.2026.115873},
  doi          = {10.1016/J.TCS.2026.115873},
  bibsource    = {dblp computer science bibliography, https://dblp.org}
}

@article{CHY12,
  author       = {Marco Carbone and
                  Kohei Honda and
                  Nobuko Yoshida},
  title        = {Structured Communication-Centered Programming for Web Services},
  journal      = {{ACM} Trans. Program. Lang. Syst.},
  volume       = {34},
  number       = {2},
  pages        = {8:1--8:78},
  year         = {2012},
  url          = {https://doi.org/10.1145/2220365.2220367},
  doi          = {10.1145/2220365.2220367},
  bibsource    = {dblp computer science bibliography, https://dblp.org}
}

@inproceedings{CM13,
  author       = {Marco Carbone and
                  Fabrizio Montesi},
  editor       = {Roberto Giacobazzi and
                  Radhia Cousot},
  title        = {Deadlock-freedom-by-design: multiparty asynchronous global programming},
  booktitle    = {The 40th Annual {ACM} {SIGPLAN-SIGACT} Symposium on Principles of
                  Programming Languages, {POPL} '13, Rome, Italy - January 23 - 25,
                  2013},
  pages        = {263--274},
  publisher    = {{ACM}},
  year         = {2013},
  url          = {https://doi.org/10.1145/2429069.2429101},
  doi          = {10.1145/2429069.2429101},
  bibsource    = {dblp computer science bibliography, https://dblp.org}
}

@PhdThesis{M13:phd,
  author = "Fabrizio Montesi",
  title = "Choreographic {P}rogramming",
  school = "IT University of Copenhagen",
  type = "Ph.{D}. Thesis",
  year = 2013,
  note = {\url{https://www.fabriziomontesi.com/files/choreographic-programming.pdf}}
}

@article{den:hen:testing,
	title = {Testing equivalences for processes},
	journal = {Theoretical Computer Science},
	volume = {34},
	number = {1},
	pages = {83-133},
	year = {1984},
	issn = {0304-3975},
	doi = {https://doi.org/10.1016/0304-3975(84)90113-0},
	url = {https://www.sciencedirect.com/science/article/pii/0304397584901130},
	author = {R. {De Nicola} and M.C.B. Hennessy}
}

@book{hennessy1988algebraic,
	author = {Hennessy, Matthew},
	title = {Algebraic theory of processes},
	year = {1988},
	isbn = {0262081717},
	publisher = {MIT Press},
	address = {Cambridge, MA, USA}
}

@InProceedings{den:hen:CCStau,
	author="De Nicola, Rocco
	and Hennessy, Matthew",
	editor="Ehrig, Hartmut
	and Kowalski, Robert
	and Levi, Giorgio
	and Montanari, Ugo",
	title="CCS without $\tau$'s",
	booktitle="TAPSOFT '87",
	year="1987",
	publisher="Springer Berlin Heidelberg",
	address="Berlin, Heidelberg",
	pages="138--152",
	isbn="978-3-540-47746-4"
}

@article{ber:hen:testing,
	TITLE = {{Mutually Testing Processes}},
	AUTHOR = {Giovanni Bernardi and Matthew Hennessy},
	URL = {https://lmcs.episciences.org/776},
	DOI = {10.2168/LMCS-11(2:1)2015},
	JOURNAL = {{Logical Methods in Computer Science}},
	VOLUME = {{Volume 11, Issue 2}},
	YEAR = {2015},
	MONTH = Apr,
}

@InProceedings{acc:man:mon:ESOP25,
author="Acclavio, Matteo
and Manara, Giulia
and Montesi, Fabrizio",
editor="Vafeiadis, Viktor",
title="Formulas as Processes, Deadlock-Freedom as Choreographies",
booktitle="Programming Languages and Systems",
year="2025",
publisher="Springer Nature Switzerland",
address="Cham",
pages="23--55",
isbn="978-3-031-91118-7"
}

@misc{acc:mon:per:OPDL,
	title={On Propositional Dynamic Logic and Concurrency},
	author={Matteo Acclavio and Fabrizio Montesi and Marco Peressotti},
	year={2024},
	eprint={2403.18508},
	archivePrefix={arXiv},
	primaryClass={cs.LO}
}

@book{montesi:book,
	author={Montesi, Fabrizio},
	title={Introduction to Choreographies},
	place={Cambridge},
	doi={10.1017/9781108981491},
	publisher={Cambridge University Press},
	year={2023}
}

@PhdThesis{montesi:phd,
	author = "Fabrizio Montesi",
	title = "Choreographic {P}rogramming",
	school = "IT University of Copenhagen",
	type = "Ph.{D}. Thesis",
	year = 2013,
	note = {\url{https://www.fabriziomontesi.com/files/choreographic-programming.pdf}}
}

@article{HG22,
	author       = {Andrew K. Hirsch and
	Deepak Garg},
	title        = {Pirouette: higher-order typed functional choreographies},
	journal      = {Proc. {ACM} Program. Lang.},
	volume       = {6},
	number       = {{POPL}},
	pages        = {1--27},
	year         = {2022},
	url          = {https://doi.org/10.1145/3498684},
	doi          = {10.1145/3498684},
	bibsource    = {dblp computer science bibliography, https://dblp.org}
}

@book{SW01,
	author       = {Davide Sangiorgi and
	David Walker},
	title        = {The Pi-Calculus - a theory of mobile processes},
	publisher    = {Cambridge University Press},
	year         = {2001},
	isbn         = {978-0-521-78177-0},
	bibsource    = {dblp computer science bibliography, https://dblp.org}
}

@book{S11,
	title={Introduction to bisimulation and coinduction},
	author={Sangiorgi, Davide},
	year={2011},
	publisher={Cambridge University Press}
}

@book{M80,
	author       = {Robin Milner},
	title        = {A Calculus of Communicating Systems},
	series       = {Lecture Notes in Computer Science},
	volume       = {92},
	publisher    = {Springer},
	year         = {1980},
	url          = {https://doi.org/10.1007/3-540-10235-3},
	doi          = {10.1007/3-540-10235-3},
	isbn         = {3-540-10235-3},
	bibsource    = {dblp computer science bibliography, https://dblp.org}
}

@article{CMP23,
	author = {Cruz{-}Filipe, Lu{\'{\i}}s and Montesi, Fabrizio and Peressotti, Marco},
	title = {A Formal Theory of Choreographic Programming},
	journal = {Journal of Automated Reasoning},
	volume = {67},
	number = {21},
	pages = {1--34},
	year = {2023},
	issn = {1573-0670},
	url = {https://doi.org/10.1007/s10817-023-09665-3},
	doi = {10.1007/s10817-023-09665-3}
}

\section{Acknowledgments}
Co-funded by the European Union (ERC, CHORDS, 101124225). 
Views and opinions expressed are however those of the authors only and do not necessarily reflect those of the European Union or the European Research Council. Neither the European Union nor the granting authority can be held responsible for them.

\clearpage

\appendix
\section{Detailed Proofs of Lemmas}\label{app:proofs}

\lemmaMBbasic*
\begin{proof}
    We prove each statement separately.    
    \begin{enumerate}
        \item  By induction on the derivation of $\nN \netsem{\mu} \nNp$, doing case analysis on the last rule of the derivation.
        \begin{itemize}
            \item If the last rule is a \parleft, then $\nN = \nN[1] \ppar \nN[2]$ and $\nNp = \nNp[1] \ppar \nN[2]$ with $\nN[1] \netsem{\mu} \nNp[1]$. 
            If $\pid r \in \dom{\nN[2]}$, then we deduce by \Cref{lem:mb_net}.\ref{lem:mb_net2} and  \Cref{lem:mb_net}.\ref{lem:mb_net3} that $\cC \mbbn{r} \nN[2]$. We conclude by induction hypothesis and \Cref{lem:mb_net}.\ref{lem:mb_net2}.
            \item If the last rule is a \parright, then the proof is symmetric to the previous case. 
            \item Otherwise, $\nN = \pdo[p]{\pP}$ for some process term $\pP$. We conclude observing that there are no transitions at $\pid r$ for both $\cC$ and $\nNp$.
        \end{itemize}
        %

    
        \item We show both implications separately, by letting $\cC=\cifte pe{\cC[1]}{\cC[2]} $ and assuming $\cC \mbbn[n+1]{r} \nN$.
        \begin{itemize}
            \item[($\Rightarrow$)]
            We proceed by induction on $n$:
            \begin{itemize}
                \item Case base:
                By \Cref{def:mbb_strat}.\ref{def:mbb_strat0} if $\pid p \in \pnof{\cC}$, then $\nN(\pid p) \neq \pnil$.
                We conclude observing that by \Cref{cor:well_guardedness} $\pid r \in \pnof{\cC[1]}$ if and only if $\pid r \in \pnof{{\cC[2]}}$.
                Therefore, $\pid r \in \pnof{\cC[1]}$ if and only if $\pid r \in \pnof{\cC}$.

                \item Inductive step:
                We only show that $\cC[1] \mbbn[n+1]{r} \nN$, since $\cC[2] \mbbn[n+1]{r} \nN$ the prove is similar.
                We check the conditions of the definition of $\mbbn[n+1]{r}$:
                \begin{itemize}
                    \item[$\bullet$] \mbrel:
                    Since $\mbbn{r}$ is a decreasing relation (see \Cref{lem:decreasing}) $\cC\mbbn[n]{r} \nN$. We conclude by induction hypothesis and \Cref{def:mbb_strat}.\ref{def:mbb_strat0} applied to $\cC[1] \mbbn[n]{r} \nN$.

                    \item[$\bullet$] \mbGtoL:
                    If $\cC[1] \localsem[r]{\mu} \cCp[1]$, then by \Cref{lem:mb_wf_merge} we have the either of the following cases:
                    \begin{itemize}
                        \item $\cC[2] \localsem[r]{\mu} \cCp[2]$: by \delaycond[r], $\cC \localsem[r]{\mu} \cifte pe{\cCp[1]}{\cCp[2]}$. Hence, there exists $\nNp$ such that $\nN \netsem{\mu} \nNp$ and $\cifte pe{\cCp[1]}{\cCp[2]} \mbbn[n]{r} \nNp$. We conclude by induction hypothesis.

                        \item $\mu = \PRlrecv[$\albl$]{q}{r}$ and $\cC[j] \localsem[r]{\PRlrecv[$\overline{\albl}$]{q}{r}} \cCp[j]$:
                        by \delaypredictl[r] $\cC \localsem[r]{\mu} \cCp[1]$. We conclude directly from $\cC \mbbn[n+1]{r} \nN$ and by \Cref{def:mbb_strat}.
                    \end{itemize}
                
                    \item[$\bullet$] \mbLtoG:
                    If $\nN \netsem{\mu} \nNp$ with $\pnof{\mu} = \set{\pid r}$, then, since $\cC\mbbn[n+1]{r} \nN$, 
                    by \Cref{def:mbb_strat}.\ref{def:mbb_strat2} there are two possibilities for $\cC$:
                    \begin{enumerate}
                        \item $\cC$ can perform a transition at the same label $\mu$.
                        We proceed by induction on the derivation of this transition, doing case analysis on the last rule of the derivation.
                        \begin{itemize}
                            \item If the last rule is a \delaycond[r], then $\cCp = \cifte pe{\cCp[1]}{\cCp[2]} \mbbn[n]{r} \nN$ and we conclude by induction hypothesis. 
                            \item If the last rule is a \delaypredictl[r] we coclude directly by definition of $\mbbn[n+1]{r}$.
                            \item If the last rule is a \delaypredictl[l] we deduce by definition of local semantics (see \Cref{fig:new-semantics}) that $\mu = \PRlrecv[${\albl}$]{q}{r}$ and that  $\cC[1] \localsem[r]{\PRlrecv[$\overline{\albl}$]{q}{r}} \cCp[1]$. The we can derive $\cC \localsem[r]{\PRlrecv[$\overline{\albl}$]{q}{r}} \cCp[1]$.
                            We conclude by \Cref{def:mbb_strat}.\ref{def:mbb_strat1} applied to $\cC \mbbn[n+1]{r} \nN$.
                        \end{itemize}

                        \item $\cC$ cannot perform a transition at $\mu$, $\mu= \PRlrecv[${\albl}$]{q}{r}$, and $\cC \localsem[r]{\PRlrecv[$\overline{\albl}$]{q}{r}} \cCpp$.
                        By \Cref{lem:mb_wf_merge} and definition of local semantics we deduce that a transition at $\mu= \PRlrecv[${\albl}$]{q}{r}$ must be undefined for both $\cC[1]$ and $\cC[2]$
                        and a trasition at $\mu= \PRlrecv[$\overline{\albl}$]{q}{r}$ must be defined for both $\cC[1]$ and $\cC[2]$. 
                        Hence $\cCpp = \cifte pe{\cCp[1]}{\cCp[2]}$ with $\cC[i] \localsem[r]{\PRlrecv[$\overline{\albl}$]{q}{r}} \cCp[i]$ for $i \in\set{ 1,2}$. 
                        Moreover, by \Cref{def:mbb_strat}.\ref{def:mbb_strat2} there is $\nNpp$ such that $\nN \netsem{\PRlrecv[$\overline{\albl}$]{q}{r}} \nNpp$ and $\cCpp \mbbn[n]{r} \nNpp$.
                        We conclude by induction hypothesis. 
                    \end{enumerate}
                \end{itemize}
            \end{itemize}

            \item[($\Leftarrow$)]
            By induction on $n$.
            For $n=0$ the result is trivial. For $n>0$,
            we prove that if 
            $\cC[i] \mbbn[n+1]{r} \nN$ for both $i\in\set{1,2}$, then 
            $\cC \mbbn[n+1]{r} \nN$.
            We check that the conditions of \Cref{def:mbb_strat} holds for $\cC \mbbn[n+1]{r} \nN$:

            \begin{itemize}
                \item \mbrel: 
                follows directly by $\cC[i] \mbbn[n+1]{r} \nN$ for $i=1,2$. In particular, we observe that $\pid p \in \pnof{\cifte pe{\cC[1]}{\cC[2]}}$ if and only if
                $\pid p \in \pnof{\cC[1]}$ or $\pid p \in \pnof{\cC[2]}$.

                \item \mbGtoL:
                suppose
                $\cC \localsem[r]{\mu} \cCp$.
                If the last rule is a \delaycond[r] we conclude by induction hypothesis. 
                If the last rule is a \delaypredictl[r] (resp. \delaypredictr[r]) we conclude by hypothesis on $\cC[1]$ (resp. $\cC[2]$).

                \item \mbLtoG: 
                suppose
                $\nN \netsem{\mu} \nNp$
                with
                $\pnof{\mu} = \set{\pid r}$.

                From the fact that
                $\cC[i] \mbbn[n+1]{r} \nN$
                for $i=\set{1,2}$,
                we deduce that
                $\cC[i] \localsem[r]{\mu_i} \cCp[i]$
                with
                $\mu_i \in \set{\mu,\PRlrecv[$\overline{\albl}$]{q}{r}}$.
                In particular $\mu_i = \PRlrecv[$\overline{\albl}$]{q}{r}$ {whenever} $\mu= \PRlrecv[${\albl}$]{q}{r}$ and there is no transition $\mu$ for $\cC[i]$ at $\pid r$.

                If $\mu_1 = \mu_2$, then
                $\cifte pe{\cC[1]}{\cC[2]} \localsem[r]{\mu} \cifte pe{\cCp[1]}{\cCp[2]}$.
                and we conclude by the induction hypothesis.

                Otherwise
                $\cC \localsem[r]{\mu} \cCp[i]$
                for some $i \in \set{1,2}$,
                and we conclude using the hypothesis that
                $\cC[i] \mbbn[n+1]{r} \nN$.
            \end{itemize}

        \end{itemize}

        \item It is sufficient to observe that when $\pid r \notin \cI$ then $\cI \seq \cC$ can make a transition at $\pid r$ if and only if $\cC$ can.
    \end{enumerate}
\end{proof}

\lemmaUgly*
\begin{proof}
    Items~\ref{lem:mb_net4}, \ref{lem:mb_net5}, and~\ref{lem:mb_net6} follow directly from 
    \Cref{def:mbb_strat} and the definitions of network semantics and local symbolic semantics.
    For the remaining:
    
    \begin{enumerate}
    \item We show both implications separately. 
    \begin{itemize}
        \item [($\Rightarrow$)]
        We observe that since $\cI$ and $\palpha$ are compatible the only possible transition $\mu$ at $\pid r$ is the one such that 
        $\cI \seq \cC \localsem[r]{\mu} \cC$ and
        $\pdo[r]{\palpha \pseq \pR} \netsem{\mu} \pdo[r]{\pR}$. By definition of $\mbbn[n+1]{r}$, this implies $\cC \mbbn[n]{r} \pdo[r]{ \pR}$.

        \item [($\Leftarrow$)]
        We show that $\cI \seq \cC \mbbn[n+1]{r} \pdo[r]{\palpha \pseq \pR}$ by checking the conditions of the definition of 
        $\mbbn[n+1]{r}$.  

        \begin{itemize}
            \item \mbrel: trivial;
            \item \mbGtoL: For what observed above the only possible transition for $\cI \seq \cC$  at $\pid r$ is $\mu$ such that $\cI \seq \cC \localsem[r]{\mu} \cC$.
            We conclude observing that $\pdo[r]{\palpha \pseq \pR} \netsem{\mu} \pdo[r]{\pR}$ and that $\cC \mbbn[n]{r} \pdo[r]{\pR}$ by hypothesis.

            \item \mbLtoG:  We observe that only possible transition $\mu$, with $\pid r \in \pnof{\mu}$ is the one such that $\cI \seq \cC \localsem[r]{\mu} \cC$. We conclude reasoning as above.
        \end{itemize}
    \end{itemize}

    \item We show both implications separately. 
        \begin{itemize}
            \item [($\Rightarrow$)]
            We observe that the only possible transitions at the process $\pid r$ for the choreography $\ccsel pr\albl \seq \cC$ and the network $\pdo[r]{\plabel[\pR] r\albl\Lset}$ have label $\mu = \PRlrecv[\albl]{p}{r}$ with $\albl \in \Lset$. 
            By definition of $\mbbn[n+1]{r}$, this implies that $\cC \mbbn[n]{r} \pdo[r]{\pR[\albl]}$ for $\albl \in \Lset$. 

            \item [($\Leftarrow$)]
            We show that $\ccsel pr\albl \seq \cC \mbbn[n+1]{r} \pdo[r]{\plabel[\pR] r\albl\Lset}$ by checking the conditions of the definition of $\mbbn[n+1]{r}$.  

            \begin{itemize}
                \item \mbrel: trivial;
                \item \mbGtoL: For what observed above the only possible transitions at $\pid r$ for $\ccsel pr\albl \seq \cC $ are $\mu = \PRlrecv[$\albl$]{p}{r}$ for $\albl \in \Lset$. We conclude observing that $\pdo[r]{\plabel[\pR] r\albl\Lset} \netsem{\mu} \pdo[r]{\pR[\albl]}$  and that $\cC \mbbn[n]{r} \pdo[r]{\pR[\albl]}$ by hypothesis.

                \item \mbLtoG: We observe that only possible transitions $\mu$ for $\pdo[r]{\plabel[\pR] r\albl\Lset}$, are $\mu = \PRlrecv[$\albl$]{p}{r}$ for $\albl \in \Lset$. 
                We conclude reasoning as above.
            \end{itemize}
        \end{itemize}

    \item We show both implications separately. 
    \begin{itemize}
        \item[($\Rightarrow$)]
        Derives directly by definition of local semantics (see \Cref{fig:local-semantics}), network semantics and of $\mbbn[n+1]{p}$.

        \item[($\Leftarrow$)]
        We check the conditions of the definition of $\mbbn[n+1]{p}$. 
        \begin{itemize}
            \item \mbrel: trivial;
            \item \mbGtoL: If $\cifte pe{\cC[1]}{\cC[2]} \localsem[p]{\mu} \cCp$ then either $\mu = {\TRthen[p]{e}} $ and $\cCp = \cC[1]$ or $\mu ={\TRelse[p]{e}}$ and $\cCp = \cC[2]$. We conclude observing that $\pdo[p]{\pifte e {\pP[1]}{\pP[2]}} \netsem{\TRthen[p]{e}} \pdo[p]{\pP[1]}$ (resp. $\pdo[p]{\pifte e{\pP[1]}{\pP[2]}} \netsem{\TRelse[p]{e}} \pdo[p]{\pP[2]}$) and by hypothesis $\cC[1] \mbbn{p} \pdo[p]{\pP[1]}$ (resp. $\cC[2] \mbbn{r} \pdo[p]{P_2}$).
            \item \mbLtoG: If  $\pdo[p]{\pifte e {\pP[1]}{\pP[2]}} \netsem{\mu} \nNp$ then  either $\mu = {\TRthen[p]{e}} $ and $\nNp = \pdo[p]{\pP[1]}$ or $\mu ={\TRelse[p]{e}}$ and $\nNp = \pdo[p]{\pP[2]}$. We conclude by reasoning as above.
        \end{itemize}
    \end{itemize}
\end{enumerate}
\end{proof}

\thmStratifiedSimu*
\begin{proof}
    For $n=0$, the result is trivial.
    For $n > 0$, we show that $\mbbn[n+1]{}$ meets the conditions of \Cref{def:mbb_strat}.\ref{mbb_strat:Step}. 
    Note that we prove the \mbLtoG condition before the \mbGtoL condition, since we prioritize the details of \mbLtoG, which are more intricate, to avoid having to repeat the same case analysis twice.
    \begin{itemize}
        \item \mbrel: trivial by definition of $\mbbn[0]{}$;
        \item \mbLtoG: by case analysis on $\mu$. 
        
        We first show that if $\nN \netsem{\mu} \nNp$ then $\cC \tradsem{\mu} \cCp$ for some $\cCp$ by case analysis on $\mu$, reasoning on the aggregate semantics of choreographies and on the network semantics, and by applying \Cref{thm:agg-trad} to conclude
        \begin{itemize}
            \item 
            if $\mu \in \set{\TRassign[p] xe, \TRthen e, \TRelse e}$, 
            then $\cC \localsem[p]{\mu} \cCp$  by the \mbLtoG condition and by definition of aggregate semantics of choreographies.

            \item 
            if $\mu = \TRcom pqx$, 
            then, by the definition of the network semantics, 
            we must have $\nN=\nN[1]\ppar \nN_2$ 
            for some networks $\nN[1]$ and $\nN[2]$ such that 
            $\nN[1] \netsem{\PRsend pq} \nNp[1]$ and $\nN[2] \netsem{\PRrecv pq} \nNp[2]$.
            Since 
            $\cC \mbbn[n+1]{p} \nN$ and $\cC \mbbn[n+1]{q} \nN$, we must also have $\cC \localsem[p]{\PRsend pq} \cCp$ and $\cC \localsem[q]{\PRrecv pq} \cCpp$. 
            We conclude by showing that $\cCp = \cCpp$, reasoning by induction on the derivation of the transition $\cC \localsem[p]{\PRsend pq} \cCp$ in the local semantics of choreographies, doing case analysis of the label of the bottom-most rule used in the derivation:
            \begin{itemize}
                \item case \csend[p] is immediate, since it implies $\cC = \cgencom \seq \cCp$;
                \item cases \delaypredictl[p] and \delaypredictr[p] are excluded by definition of local semantics (see \Cref{fig:local-semantics});
                \item case \delaycond[p] requires that $\cC = \cifte re{\cC[1]}{\cC[2]}$ and 
                $\cCp = \cifte re{\cCp[1]}{\cCp[2]}$. In this case, by \Cref{lem:mb_wf_merge}, we must have $\cC \localsem[q]{\PRrecv pq} \cifte re{\cCpp[1]}{\cCpp[2]}$. We conclude by induction hypothesis;
                \item cases \cdelay[p] and \ccall[p] follow immediately by induction hypothesis as well.
            \end{itemize}
            
            \item 
            if $\mu = \TRsel pq$, we conclude similarly to the previous case, considering the local semantics of choreographies with labels $\PRlsend pq$ and $\PRlrecv pq$ instead of $\PRsend pq$ and $\PRrecv pq$.
        \end{itemize}
        
        Then, in order to conclude, it suffices to show that $\cCp \mbbn[n]{} \nNp$ for all $\pid r \in \Pset$.
        Notice that if $\pid r \notin \pnof{\cC}$, then  the statement holds trivially; therefore we assume $\pid r \in \pnof{\cC}$ in the following.
        We distinguish two cases:
        \begin{itemize}
            \item if $\pid r \in \pnof{\mu}$, then the required $\mbbn[n]{r}$ relation follows directly from \Cref{def:more_branch_cn}, and we conclude by \Cref{lem:mb}.\ref{lem:mb1}.
            Specifically, if $\mu \in \set{ \TRcom pqx, \TRsel pq}$, then $\cCp \mbbn[n]{p} \nNp[1] \ppar \nN[2]$ and 
            $\cCp \mbbn[n]{q} \nN[1] \ppar \nNp[2]$, and we can apply \Cref{lem:mb}.\ref{lem:mb1} twice;

            \item otherwise, if $\pid r \notin \pnof{\mu}$, then $\cCp \mbbn[n]{r} \nN$ and we conclude by applying \Cref{lem:mb}.\ref{lem:mb1} multiple times, proceeding by induction on the derivation of $\cC \tradsem{\mu} \cCp$ in the traditional semantics, doing case analysis of the label of the bottom-most rule used in the derivation:
                \begin{itemize}
                    \item cases \tassign, \tcom, and \tsel : we must have $\cC = \cI \seq\cCp$ for an adequate instruction $\cI$. We conclude by induction hypothesis using \Cref{lem:mb}.\ref{lem:mb3} and \Cref{lem:decreasing}.

                    \item case \tdelay : we must have $\cC = \cI \seq \cC[1]$ with $\pnof{\mu} \cap \pnof{\cI}=\emptyset$.
                    We distinguish two sub-cases:
                    \begin{itemize}
                        \item if $\pid r \notin \pnof{\cI}$, then by \Cref{lem:mb}.\ref{lem:mb3} we have $\cC[1] \mbbn[n+1]{r} \nN$; 
                        We conclude by applying induction hypothesis and \Cref{lem:mb}.\ref{lem:mb3};

                        \item if $\pid r \in \pnof{\cI}$, then by \Cref{lem:mb_net}.\ref{lem:mb_net3} 
                        and \Cref{lem:ugly}.\ref{lem:mb_net4} we have that 
                        either $\nN(\pid r) = \palpha \pseq \pR$ with $\alpha$ compatible with $\cI$ and $\pid r \notin \pnof{\alpha}$,
                        or $\nN(\pid r)= \plabel[\pR] r\albl\Lset$.
                        In the first case, we conclude by applying the left-to-right implication in \Cref{lem:ugly}.\ref{lem:mb4}, induction hypothesis, and the right-to-left implication in \Cref{lem:ugly}.\ref{lem:mb4} right-to-left.
                        In the second case, we conclude by applying the left-to-right implication in \Cref{lem:ugly}.\ref{lem:mb4bis}, induction hypothesis, then the right-to-left implication in \Cref{lem:ugly}.\ref{lem:mb4bis} and \Cref{lem:mb_net}.\ref{lem:mb_net3}.
                    \end{itemize}

                \item case \tdelaycond : we must have $\cC = \cifte qe{\cC[1]}{\cC[2]}$, and we distinguish two sub-cases:
                \begin{itemize}
                    
                    \item 
                    if $\pid r \neq \pid q$, then we have that
                    $\cC[i] \mbbn[n+1]{r} \nN$ for both $i\in\set{1,2}$ by \Cref{lem:mb}.\ref{lem:mb2}.
                    In this case we conclude by applying induction hypothesis and \Cref{lem:mb}.\ref{lem:mb2};
                    
                    \item 
                    if $\pid r = \pid q$, then we have
                    ${\nN(\pid q)}= {\pifte e {\pQ_1}{\pQ_2}}$ by \Cref{lem:mb_net}.\ref{lem:mb_net3}  and \Cref{lem:ugly}.\ref{lem:prefix2}.
                    By \Cref{lem:ugly}.\ref{lem:mb5} we have that $\cC[i] \mbbn[n]{q} \pdo[q]{\pQ_i}$ for $i\in\set{1,2}$,
                    and by induction hypothesis: $\cCp[i] \mbbn[n-1]{q} \pdo[q]{\pQ_i}$ for $i\in\set{1,2}$.
                    By \Cref{lem:ugly}.\ref{lem:mb5} we obtain $\cifte qe{\cCp[1]}{\cCp[2]} \mbbn[n]{q} \pdo[q]{\nN(\pid q)}$.
                    We conclude by \Cref{lem:mb_net}.\ref{lem:mb_net3}.
                \end{itemize}

                \item case \tthen only holds if $\mu = \TRthen e$: we must have $\cC = \cifte pe{\cCp}{\cC[2]}$. Then we can apply \Cref{lem:mb}.\ref{lem:mb2}, $\cC[1] \mbbn[n+1]{r} \nN$ and conclude by \Cref{lem:decreasing};

                \item case \tcall : we must have $\cX = \cC[1] \in \procC$ and, by \Cref{lem:ugly}.\ref{lem:mb_net6}, $\cC[1] \mbbn[n+1]{} \nN$. We conclude by induction hypothesis on $\cC[1]$.
            \end{itemize}

        \end{itemize}

        \item \mbGtoL: as in the previous case, we reason by case analysis on $\mu$:
        \begin{itemize}
            \item If $\mu \in \set{\TRassign[p] xe, \TRthen e, \TRelse e}$, then if follows from \Cref{def:more_branch_cn}.\ref{def:more_branch_cn1} that $\nN \netsem{\mu} \nNp$.

            \item  If $\mu = \TRcom pqx$, then we consider an atomic network $\nN[1]$ such that $\nN[1](\pid p) = \nN(\pid p)$ (and $\nN[1](\pid r)$ is undefined for $\pid r \neq \pid p$), and a network $\nN[2]$ such that $\nN[1]\ppar \nN[2]=\nN$.
            Since $\cC \mbbn[n+1]{p} \nN$ and $\cC \mbbn[n+1]{q} \nN$, it follows from \Cref{lem:mb_net}.\ref{lem:mb_net2} and \Cref{lem:mb_net}.\ref{lem:mb_net3} that $\nN[1] \netsem{\PRsend pq} \nNp[1]$ and $\nN[2] \netsem{\PRrecv pq} \nNp[2]$.
            We conclude by definition of network semantics.

            \item  If $\mu =\TRsel pq$, then we conclude similarly to the previous case, considering the local semantics of choreographies with labels $\PRlsend pq$ and $\PRlrecv pq$ instead of $\PRsend pq$ and $\PRrecv pq$.

        \end{itemize}
        Once $\nN \netsem{\mu} \nNp$ is established, $\cCp \mbbn[n]{} \nNp$ follows 
        by the same argument as for the \mbLtoG condition.

    \end{itemize}
\end{proof}

\end{document}